\newtheorem{theorem}{Theorem}[section]
\newtheorem{lemma}[theorem]{Lemma}
\newtheorem{corollary}[theorem]{Corollary}
\newtheorem{claim}[theorem]{Claim}
\newtheorem{observation}[theorem]{Observation}
\theoremstyle{definition}
\newtheorem{definition}[theorem]{Definition}
\newtheorem{question}[theorem]{Question}
\newtheorem{invariant}[theorem]{Invariant}
\newtheorem{hypothesis}[theorem]{Hypothesis}
\newcommand{\eps}{\epsilon}
\newcommand{\ceil}[1]{\lceil #1 \rceil}
\newcommand{\floor}[1]{\lfloor #1 \rfloor}
\newcommand{\brac}[1]{\left(#1\right)}
\newcommand{\opt}{\mathsf{opt}}
\newcommand{\lmst}{\mathsf{MST}}
\newcommand{\mst}{\mathsf{mst}}
\newcommand{\wts}{\mathbf{w}}
\newcommand{\dist}{\mathsf{dist}}
\newcommand{\real}{\mathbb{R}}
\definecolor{BrickRed}{rgb}{0.8, 0.25, 0.33}
\def\EMPH#1{\emph{\textcolor{BrickRed}{#1}}}
\newcommand{\poly}{\mathrm{poly}}
\newcommand{\nw}{\mathrm{new}}
\newcommand{\old}{\mathrm{old}}
\newcommand{\DP}{\mathsf{DP}}
\newcommand{\DPtree}{\mathcal{T}_{\mathrm{dp}}}
\newcommand{\interval}{\mathcal{I}}
\newcommand{\paths}{\mathcal{P}}
\newcommand{\gr}{\mathrm{gr}}
\newcommand{\spn}{\mathsf{span}}
\begin{document}

	\title{Approximate Light Spanners in Planar Graphs}
    \author{}
	\author{
        Hung Le\thanks{University of Massachusetts Amherst, \href{}{hungle@cs.umass.edu}}\and
		Shay Solomon\thanks{Tel Aviv University, \href{}{solo.shay@gmail.com}}\and	
		Cuong Than\thanks{University of Massachusetts Amherst, \href{}{cthan@umass.edu}}\and
		Csaba D. T\'oth\thanks{California State University Northridge and Tufts University, \href{}{csaba.toth@csun.edu}}\and
		Tianyi Zhang\thanks{Nanjing University, \href{}{tianyiz25@nju.edu.cn}}
	}
	
	\date{}

	\maketitle

\thispagestyle{empty}
\begin{abstract}
		In their seminal paper, Alth\"{o}fer   et al.\ (DCG~1993) introduced the \emph{greedy spanner} and showed that, for any weighted planar graph $G$, the weight of the greedy $(1+\eps)$-spanner is at most $(1+\frac{2}{\epsilon}) \cdot \wts(\lmst(G))$, where $\wts(\lmst(G))$ is the weight of a minimum spanning tree $\lmst(G)$ of $G$. This bound is optimal in an \EMPH{existential sense}: there exist planar graphs $G$ for which any $(1+\eps)$-spanner has a weight of at least $(1+\frac{2}{\epsilon}) \cdot \wts(\lmst(G))$. 
 
         However, as an \EMPH{approximation algorithm}, even for a \EMPH{bicriteria} approximation, the weight approximation factor of the greedy spanner is essentially as large as the existential bound:  
          There exist planar graphs $G$ for which the greedy $(1+x \eps)$-spanner (for any $1\leq x = O(\eps^{-1/2})$) has a weight of
          $\Omega\brac{\frac{1}{\epsilon \cdot x^2}}\cdot \wts(G_{\opt, \eps})$,
          where $G_{\opt, \eps}$ is a $(1+\epsilon)$-spanner of $G$ of minimum weight.

         Despite the flurry of works over the past three decades on approximation algorithms for spanners as well as on light(-weight) spanners, there is still no (possibly bicriteria) approximation algorithm for light spanners in weighted planar graphs that outperforms the existential bound. As our main contribution, we present a polynomial time algorithm for constructing, in any weighted planar graph $G$, a $\brac{1+\epsilon\cdot 2^{O\brac{\log^* 1/\epsilon}}}$-spanner for $G$ of total weight $O(1)\cdot \wts(G_{\opt, \epsilon})$.
        
         To achieve this result, we develop a new technique, which we refer to as \EMPH{iterative planar pruning}. It iteratively modifies a spanner; each iteration replaces a heavy set of edges by a light path, to substantially decrease the total weight of the spanner while only slightly increasing its stretch. We leverage planarity to prove a  \EMPH{laminar} structural property of the edge set to be removed, which enables us to optimize the path to be inserted via dynamic programming. 
         Our technique applies dynamic programming \EMPH{directly to the input planar graph}, which significantly deviates from previous techniques used for network design problems in planar graphs, and might be of independent interest.
	\end{abstract}

\clearpage

\thispagestyle{empty}
\setcounter{tocdepth}{3}
\tableofcontents
\clearpage

\clearpage
\setcounter{page}{1}

\section{Introduction}

A \emph{$t$-spanner} of an edge-weighted undirected graph $G = (V,E,\wts)$ is a subgraph $H$ of $G$ such that $\dist_H(u,v)\leq t\cdot \dist_G(u,v)$ for every  pair $u,v$ of points in $V$, where $\dist_G$ denotes the shortest path distance in $G$ w.r.t the weight function $\wts$. The study of graph spanners was pioneered in the late 1980s by influential results on spanners for general graphs~\cite{PelegU89a,PelegS89} and in low-dimensional spaces~\cite{Chew86,Clarkson87,Keil88}, and it has grown into a very active and vibrant research area; see the survey~\cite{AhmedBSHJKS20}. A central research direction within the area of spanners is the design of approximation algorithms for an \EMPH{optimal}
$t$-spanner of a given graph, where the optimality usually refers to the two most common ``compactness'' measures: the spanner {\em size} (number of edges) or {\em weight} (total edge weight).\footnote{There are also other compactness measures, such as the {\em maximum degree}, but the most common and well-studied measures are the size and weight of spanners and for brevity we will restrict attention to them.}   The \EMPH{greedy} algorithm by Alth\"{o}fer et al.~\cite{althofer1993sparse} gives a $t$-spanner 
of size $O(n^{1 + 2/(t + 1)})$ for odd $t$ and $O(n^{1 + 2/t})$ for even $t$.
Therefore, the approximation ratio of the greedy algorithm (for minimizing the number of edges) is naively bounded by $O(n^{2/(t + 1)})$ for odd $t$ and by $O(n^{2/t})$ for even $t$. The recent line of work on light spanners~\cite{ChandraDNS95,ChechikW18,FiltserN22,LeS22,LeS23} implies almost the same approximation ratio for minimizing the weight of the spanners. 

Beating the approximation ratio of the greedy algorithm has been extremely challenging despite years of effort. Better approximation factors for sparsity are known 
only for $t \in \{2, 3 ,4\}$; specifically $O(\log n)$~\cite{KP94} for $t=2$ (matching a known lower bound~\cite{Kor01}) and $\tilde{O}(n^{1/3})$ for $t=3$~\cite{BBM+11} and $t = 4$~\cite{DZ16}. The gap between the upper and lower bounds is fairly large: for any $t \geq 3$ and for any constant $\eps>0$, there is no polynomial-time algorithm approximating a $t$-spanner with ratio better than $2^{\log{n}^{1 - \epsilon}/t}$ assuming $NP \not \subseteq BPTIME(2^{\mathrm{polylog}(n)})$~\cite{DinitzKR16}. (For directed graphs,\footnote{In the literature survey that follows we restrict  attention to undirected graphs, which is the focus of this work.} the current best approximation ratios are even worse~\cite{BBM+11,BGJ+12,BRR10,DK11}.) Some of the aforementioned results extend to minimizing the weight of the spanner and produce the same (or sometimes worse) approximation ratio~\cite{BBM+11,GKL23}. Efforts have also been made to experimentally test LP-based algorithms to find the (exact) minimum weight spanner of general graphs~\cite{SigurdZ04, AhmedHJKSS19,BoklerCJW24}.
 
For general  graphs, perhaps the most interesting stretch regime is $t\geq 3$ since, in this regime, $t$-spanners have a subquadratic size for any input graph. Even in this regime, as mentioned above, poly-logarithmic approximation algorithms are impossible under a standard complexity assumption. In various real-life applications, even stretch $t=3$ is too large. It is thus only natural to focus on structural classes of graphs, where we can hope to achieve (i) stretch $t=1+\eps$ 
for any given $\eps>0$ 
and (ii) a constant-factor approximation (with a constant that does not depend on $\eps$). Perhaps the two most basic and well-studied structural classes of graphs are \EMPH{low-dimensional Euclidean spaces} and \EMPH{planar graphs}.

In low-dimensional spaces, we are given a set $P$ of $n$ points in the Euclidean space $\real^d$ of any constant dimension $d$; the metric graph $G_P= (P,E,\wts)$ induced by $P$ is the edge-weighted complete graph with vertex set $P$ and edge weights $\wts(u,v) = \|u-v\|_2$ for all $u,v\in P$. Euclidean spanners have been thoroughly investigated; the book by Smid and Narasimhan~\cite{NS07} covers dozens of techniques for constructing Euclidean $(1+\eps)$-spanners, and a plethora of additional techniques have been devised since then~\cite{Smid25}. In particular, one can construct $(1+\eps)$-spanners with $O(\eps^{1-d}n)$ size~\cite{RS91,le2022truly} and $O(\eps^{-d}\wts(\lmst(G_P)))$ weight~\cite{DNS95,le2022truly}, where $\lmst(G_P)$ is the weight of the Euclidean minimum spanning tree for $P$.  These results imply an $O(\eps^{1-d})$-approximation for the minimum size and $O(\eps^{-d})$ for the minimum weight---of the optimal $(1+\eps)$-spanner.
Finding a minimum weight $(1+\epsilon)$-spanner is still NP-hard in this setting~\cite{CarmiC13}.
A major open problem is to obtain an $O(1)$-approximation (w.r.t.\ the size or weight) for the optimal Euclidean $(1+\eps)$-spanner, where the \EMPH{approximation factor does not depend on $\eps$}. Despite a vast literature on Euclidean spanners, there has been no progress on this problem until the recent work of Le et al.~\cite{le2024towards}, which gives the first \EMPH{bi-criteria approximation algorithm}. Specifically, they constructed $(1+\eps\cdot 2^{O(\log^*(d/\eps))})$-spanners whose size and weight are within $O(1)$ of those of the optimal $(1+\eps)$-spanners for any $\epsilon>0$.  

For planar graphs, the literature on spanners is much sparser. Basic geometric techniques and concepts do not apply and cannot even be adapted to planar graphs. 
One exception is with the aforementioned {\em greedy} algorithm by Alth\"{o}fer et al.~\cite{althofer1993sparse}, which applies to any graph, as it is oblivious to the input structure. They showed that the greedy $(1+\eps)$-spanner of any edge-weighted planar graph $G$ has weight at most $(1+\frac{2}{\eps})\wts(\lmst(G))$. (Since planar graphs have only $O(n)$ edges, the focus has been mainly on optimizing the weight.) Klein~\cite{Klein05} gave a more relaxed variant of the greedy algorithm, with the same asymptotic weight bound, which can be implemented in $O(n)$ time. Both the greedy algorithm and Klein's variant imply an $O(1/\eps)$-approximation for the minimum weight $(1+\eps)$-spanner for any input planar graph. The following fundamental question remains open.
\begin{question}\label{ques:main} 
Can one get a polynomial-time algorithm that computes a $(1+\eps)$-spanner of weight $O(1)\cdot\wts(G_{\opt,\eps})$, where $G_{\opt,\eps}$ is a minimum-weight $(1+\eps)$-spanner of any input planar graph $G$?
\end{question}

A wide variety of techniques have been designed for approximation algorithms in planar graphs over the years~\cite{Baker94,Klein05,BKM09,BHM11,BDHM16,BCEHKM11,EKS19,FL22}, but none of them appears to be applicable to \Cref{ques:main}. Compared to many network design problems, such as {\em TSP} or {\em Steiner tree}, the key challenge in approximating the minimum weight $(1+\eps)$-spanner is that we have to achieve two guarantees: 
 (i) stretch of $1+\eps$ (i.e., preserving all pairwise distances up to a factor of $1+\eps$) and (ii) minimizing the weight (approximating the minimum weight up to a given factor). Baker's technique~\cite{Baker94} is perhaps the most basic one: compute a BFS tree of $G$, divide the input graph into subgraphs consisting of $O(1)$ layers of the BFS tree, each subgraph then has treewidth $O(1)$, and solve the problem in each subgraph, say $H$, by applying dynamic programming on bounded treewidth graphs. If one applies Baker's technique 
to spanners, the issue is that the distance between vertices in each subgraph $H$ (of bounded treewidth) is not the same as the distance in the input graph, and hence the optimal solution for $H$ could be much heavier than $E(G_{\opt,\eps})\cap E(H)$. (When the graph is unweighted and the stretch is constant, one can extend $H$ by $O(t)$ more layers to preserve the distances between vertices in $H$~\cite{DraganFG11b}; when the graph is weighted, such a trick does not apply.) Other variants of Baker's technique, such as the contraction decomposition~\cite{Klein05}, run into a similar issue. 

The two different guarantees in approximating spanners are reminiscent of other problems with distance constraints in planar graphs. One such problem is the $\rho$-dominating set problem: given an input (non-constant) parameter $\rho$, find a minimum set of vertices such that other vertices must be within distance at most $\rho$ from the set. For these problems, there is an effective technique for designing bicriteria approximation algorithms by embedding the graph into a small treewidth graph~\cite{EKS19,FL22,CCLMST23}. Specifically, given a planar graph $G$ with diameter $\Delta$, one can embed $G$ into a graph $H$ with treewdith $O(1/\eps^4)$, such that $\dist_G(u,v)\leq \dist_H(u,v) \leq \dist_G(u,v) + \eps\Delta$ for any $\eps \in (0,1)$~\cite{CCLMST23}. However, the additive distortion $+\eps \Delta$ could be very large compared to $\dist_G(u,v)$; for example, in unweighted graph we have $\dist_G(u,v)=1$ for every edge $(u,v)$, while the additive distortion guarantee is $\eps\Delta$, which could be as large as $\Omega(\eps n)$. Therefore, such an embedding technique also does not provide a good bicriteria approximation algorithm for $(1+\eps)$-spanners. 

Another natural attempt to obtain a bicriteria approximation algorithm is to apply the greedy algorithm with stretch $(1+x\cdot \eps)$, for some parameter $x$, and compare its weight to the  minimum weight of a $(1+\eps)$-spanner. In \Cref{sec:greedy}, we present a hard planar graph instance for which the greedy $(1+x\cdot \eps)$-spanner has weight as large as $\Omega\left(\frac{1}{x^2\eps}\right)\wts(G_{\opt, \epsilon})$; this instance is obtained by a careful adaptation of a hard Euclidean instance from \cite{le2024towards}.
Therefore, the greedy algorithm does not yield an $O(1)$-approximation for the weight even when $x$ is rather large, say $x \approx \eps^{-1/3}$. 

\subsection{Our Main Contribution: Approximate Light Spanners}

In this paper, we make significant progress toward the resolution of \Cref{ques:main} by designing a bicriteria approximation algorithm for $(1+\eps)$-spanners in planar graphs: The stretch is $1+\eps\cdot 2^{O(\log^*(1/\eps))}$ while the approximation ratio is $O(1)$; here $\log^*$ is iterated logarithm.

\begin{theorem}\label{thm:main}
	Given any edge-weighted planar graph $G = (V, E, \wts)$ with integral edge weights $\wts: E\rightarrow \mathbb{N}_+$ as well as a parameter $\epsilon > 0$, a $\poly(n, 1/\eps)$-time algorithm can construct a $\brac{1+\epsilon\cdot 2^{O\brac{\log^* 1/\epsilon}}}$-spanner $H\subseteq G$ of total weight $O(1)\cdot \wts(G_{\opt, \epsilon})$, where $G_{\opt, \eps} = (V, E_{\opt, \epsilon})$ denotes a $(1+\epsilon)$-spanner of $G$ of minimum total weight.
\end{theorem}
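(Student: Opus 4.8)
The plan is to start from a known light spanner and \emph{iteratively purify} it: each round replaces, scale by scale, a heavy portion of the current spanner by cheap detours, shrinking the weight-approximation factor at an iterated-logarithmic rate while inflating the stretch error by only a constant factor. After $O(\log^*1/\eps)$ such rounds---composed at a geometrically growing error budget, in the spirit of the Euclidean construction of Le et al.~\cite{le2024towards}---the approximation factor drops to $O(1)$ and the accumulated stretch is $1+\eps\cdot 2^{O(\log^*1/\eps)}$.

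First I would fix the starting spanner and the trivial lower bound on $\OPT$. We may assume $G$ is connected (else handle components separately); then $G_{\opt,\eps}$ is connected, so $\wts(\lmst(G))\le\wts(G_{\opt,\eps})$. Klein's algorithm~\cite{Klein05} (or the greedy spanner of Alth\"ofer et al.~\cite{althofer1993sparse}) gives a $(1+\eps)$-spanner $H_0\subseteq G$ with $\wts(H_0)\le O(1/\eps)\cdot\wts(\lmst(G))\le O(1/\eps)\cdot\wts(G_{\opt,\eps})$, i.e.\ an $\alpha_0$-approximation with stretch $1+\delta_0$ for $\alpha_0=O(1/\eps)$ and $\delta_0=\eps$.

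The technical core is a \emph{single-round lemma}: from any $(1+\delta)$-spanner $H$ with $\wts(H)\le\alpha\cdot\wts(G_{\opt,\eps})$ one can compute, in $\poly(n,1/\eps)$ time, a $(1+O(\delta))$-spanner $H'$ whose weight is improved to roughly $O(\log\alpha)\cdot\wts(G_{\opt,\eps})$. (A routine weight-coarsening step uses the integral-weights assumption to discard scales contributing a negligible fraction of $\wts(H)$, leaving $O(\log n)$ relevant scales of which $O(\log\alpha)$ carry essentially all the weight, and to keep the running time $\poly(n,1/\eps)$ rather than polynomial in the bit-length.) To prove the lemma I would partition $E(H)$ into weight scales $\mathcal S_j=\{e\in E(H):\wts(e)\in[\mu^{j},\mu^{j+1})\}$ and, for each scale $j$ in increasing order, \emph{prune}: the edges of $\mathcal S_j$ lie on long walks of $H$, and for each such walk $Q$ between endpoints $x,y$ we seek a detour $P$ in $G$ with $\wts(P)\le(1+O(\delta))\cdot\wts(Q)$, arranged so that the total detour weight at scale $j$ is $O(1)$ times $\wts\brac{E_{\opt,\eps}\cap R_j}$ for the relevant region $R_j$. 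Here planarity is essential: the subpaths that $G_{\opt,\eps}$ uses to $(1+\eps)$-approximate $\dist_G(x,y)$ over all these endpoint pairs can be rerouted to be pairwise non-crossing in the fixed embedding, and non-crossing curves with endpoints on the closed curve traced by $\mathcal S_j$ form a \emph{laminar} family. This laminar family organizes the candidate detour regions into a forest, and a dynamic program processes it from the leaves up: at each node it stores a minimum-weight set of detours inside the corresponding region that, together with the not-yet-removed edges, preserves shortest-path distances among the terminals on that region's boundary, assembling children's optimal sub-solutions. Crucially the DP runs \emph{directly on the planar subgraph of $G$} induced on each region, so distances are exact, and the laminar structure alone keeps the number of states polynomial; feasibility of ``$G_{\opt,\eps}$'s own detours'' as a candidate bounds the weight, correctness of the DP bounds the stretch. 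Summing over the $O(\log\alpha)$ weight-carrying scales gives $\wts(H')=O(\log\alpha)\cdot\wts(G_{\opt,\eps})$, while the stretch multiplies by $1+O(\delta)$ across scales. Iterating the lemma from $(\alpha_0,\delta_0)$ then yields $\alpha_{i}=O\!\brac{\log^{(i)}\alpha_0}$ and $\delta_i=\delta_0\cdot c^{\,i}$ for a universal constant $c$; taking $i=\Theta(\log^*\alpha_0)=\Theta(\log^*1/\eps)$ gives $\alpha_i=O(1)$ and $\delta_i=\eps\cdot 2^{O(\log^*1/\eps)}$, which is the spanner claimed in \Cref{thm:main}.

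I expect the main obstacle to be the structural laminarity claim together with the DP: (i) showing that the detours witnessing $G_{\opt,\eps}$'s $(1+\eps)$-stretch between the endpoints of \emph{all} heavy walks of a scale can be chosen simultaneously non-crossing, hence laminar with respect to the boundary traced by $\mathcal S_j$; (ii) showing that the DP's \emph{local} feasibility requirement---preserving distances only among the boundary terminals of each laminar region---suffices to guarantee a \emph{global} $(1+O(\delta))$ stretch of $H'$ once every scale has been pruned, which requires cutting any shortest path of $G$ into pieces each living inside a single laminar region; and (iii) controlling how the per-scale and per-round errors compound, which is precisely what forces the $2^{O(\log^*1/\eps)}$ blow-up rather than anything smaller.
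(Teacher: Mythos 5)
Your outer framework matches the paper exactly: start from the greedy/Klein $(1+\eps)$-spanner with approximation $\alpha_0=O(1/\eps)$, prove a single-round lemma that converts an $\alpha$-approximate $(1+\delta)$-spanner into an $O(\log\alpha)$-approximate $(1+O(\delta))$-spanner in polynomial time, and iterate $\Theta(\log^*1/\eps)$ times; this is precisely \Cref{prune} and its use in the proof of \Cref{thm:main}. The gap is inside the single-round lemma, where your sketch replaces the paper's actual mechanism with one that does not deliver the required weight reduction. You propose to bucket $E(H)$ into weight scales and, for each heavy walk $Q$ of a scale, find a detour $P$ with $\wts(P)\le(1+O(\delta))\cdot\wts(Q)$. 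A detour that is only a $(1+O(\delta))$-factor lighter than what it replaces cannot reduce the approximation ratio from $\alpha$ to $O(\log\alpha)$; the exponential improvement requires a \emph{single} new path $\rho^*$ that replaces a set of old edges $F^*$ that is $\Omega(\alpha)$ times \emph{heavier} than $\rho^*$ (as in the ladder example, where one rung replaces $n$ rungs). You state the target ``total detour weight is $O(1)\cdot\wts(E_{\opt,\eps}\cap R_j)$'' but give no argument for why such detours exist. This existence is the crux of the paper (\Cref{struct}): one defines $\kappa$-hanging (\Cref{hang}), builds the laminar family of regions $R_{a,b}$ enclosed by each spanner edge $(a,b)$ and its $G_{\opt,\eps}$-shortest path, and runs a DFS with a potential function charging $6\wts(f)$ to each face-side of each optimal edge $f$ to show that a subfamily $\Gamma$ of these paths of total length at most $6\wts(G_{\opt,\eps})$ suffices to $\tfrac23$-hang every edge of $H$; averaging then yields the pruning pair. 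Your laminarity claim (non-crossing detours with endpoints on ``the closed curve traced by $\mathcal S_j$'') is a different and unsubstantiated statement, and without the hanging/charging machinery the averaging argument fails for exactly the reason the paper warns about: a path of $G_{\opt,\eps}$ may intersect many spanner paths while no single one of them can be wholly rerouted through it.

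The algorithmic half has a second gap. Your DP node stores ``a minimum-weight set of detours inside the region that preserves shortest-path distances among the terminals on that region's boundary.'' A laminar region can have polynomially many boundary terminals, so this subproblem is itself a minimum-weight spanner instance (NP-hard by \Cref{thm:lower}), and laminarity alone does not make the state space polynomial. The paper's DP is structurally different: it tabulates, for every triple $(s,t,L)$ with $L\le(1+\eps)\dist_G(s,t)$, an approximate shortest path together with a \emph{multiset} of edges that $\tfrac{1}{3(1+\eps)}$-hang on it, maximizing total hung weight; the laminar structure of hanging intervals is used only to certify that the additive DP recursion does not undercount relative to $F^*$ (\Cref{lowerbd}), and a separate shortcutting argument bounds the multiplicities so that the multiset weight translates into genuinely pruned weight. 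Neither the multiset relaxation nor the multiplicity control appears in your plan, and without them the union of overlapping sub-solutions can drastically undercount, breaking the $O(\log\alpha)$ bound. You have correctly identified where the difficulties lie, but the proposal as written does not contain the ideas needed to resolve them.
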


To obtain our result, we develop a new technique, which we refer to as \EMPH{iterative planar pruning}.
While our technique
is inspired by the recent biciteria approximation algorithm for Euclidean $(1+\eps)$-spanner~\cite{le2024towards}, 
it has to deviate significantly from it,
since the algorithm of \cite{le2024towards} crucially relies on several basic properties of Euclidean geometry that do not hold in planar graphs.
In \Cref{sec:tech-overview} we provide a detailed overview of our technique, and its comparison with previous work and the technique of \cite{le2024towards} in particular. 
At a very high-level, our technique will heavily exploit the \EMPH{planarity} of the input graph to establish a certain \EMPH{laminar structural property} of carefully chosen paths 
in the current spanner. This laminar structure property is key for obtaining an efficient \EMPH{dynamic programming} algorithm, which is, in turn, 
used for iteratively finding a light path to replace a set of much heavier edges. 

Perhaps our most significant departure from existing techniques for network design problems in planar graphs, including TSP~\cite{Klein05}, Steiner tree~\cite{BKM09}, Steiner forest~\cite{BHM11}, and their prize-collecting counterparts~\cite{BCEHKM11}, to name a few, is the usage of dynamic programming. Existing techniques could be seen as providing a reduction to bounded treewidth graphs, where dynamic programming naturally arises\footnote{These reductions basically apply Baker's technique to the dual planar graph. As we pointed out above, this technique can distort the distances significantly, and hence such a reduction does not seem applicable to our problem.}. 
In contrast to previous work, the dynamic programming in our work \EMPH{applies directly to the input planar graph}, whose treewidth may be as large as $\Theta(\sqrt{n})$.  Our technique could potentially be applicable to other network design problems where, in addition to minimizing the total weight, one seeks to impose distance constraints between nodes in the network. One class of such problems lies in the context of {\em hop-constrained network design}: find a network of minimum weight such that the hop-distance between two nodes is at most a given input parameter $h$. For several hop-constraint problems, strong inapproximability results~\cite{DinitzKR16} rule out polylogarithmic \EMPH{single-criteria} approximation algorithms. There is a long line of work on bicriteria-approximation algorithms for hop-constrained network design problems in general graphs, where one wishes to approximate both the weight of the network and the hop constraint. The state-of-the-art algorithms for general graphs achieve polylogarithmic approximations to both the weight and the hop-constraint; see \cite{HHZ21} and references therein. 
Remarkably, none of the hop-constrained network design problems are known to admit a bicriteria constant approximation
in planar graphs. We anticipate that exploring the applicability of our technique to these problems in planar graphs is a promising research avenue. 

\subsection{Hardness of Minimum Spanners}

For \EMPH{unweighted} planar graphs, the minimum $t$-spanner problem is fully understood. The regime of stretch $t < 2$ is equivalent to $t = 1$, for which the only 1-spanner is the entire graph. 
Dragan, Fomin and Golovach~\cite{DraganFG11b} developed an efficient polynomial-time approximation scheme (EPTAS) using Baker’s technique, constructing a $t$-spanner with at most $(1+\delta)|E(G_{\opt,t})|$ edges for any $t \geq 2$ and $\delta > 0$. For exact algorithms, Brandes and Handke~\cite{brandes1998np} proved NP-hardness for the regime of stretch $t \geq 5$, which was later improved by Kobayashi~\cite{Kobayashi18a} to the entire nontrivial regime of stretch $t \geq 2$. Kobayashi~\cite{Kobayashi18a} also showed that the problem is NP-hard even for planar graphs of maximum degree at most $\Delta= 9$ and $t=2$ (however, it can be solved in polynomial time for $\Delta\leq 4$ and $t=2$~\cite{CK94}); G\'{o}mez, Miyazawa, a Wakabayashi~\cite{GomezMW23} settled the dichotomy between NP-hardness and polynomial-time algorithms for almost all pairs $(\Delta,t)$ of maximum degree and stretch in unweighted planar graphs. 

In \EMPH{weighted} planar graphs, the goal is to find a \EMPH{minimum weight} $t$-spanner (generalizing the measure of size in the unweighted setting). The problem remains NP-hard in this setting~\cite{brandes1998np,Kobayashi18a}, and Dragan et al.~\cite{DraganFG11b} extended their result to planar graphs with positive integer weights bounded by a constant $W$, yielding an EPTAS with a running time of $\frac{1}{\delta} \cdot (t\cdot W)^{O((t\cdot W/\delta)^2)} \cdot n$, for any $t \geq 2$ and $\delta > 0$. 
Remarkably,  no hardness result has been established in the stretch regime $1<t<2$, which is arguably the most interesting. 
In \Cref{sec:hardness}, we prove that the problem of computing an exact minimum-weight $(1+\epsilon)$-spanner in planar graphs is NP-hard, thus closing the only gap left open by previous work. 

\begin{theorem}\label{thm:lower}
    For every $\epsilon>0$, the problem of computing a minimum-weight $(1+\epsilon)$-spanner 
    for a given edge-weighted planar graph $G = (V, E, \wts)$ with polynomially bounded integral edge weights $\wts: E\rightarrow \mathbb{N}_+$, is NP-hard.
\end{theorem}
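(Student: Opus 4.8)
The plan is to prove hardness by a polynomial-time reduction from a version of \textsc{Vertex Cover} that is already NP-hard on planar graphs of bounded maximum degree. Fix $\eps>0$. For $\eps\ge 1$ the statement is already known, since the stretch regime $t=1+\eps\ge 2$ is NP-hard for planar graphs by Kobayashi~\cite{Kobayashi18a}; so assume $0<\eps<1$. Because all edge weights are integral, whether a subgraph $H$ is a $(1+\eps)$-spanner of $G$ depends on $\eps$ only through the finitely many comparisons $\dist_H(u,v)\le(1+\eps)\,\dist_G(u,v)$; I would design all gadgets at a length scale $\Theta(1/\eps)$ --- a constant, since $\eps$ is fixed --- so that the decisive phenomenon is: a detour that is one length-unit longer than the direct route is still within stretch $1+\eps$, whereas a detour two length-units longer is not. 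Given a planar bounded-degree graph $G'$ and an integer $k$, I would construct in polynomial time a planar graph $G=G(G',\eps)$ with polynomially bounded integral weights, together with a threshold $W$, so that $G'$ has a vertex cover of size at most $k$ if and only if $G$ admits a $(1+\eps)$-spanner of weight at most $W$.

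The graph $G$ is assembled from two kinds of gadgets, laid out according to a fixed planar embedding of $G'$. For each vertex $v$ of $G'$ we include a \emph{selector gadget} that has two canonical local configurations, called \emph{in} and \emph{out}, whose weights differ by a fixed amount $C=\Theta(1/\eps)$ with \emph{in} being the heavier one; the gadget exposes a \emph{port} to each edge of $G'$ incident to $v$. For each edge $(u,v)$ of $G'$ we include an \emph{edge gadget} joining the relevant ports of the selectors of $u$ and $v$, carrying a single distance-demand pair that can be met at the gadget's base cost exactly when at least one of its two incident selectors is \emph{in}, and otherwise forces $\Omega(C)$ additional weight. Taking $W$ to be the sum of all gadget base costs plus $C\cdot k$, a $(1+\eps)$-spanner of weight at most $W$ can afford to put at most $k$ selectors \emph{in} while still satisfying every edge gadget, hence encodes a vertex cover of size at most $k$; conversely, a vertex cover $S$ of size at most $k$ yields such a spanner by putting the selectors of $S$ \emph{in} and all others \emph{out}. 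The gadget weights are $\poly(1/\eps)$, while the only weights growing with $n$ come from (subdivided) connectors and are $\poly(n)$, so the weights stay polynomially bounded; planarity of $G'$ and the fact that each gadget is planar with a fixed boundary interface keep $G$ planar.

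The heart of the proof --- and the step I expect to be the main obstacle --- is a \emph{normal-form lemma}: every $(1+\eps)$-spanner $H$ of $G$ can be transformed, without increasing $\wts(H)$, into one that is \emph{local}, meaning that within each gadget it realizes exactly one of the two canonical configurations and it contains no edge joining different gadgets other than through the designed ports. Only with such a lemma does the weight of an optimal spanner decompose additively over gadgets, making the equivalence with vertex cover exact; a priori, a spanner could try to save weight by routing one gadget's distance demand through a neighboring gadget, or by keeping an unexpected mixture of edges. Excluding this requires a careful exchange/uncrossing analysis exploiting that (i) the subdivided connectors are too long to be shortcut within stretch $1+\eps$, so their edges are essentially forced; (ii) in the neighborhood of every gadget, the only detours of stretch at most $1+\eps$ are the intended ones, because any other route is at least two length-units longer than the direct one; and (iii) any stray or crossing edge can be removed or rerouted without increasing the weight. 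Combining these local statements, checking that the reduction runs in polynomial time with polynomially bounded weights, and verifying both directions of the equivalence then yields Theorem~\ref{thm:lower}.
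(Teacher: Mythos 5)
Your proposal is a reduction template, not a reduction: the entire difficulty of the theorem lies in exhibiting concrete gadgets for which the spanner condition provably forces the intended combinatorial behavior, and you have only listed the properties you would like your selector and edge gadgets to have. You yourself identify the ``normal-form lemma'' as the main obstacle and then do not prove it, nor do you specify the gadgets concretely enough that one could even attempt to prove it. In particular, two claims are asserted without construction: (i) that a selector gadget exists with exactly two canonical local configurations whose weights differ by a controlled amount $C$, such that no third, cheaper ``mixed'' configuration is available to a spanner; and (ii) that an edge gadget exists whose distance demand is met at base cost precisely when an incident selector is \emph{in}, and otherwise costs $\Omega(C)$ extra. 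For spanners these are delicate: the constraint is on all pairwise distances, a demand can be served by routes passing through neighboring gadgets, and a spanner can mix edges from both canonical configurations. Ruling all of this out is where the actual work lives, and it is absent.

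For comparison, the paper reduces from planar rectilinear monotone 3SAT rather than planar vertex cover, and its proof consists almost entirely of the material your sketch omits: explicit clause and literal gadgets with weights such as $2+2\eps$, $\frac{6+10\eps}{1+\eps}$, and $4h_i/(1+\eps)$, chosen so that the two canonical states of a literal gadget have \emph{equal} weight (truth is encoded in \emph{which} path is present, not in a weight difference of $C$ per selected vertex, sidestepping the need for your $W = \text{base} + C\cdot k$ accounting); a convention that all weight-$0$ edges may be assumed present; a lemma lower-bounding the weight of any spanner restricted to each literal gadget; and an equality analysis showing that meeting the global weight bound forces every gadget into one of its two canonical states and forces at least one ``true'' edge per clause. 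That equality analysis is precisely the paper's substitute for your normal-form lemma, and it works only because the gadgets were engineered to make it provable. Until you supply gadgets of comparable concreteness and carry out the corresponding forcing argument, the reduction is not established.
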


\section{Technical Overview}\label{sec:tech-overview}

In this section, we highlight the key ideas behind the proof of \Cref{thm:main}.
\subsection{Hard instances} 

As a reminder of the greedy spanner algorithm, it goes over all edges in the input graph in the non-decreasing order in terms of edge weights, and in each iteration it adds the edge to the spanner if the distance stretch between the two endpoints is larger than $1+\eps$. 

We will first demonstrate that the \EMPH{greedy} $(1+\eps)$-spanner algorithm incurs a weight approximation of $\Theta(1/\eps)$ for some hard instances. With a concrete hard instance at hand, we then demonstrate how a simple modification to the greedy spanner can yield a much better spanner construction for this particular instance. In \Cref{sec:prune} we highlight some key insights that are needed for generalizing this simple modification to obtain a \EMPH{pruning procedure} for general planar instances.

Consider a planar graph $G' = (V', E', \wts')$ on $2n+2$ vertices $V'=\{u_0, u_1, \ldots, u_n\}\cup \{v_0, v_1, \ldots, v_n\}$ that includes an edge $(u_i, v_i)$ of weight $1$ for every $0\leq i\leq n$, 
and two edges $(u_0, u_j), (v_0, v_j)$ of weight $\eps/2$ each, for every $1\leq j\leq n$.  See \Cref{fig:ladder} for an illustration.
The minimum weight $(1+\eps)$-spanner of $G'$ consists of edges $\{(u_0, v_0)\}\cup\{(u_0, u_j), (v_0, v_j) : 1\leq j\leq n\}$ and has total weight $1+n\epsilon$. However, the greedy algorithm for stretch $1+\eps$, and in fact for any stretch less than $1+2\eps$, could possibly begin with a minimum spanning tree consisting of edges $\{(u_1, v_1)\}\cup \{(u_0, u_j), (v_0, v_j) : 1\leq j\leq n\}$, and then greedily add all edges $(u_i, v_i), 2\leq i\leq n$, incurring a total weight of $(1+\eps)n$. Thus, the approximation ratio of the greedy algorithm is $(n+\eps n)/(1+\eps n) \approx 1/\eps$ when $n$ grows, which is asymptotically as large as the naive existential bound. 
In \Cref{sec:greedy}, we present a stronger hard instance, for which the weight of the greedy $(1+x\eps)$-spanner, for any $1\leq x = O(\eps^{-1/2})$, exceeds the minimum weight of any $(1+\eps)$-spanner  by a factor of $\Omega\brac{\frac{\epsilon^{-1}}{x^2}}$. In other words, the greedy algorithm performs poorly even as a \EMPH{bicriteria approximation} algorithm.

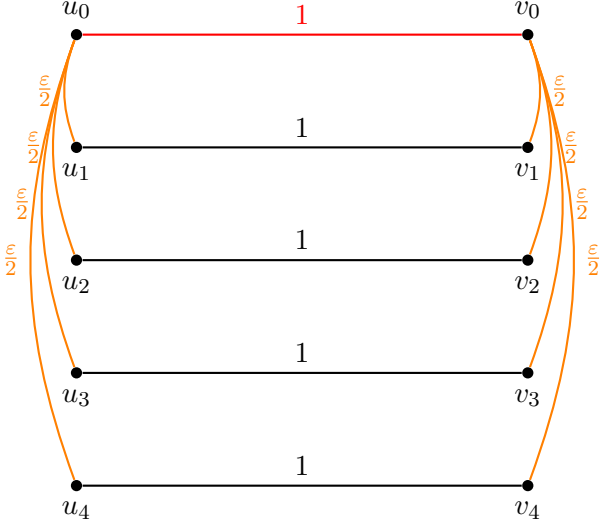
\begin{figure}
    \centering
    \begin{tikzpicture}[scale=1.5]
  \def\n{4} 

  \node[circle, fill=black, inner sep=1.5pt, label=above:$u_0$] (u0) at (0, 0) {};
  \node[circle, fill=black, inner sep=1.5pt, label=above:$v_0$] (v0) at (4, 0) {};
  
  \foreach \i in {1,...,\n} {
    \node[circle, fill=black, inner sep=1.5pt, label=below:$u_\i$] (u\i) at (0, -\i) {};
  }
  \foreach \i in {1,...,\n} {
    \node[circle, fill=black, inner sep=1.5pt, label=below:$v_\i$] (v\i) at (4, -\i) {};
  }
  \draw[thick, red] (u0) -- node[above] {1} (v0);
  \foreach \i in {1,...,\n} {
    \draw[thick] (u\i) -- node[above] {1} (v\i);
  }
  \foreach \j in {1,...,\n} {
    \draw[thick, orange] (u0) to [bend right=20] node[left] {$\frac{\varepsilon}{2}$} (u\j);
  }
  \foreach \j in {1,...,\n} {
    \draw[thick, orange] (v0) to [bend left=20] node[right] {$\frac{\varepsilon}{2}$} (v\j);
  }
\end{tikzpicture}
    \caption{In this example, the optimal $(1+\eps)$-spanner contains all the red and orange edges, but a greedy $(1+\eps)$-spanner might contain all edges in the graph except for the red edge $(u_0, v_0)$.}
    \label{fig:ladder}
\end{figure}

Observe that the main reason that the greedy algorithm returns a heavy spanner is that it misses the {\em critical} edge $(u_0, v_0)$, which, 
together with the $2n$ edges in $\{(u_0, u_j), (v_0, v_j): 1\leq j\leq n\}$, ``serves'' (i.e., provides $(1+\eps)$-spanner paths between) all pairs $u_i,v_j$ of vertices in $G$. In general, the greedy $(1+\eps)$-spanner might miss some {\em critical paths} (instead of direct edges), each of which serves many pairs of vertices simultaneously in the unknown optimal $(1+\eps)$-spanner, while the greedy $(1+\eps)$-spanner has to connect each of these pairs of vertices separately, incurring a much higher total weight. Consequently, to outperform the $O(1/\eps)$-approximation of the greedy $(1+\eps)$-spanner algorithm, our high-level strategy is to identify critical paths that can replace as many existing edges in the greedy spanner as possible. For example, in the instance shown in \Cref{fig:ladder}, edge $(u_0, v_0)$ would be a critical edge, and adding $(u_0, v_0)$ to the greedy spanner allows us to remove the edges $(u_i, v_i)$, $1\leq i\leq n$, thereby achieving a much lower, and in fact the optimal, weight.

\subsection{A pruning framework in planar graphs} \label{sec:prune}

To reduce the approximation ratio to $O(1)$ for general instances, we develop a certain \EMPH{pruning procedure}, which is inspired by the pruning framework in \cite{le2024towards}, developed for Euclidean low-dimensional spaces, and is also reminiscent of the standard local search approach in the broader context of approximation algorithms. 

At a high level, following~\cite{le2024towards}, we start with a (greedy) $(1+\eps)$-spanner  $H$ for the input graph $G = (V,E,\wts)$ that has a (high) approximation ratio $O(1/\eps)$ (in the Euclidean setting the initial approximation is $O(1/\eps^2)$), find two sets of edges $F^{\nw}\subseteq E$ and a $F^\old\subseteq E(H)$, and exchange them: $H_1\leftarrow F^\nw\cup (H\setminus F^\old)$. The key guarantees of our construction are: (i) $H_1$ has stretch $(1+O(\eps))$ and (ii) $\wts(H_1)= O(\log1/\eps)\cdot \wts(G_{\opt,\eps})$. Thus, we pruned a set of heavy edges $F^\old$ to reduce the approximation ratio for the weight of the spanner $H$ {\em exponentially} at the expense of a slight increase in the stretch. We can then apply the same pruning procedure to $H_1$. By repeating the pruning procedure $\log^*{(1/\eps)}$ times, we obtain a spanner $H^*$ with stretch $(1+\eps\cdot  2^{O(\log^*{(1/\eps)})})$ and  weight
\begin{equation*}
    \wts(H^*) = O(1)\underbrace{\log \ldots \log (1/\eps)}_{\text{iterate $\log^*(1/\eps)$ times}} \wts(G_{\opt,\eps}) = O(1) \cdot \wts(G_{\opt,\eps}),
\end{equation*}
as claimed in \Cref{thm:main}. 

Our main technical contribution lies in efficiently finding the heavy(-weight) pruned set $F^\old$ and the light replacement set $F^{\nw}$ in each iteration of the pruning procedure. We note that the \EMPH{existence} of such sets is immediate: simply take $F^{\old} = E(H)$ (the edges of the current spanner) while $F^{\nw} = E(G_{\opt,\eps})$ (the edges of the optimal solution). However, 
$G_{\opt,\eps}$ is not known, 
and the \EMPH{algorithmic task of efficiently (in poly-time) computing} such sets
$F^\old$ and  $F^{\nw}$---on which our approximation algorithm crucially relies---is highly nontrivial, as discussed next. 

In Euclidean spaces, Le et al.~\cite{le2024towards} rely on Euclidean geometry in a crucial way to find such sets  $F^{\old}$ and $F^{\nw}$. For each edge $(u,v)$, let $\mathcal{E}_{uv}$ be the ellipsoid of width $O(\sqrt{\eps})$ that has $u$ and $v$ as the foci. A basic observation is that $\mathcal{E}_{uv}$ contains all points on any $(1+\eps)$-approximate path between $u$ and $v$. One of the key ideas  in~\cite{le2024towards} is the following. Consider any two edges 
$(s,t)$ and $(s',t')$
of almost the same length, say
$\Theta(\ell)$ for some $\ell > 0$, and suppose there are two points $z,w$ such that:  (i) both $z$ and $w$ lie in the intersection of the two ellipsoids $\mathcal{E}_{st}$ and    $\mathcal{E}_{s't'}$, and (ii) $\|z-w\|_2 = \Theta(\ell)/c$ for a sufficiently large constant $c$. Item (i) guarantees that, by taking $(z,w)$ to $F^{\nw}$ (to be added to the spanner) and both edges $\{(s,t),(s',t')\}$ to $F^{\old}$ (to be removed from the spanner), the stretch of the new spanner may grow only  slightly; item (ii) guarantees that  $F^{\old}$ is heavy while  $F^{\nw}$ is light. Edge $(z,w)$ is called a {\em helper} edge in~\cite{le2024towards}. Clearly, Euclidean geometry is central to the approach in \cite{le2024towards}. Furthermore, \cite{le2024towards} also used the  fact that in 
Euclidean spanners, the graph $G$ is the complete graph on $n$ input points in $\mathbb{R}^d$, and so any two vertices are connected by an edge, thus the helper edge can be added to $F^{\nw}$; in planar graphs, however, most pairs of vertices are non-adjacent. 

\paragraph{A laminar structural property.} In light of the above discussion, our primary objective is to efficiently compute critical paths in $G$ that could replace a set of edges with large total weight in any given spanner $H\subseteq G$ such that $\wts(H)\gg \wts(G_{\opt,\eps})$ (possibly slightly increasing the stretch of $H$). Before we can proceed to the {\em algorithmic task of efficiently computing} such critical paths, 
we must first prove their {\em existence}, which by itself is nontrivial.
In particular, removable edges could form far more intricate structures than the basic ladder-like graph shown in \Cref{fig:ladder}; for example, there could be multiple ladders hanging on a critical path (see \Cref{fig:manyladders} for an illustration).
The algorithmic task of  computing such paths poses further technical challenges; it turns out that   these challenges can be overcome by carefully employing dynamic programming, as discussed below. 
    
        \begin{figure}[htb]
        \centering
        \begin{tikzpicture}[scale=0.75]
  \def\n{4} 

  \foreach \j in {0,...,2}{
    \foreach \i in {0,...,\n} {
     \node[circle, fill=black, inner sep=1.5pt] (u\j\i) at (0+6*\j, -\i) {};
     \node[circle, fill=black, inner sep=1.5pt] (v\j\i) at (3+6*\j, -\i) {};
    }

    \draw[thick, red] (u\j0) -- (v\j0);
    \foreach \i in {1,...,\n} {
      \draw[thick, cyan] (u\j\i) -- (v\j\i);
    }
  
    \foreach \i in {1,...,\n} {
      \draw[thick] (u\j0) to [bend right=20] node[left] {} (u\j\i);
      \draw[thick] (v\j0) to [bend left=20] node[right] {} (v\j\i);
    }
  }

  \node[circle, fill=black, inner sep=1.5pt, label=below:{$s$}] (s) at (-3, 0) {};
  \node[circle, fill=black, inner sep=1.5pt, label=below:{$t$}] (t) at (18, 0) {};
  \draw[thick, red] (s) -- (u00);
  \draw[thick, red] (v00) -- (u10);
  \draw[thick, red] (v10) -- (u20);
  \draw[thick, red] (v20) -- (t);
  
\end{tikzpicture}
        \caption{There could be multiple ladder structures that are hanging on a single critical path, colored red. The blue edges forming the ladders are removable edges of the current spanner $H$.}
        \label{fig:manyladders}
    \end{figure}
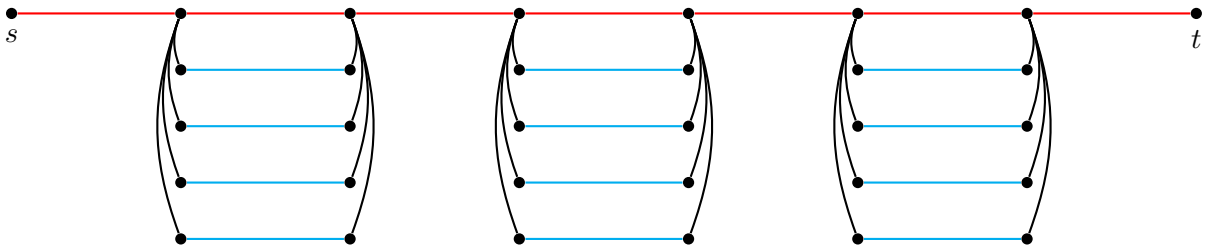

To be more precise, we shall consider critical paths with somewhat relaxed constraints. 
Specifically, the distances from removable edges to the critical path will not necessarily be as small as an $\eps$-fraction of the lengths of the removable edges, as in the hard instance shown in \Cref{fig:ladder}. For general instances, we can only guarantee a much weaker upper bound of a $(1 - \Omega(1))$-fraction instead of an $\eps$-fraction; this is the main reason why our pruning procedure increases the stretch. 
With the above relaxation, we can ultimately prove the existence of a path $\rho^*$ and an edge set $F^* = F^*(\rho^*) \subseteq E(H)$, such that $\frac{\wts(F^*)}{\wts(\rho^*)}\approx \frac{\wts(H)}{\wts(G_{\opt,\eps})}$, 
while the stretch of $H\cup E(\rho^*)\setminus F^*$ only slightly exceeds that of $H$. Hence, adding $\rho^*$ to $H$ and pruning $F^*$ from $H$ produces a significantly lighter spanner with almost the same stretch guarantee. We refer to such a pair, $\rho^*$ and $F^*$, as a \EMPH{pruning pair}.

We note that proving the existence of a pruning pair using a naive averaging argument is doomed.  
Specifically, let us associate every edge $(s, t)\in E(H)$ with a shortest path $\gamma_{s, t}$ in $G_{\opt,\eps}$ and map each such path $\gamma_{s, t}$ to 
a disjoint edge set $F_{s,t}\subseteq E(H)$, where the union of all paths $\gamma_{s, t}$ (respectively, edge sets $F_{s,t}$) is $G_{\opt,\eps}$ (resp., $H$). 
Consequently,  an average path $\gamma_{s, t}$ in $G_{\opt,\eps}$ is mapped to a set $F_{s,t}\subseteq E(H)$ of total weight $\wts(\gamma_{s, t})\cdot \frac{\wts(H)}{\wts(G_{\opt,\eps})}$; note that $\frac{\wts(F_{s,t})}{\wts(\gamma_{s,t})}\approx \frac{\wts(H)}{\wts(G_{\opt,\eps})}$.
It is now tempting to argue that $\gamma_{s, t}$ could replace this heavy edge set $F_{s,t} \subseteq E(H)$ (which would reduce the weight by the required amount) without blowing up the stretch. Alas,
the main issue with such an averaging argument is that it completely ignores cases where $\gamma_{s, t}$ intersects many different paths $\gamma_{u, v}$, but any single path $\gamma_{u, v}$ shares only a small proportion with $\gamma_{s, t}$, and so the path $\gamma_{s, t}$ itself could not replace any other edge $(u, v)\in E(H)$ without significantly blowing up the stretch.

In our proof, to establish the existence of a pruning pair $\rho^*, F^*$, where the single path $\rho^*$ in $G_{\opt,\eps}$ can remove the much heavier set of edges $F^*$ in $H$ without blowing up the stretch, we need to drill much deeper, by \EMPH{leveraging planarity} and exploring certain \EMPH{laminar structures of all paths $\{\gamma_{s, t} : (s, t)\in E(H)\}$}.
First, we introduce the following key definition of \EMPH{$\kappa$-hanging}, which formalizes the notion of removable edges (a canonical setting of  $\kappa$ is $\kappa = 2/3$):

\begin{definition}[$\kappa$-hang]\label{def:hang-intro}
	Consider any (not necessarily simple) path $\rho = \langle v_1, v_2, \ldots, v_k\rangle$  in $G$ and any edge $(a, b)\in E(H)$. We say that edge $(a, b)$ is \EMPH{$\kappa$-hanging} on the path $\rho$ if there exists a pair of vertices $v_i, v_j$ with $i<j$ such that:
	\begin{enumerate}[(1)]
		\item $\wts\brac{\rho[v_i, v_j]}\geq \kappa\cdot\wts(a, b)$, where $\rho[v_i,v_j]$ denotes the subpath of $\rho$ between $v_i$ and $v_j$; 
        
        intuitively, the lower bound on $\wts\brac{\rho[v_i, v_j]}$ will be helpful when we add the path $\rho[v_i, v_j]$ and remove the edge $(a, b)$ in spanner $H$ while preserving the stretch;
		\item $\dist_G(a, v_i) + \wts(\rho[v_i, v_j]) + \dist_G(v_j, b)\leq (1+\epsilon)\cdot \wts(a, b)$.
		\end{enumerate}
    We say that $(a, b)$ is \EMPH{$\kappa$-hanging} 
    at $(v_i, v_j)$ on $\rho$.
\end{definition}
  \begin{figure}[!htb]
        \centering
        \begin{tikzpicture}[scale=2]

  \draw[ultra thick] plot[smooth] coordinates {(0,0) (0.4, 0.7) (1,1)};

  \draw[ultra thick, color=orange] plot[smooth] coordinates {(1,1) (2, 1.1) (3,1)};

  \draw[ultra thick] plot[smooth] coordinates {(3,1) (3.6, 0.7) (4, 0)};

  \node at (2, 1.3) {$\rho[v_i, v_j]$};
  \node at (2, 0.6) {$\boldsymbol{\rho}$};
  
  \node[circle, fill=black, inner sep=1.5pt, label=below:$v_i$] (vi) at (1,1) {};
  \node[circle, fill=black, inner sep=1.5pt, label=below:$v_j$] (vj) at (3,1) {};
  
  \node[circle, fill=black, inner sep=1.5pt, label=above left:$a$] (a) at (0,2) {};
  \node[circle, fill=black, inner sep=1.5pt, label=above right:$b$] (b) at (4,2) {};
  
  \draw[thick] (a) -- (b);
  
  \draw[thick, color=orange] (a) to[out=270, in=135] (vi);
  \draw[thick, color=orange] (b) to[out=270, in=45] (vj);

\end{tikzpicture}
        \caption{An example for an edge $(a, b)$ that is $\kappa$-hanging at $(v_i, v_j)$ on path $\rho$. The orange path between $a$ and $b$ has length at most $(1+\eps)\cdot\wts(a, b)$; the orange sub-path of $\rho$ between $v_i$ and $v_j$, which is also a sub-path of the orange path between $a$ and $b$, has length at least $\kappa\cdot\wts(a, b)$.}
        \label{fig:hang-intro}
    \end{figure}
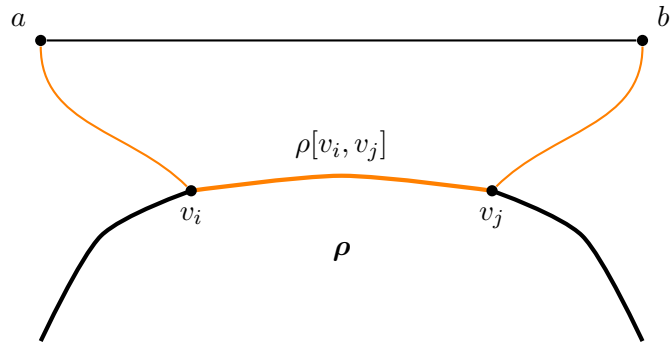

See 
\Cref{fig:hang-intro}
for an illustration. We argue that, since $\kappa = 2/3$, and due to items (1) and (2) in \Cref{def:hang-intro}, the spanner obtained from the current $(1+\eps)$-spanner,
by adding $\rho$ to
it and removing all the $\kappa$-hanging edges from it, has a stretch of at most $1 + O(\eps)$. The reason why the additive term $\eps$ has to increase by some constant factor to $O(\eps)$ is that our current spanner may not include the optimal sub-paths in $G$ from $a$ to $v_i$ or from $v_j$ to $b$. In fact, our current spanner may not even include $(1+\eps)$-spanner paths between these points, as we are iteratively pruning paths, hence our argument has to bypass several hurdles in order to prove the stretch bound. 

Proving the stretch bound alone is insufficient---we also need to guarantee that the weight reduces significantly, and thus it is crucial that the $\kappa$-hanging edges are heavy with respect to the path on which they are hanging.  
Furthermore, while proving the existence of a path $\rho^*$ with heavy $\kappa$-hanging edges is already nontrivial, the existence alone does not guarantee an efficient algorithm. Our key insight is that we can impose a \EMPH{laminar structure} of $\kappa$-hanging paths in a way that can be exploited for an efficient algorithm, formalized by the following structural lemma; when applying the lemma, we may assume that the weight of the current $(1+\eps)$-spanner $H$ exceeds that of the optimal solution $G_{\opt,\eps}$ by a factor of $\alpha$, for a large parameter $\alpha \gg 1$.

\begin{lemma}[structural property, simplified]\label{lm:struct}
Define $\alpha = \frac{w(H)}{\wts(G_{\opt,\eps})}$. There exists a pruning pair that consists of a shortest path $\rho^*$  in $G_{\opt, \epsilon}$ and an edge set $F^*\subseteq E(H)$, with the following  guarantees:
    \begin{enumerate}[(1)]
        \item Each edge $e\in F^*$ is $\frac{2}{3}$-hanging on $\rho^*$;
        \item The total weight $\wts\brac{F^*}$ of $F^*$ satisfies $\wts\brac{F^*}\geq \frac{\alpha}{6}\cdot \wts\brac{\rho^*}$;
        \item   For each edge $e\in F^*$, let $(a_e, b_e)$ be the hanging points of $e$ on $\rho^*$; then all the sub-path intervals $\{\rho^*[a_e, b_e]: e\in F^*\}$ form a \EMPH{laminar family}; recall that a laminar family is a family of sets where any two sets are either disjoint or related by containment.
    \end{enumerate}
\end{lemma}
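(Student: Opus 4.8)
The plan is to split the proof into a soft stage---showing that \emph{every} edge of $H$ already $\tfrac23$-hangs on some shortest path of $G_{\opt,\eps}$---and a hard stage, which consolidates these many local hangings onto a \emph{single} shortest path while enforcing laminarity, and which is where planarity is used. For the soft stage, first observe that every $e=(a,b)\in E(H)$ is nearly tight in $G$, i.e.\ $\dist_G(a,b)\ge \wts(a,b)/(1+\eps)$. This holds for greedy-spanner edges, which are in fact tight: a strictly shorter $a$--$b$ route in $G$ would use $\ge 2$ edges, each strictly lighter than $e$ and hence examined before $e$, so by the greedy invariant $\dist_{H}(a,b)\le(1+\eps)\dist_G(a,b)<(1+\eps)\wts(e)$ already when $e$ is examined, forcing the algorithm to skip $e$. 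It also holds for the edges inserted during pruning: these lie on shortest paths of $G_{\opt,\eps}$, so $\wts(a,b)=\dist_{G_{\opt,\eps}}(a,b)\le(1+\eps)\dist_G(a,b)$. Now fix, for each $e=(a,b)\in E(H)$, a shortest path $\gamma_e$ of $G_{\opt,\eps}$ between $a$ and $b$; then $\tfrac23\wts(e)\le\wts(e)/(1+\eps)\le\dist_G(a,b)\le\wts(\gamma_e)\le(1+\eps)\wts(e)$ (taking $\eps$ small, as we may), so $e$ is $\tfrac23$-hanging on $\gamma_e$ at its own endpoints---condition (1) of \Cref{def:hang-intro} by the lower bound, condition (2) with equality since both tails have length $0$---and, crucially, every sub-path of $\gamma_e$ is itself a shortest path of $G_{\opt,\eps}$.

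For the hard stage, note that $G_{\opt,\eps}\cup H\subseteq G$ is planar, and fix a planar embedding. The technical core is a \emph{no-crossing} property: for any shortest path $\rho$ of $G_{\opt,\eps}$ and any two edges $e_1,e_2\in E(H)$ that hang on $\rho$ on the \emph{same side} of the curve $\rho$, their witnessing intervals along $\rho$ cannot cross. Indeed, if they crossed, say in the cyclic order $v_{i_1},v_{i_2},v_{j_1},v_{j_2}$ along $\rho$, then the closed walk formed by $e_k$ together with its two tails and the sub-path $\rho[v_{i_k},v_{j_k}]$ bounds a region $R_k$, and a crossing forces $R_1$ and $R_2$ to overlap only partially, so their boundary walks would have to cross transversally---impossible in a planar embedding, where two walks can meet only along shared vertices and edges. (One makes this precise by replacing the tails with canonical shortest paths of $G$ and invoking that a shortest path of $G_{\opt,\eps}$ cannot leave and re-enter a fixed region of the embedding.) Hence, after discarding the lighter of the two sides of $\rho$---an $O(1)$ loss---the surviving witnessing intervals on $\rho$ form a laminar family. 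We then use the no-crossing property, together with the near-tightness established above, to \emph{align} the paths $\{\gamma_e:e\in E(H)\}$ into a family $\mathcal R$ of shortest paths of $G_{\opt,\eps}$ of total weight $O(\wts(G_{\opt,\eps}))$ onto which a constant fraction of $\wts(H)$ hangs; charging this hanging weight against $\sum_{\rho\in\mathcal R}\wts(\rho)$ and picking the densest $\rho^*\in\mathcal R$ yields $\wts(F^*)\ge\tfrac{\alpha}{6}\wts(\rho^*)$, where $F^*$ is the laminar set of edges hanging on $\rho^*$ on the heavier side. Items (1) and (2) of \Cref{lm:struct} then hold by construction and item (3) is the laminarity just shown; the constant $6$ absorbs the losses---a factor $2$ for the side and the remaining slack from $\kappa=\tfrac23$ and from the alignment.

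The step I expect to be the main obstacle is the no-crossing claim together with its use in the alignment. The tails are shortest paths of $G$ rather than of $G_{\opt,\eps}$, so the closed walks bounding $R_1$ and $R_2$ are not sub-curves of $\rho$ or of $G_{\opt,\eps}$ and may overlap one another in intricate ways; turning ``partially overlapping regions force their boundary walks to cross'' into a rigorous combinatorial statement about walks in a planar embedding---while simultaneously making the side-classification well defined and controlling the weight lost in the alignment---is where the real work lies. This is precisely the point at which planarity is indispensable: the analogous Euclidean argument of \cite{le2024towards} is built on geometric objects (ellipsoids, ``helper'' edges) that have no counterpart in planar graphs, and no purely metric argument is known to replace it.
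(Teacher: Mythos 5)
Your proposal correctly identifies the easy first step (each edge $e=(a,b)\in E(H)$ is $\tfrac23$-hanging --- in fact $1$-hanging --- at its own endpoints on its shortest path $\gamma_e$ in $G_{\opt,\eps}$), correctly identifies the final averaging step, and correctly senses that planarity must supply a non-crossing/laminar structure. But the core of the lemma is missing: you never construct the family $\mathcal R$ of shortest paths with $\sum_{\rho\in\mathcal R}\wts(\rho)=O(\wts(G_{\opt,\eps}))$ onto which essentially all of $\wts(H)$ hangs; you only name this step ``alignment'' and defer it, acknowledging it as the main obstacle. Without that weight bound the averaging is vacuous: taking $\mathcal R=\{\gamma_e : e\in E(H)\}$ gives $\sum_{\rho\in\mathcal R}\wts(\rho)\approx\wts(H)$ because the paths $\gamma_e$ overlap heavily, so the densest path has density $O(1)$ rather than $\Omega(\alpha)$. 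The no-crossing property you formulate (witnessing intervals of two edges on a single path $\rho$ cannot cross) only yields item (3); it does not by itself compress the path family, which is where the factor $\alpha/6$ comes from.

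The paper closes exactly this gap by invoking planarity one level higher: for each $e\in E(H)$ it defines the region $R_e$ bounded by $e$ and $\gamma_e$, and observes that the regions $\{R_e\}$ form a laminar family (shortest paths and edges do not cross in the embedding), giving a tree $\mathcal T$. The family $\Gamma$ is then built by a depth-first traversal of $\mathcal T$: when visiting $R_e$, let $R_{e_0}$ be the lowest already-selected ancestor and $\eta=\gamma_e\cap\gamma_{e_0}$ (a single subpath, by uniqueness of shortest paths); $\gamma_e$ is added to $\Gamma$ only if $\wts(\eta)<\tfrac23\wts(e)$, since otherwise $e$ already $\tfrac23$-hangs on the ancestor's path via $\eta$. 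The bound $\wts(\Gamma)\le 6\,\wts(G_{\opt,\eps})$ is proved by a potential function that gives each edge $f$ of $G_{\opt,\eps}$ a charge of $3\wts(f)$ for each of its two incident faces not yet enclosed by a selected region; adding $\gamma_e$ newly encloses one face per edge of $\gamma_e\setminus\eta$, and the released potential $3(\wts(\gamma_e)-\wts(\eta))>\wts(\gamma_e)$ pays for the new path. The same tree structure yields item (3) for free, since all edges assigned to $\rho^*$ have regions nested inside $R_{\rho^*}$ and hence hang on one side of $\rho^*$ (no factor-$2$ side-discarding is needed). These ingredients --- the laminar regions, the DFS with the $\tfrac23$ intersection threshold, and the face-charging potential --- constitute the actual proof, and none of them appears in your sketch.
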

By the first guarantee, the edges in $F^*$ are $2/3$-hanging; as discussed above, which means that we can remove them from $H$ without incurring much stretch. The second guarantee implies that the set $F^*$ of removable edges is heavy with respect to the weight of the edges in $\rho^*$, and hence removing the edges in $F^*$ from the spanner $H$ and adding the edges in $\rho$ in their place results in a significant reduction to the weight of $H$.
The third guarantee, namely the laminar family structure, is crucial for obtaining an efficient dynamic programming algorithm to compute an approximation of $\rho^*$. Here, we heavily exploit the \EMPH{planarity} of the input graph in establishing the laminar property.

\paragraph{Dynamic programming.} 
 The structural property, which asserts the {\em existence} of pruning pairs as stated in \Cref{lm:struct}, does not immediately lead to an efficient pruning procedure; indeed, in our existential proof, the structure of critical paths will depend on  $G_{\opt,\eps}$, which is of course unknown. So the algorithmic goal is to find an approximate shortest path $\rho$ between some vertex pair $s, t$ such that a large amount of edges in $H$ could be hanging on $\rho$ in some ladder-like manner, similarly to the illustration of \Cref{fig:manyladders}. For this task, we will adopt a dynamic programming approach which is fairly natural: we will maintain two tables (this is only an informal description, and some details in the main algorithm are omitted here)
$$\{\rho[s, t, L]: s, t\in V, L\leq (1+\eps)\,\dist_G(s, t)\}$$
$$\{P[s, t, L] : s, t\in V, L\leq (1+\eps)\,\dist_G(s, t)\},$$
where $\rho[s, t, L]$ will be an approximate shortest path (not necessarily a simple path) between $s$ and $t$, and $P[s, t, L]\subseteq E(H)$ will be a set of edges which can $\frac{1}{3(1+\eps)}$-hang on $\rho[s, t, L]$; note that the hanging parameter now degrades by roughly a factor of 2, from $\frac23$ to $\frac{1}{3(1+\eps)}$, which is needed for technical reasons. The transition rule of the dynamic programming table would be to find the best intermediate vertex $z$ and $1\leq L' < L$ so that the concatenated path $\rho[s, z, L']\circ \rho[z, t, L-L']$ collects the heaviest possible edge set.

The main technical issue here is that the two removable edge sets, $P[s, z, L']$ and $P[z, t, L-L']$, are usually not disjoint, and so $\wts(P[s, z, L']\cup P[z, t, L-L'])$ could be much smaller than $\wts(P[s, z, L']) + \wts(P[z, t, L-L'])$. If we take the intermediate vertex $z$ and $L'$ that maximize the total weight of the union $P[s, z, L']\cup P[z, t, L-L']$, then we could suffer a significant under-estimation of the total weight of edges removable by the best approximate shortest path between $s$ and $t$, especially when $P[s, z, L']$ and $P[z, t, L-L']$ have large intersections. Hence, ultimately, the ratio $\frac{\wts(P[s, t, L])}{\wts(\rho[s, t, L])}$ may be significantly smaller than $\frac{\wts(F^*)}{\wts(\rho^*)}$.

\begin{figure}[!htb]
    \centering
    \begin{tikzpicture}[scale=1.5]

    \node[circle, fill=black, inner sep=1.5pt, label=below:$s$] (s) at (0,0) {};
    \node[circle, fill=black, inner sep=1.5pt, label=below:$u_1$] (c1) at (2,0) {};
    \node[circle, fill=black, inner sep=1.5pt, label=below:$v_1$] (d1) at (2.5,0) {};
    \node[circle, fill=black, inner sep=1.5pt, label=below:$u_2$] (c2) at (4,0) {};
    \node[circle, fill=black, inner sep=1.5pt, label=below:$v_2$] (d2) at (4.5,0) {};
    \node[circle, fill=black, inner sep=1.5pt, label=below:$u_9$] (c3) at (6,0) {};
    \node[circle, fill=black, inner sep=1.5pt, label=below:$v_9$] (d3) at (6.5,0) {};
    \node[circle, fill=black, inner sep=1.5pt, label=below:$t$] (t) at (8.5,0) {};

    \node[circle, fill=black, inner sep=1.5pt, label=above:$x$] (x) at (3.5,1) {};
    \node[circle, fill=black, inner sep=1.5pt, label=above:$y$] (y) at (5,1) {};

    \node[color=orange] at (4.25, 1.3) {$f$};

    \draw[thick, red] (s) -- (c1) -- (d1) -- (c2) -- (d2);
    
    \draw[thick, red] (c3) -- (d3) -- (t);

    \draw[thick, red, dashed] (d2) -- (c3);

    \draw[thick, color=orange] (x) -- (y);
    
    \draw[thick] (x) to[out=200, in=45] (c1);
    \draw[thick] (y) to[out=220, in=30] (d1);
    \draw[thick] (x) to[out=-70, in=120] (c2);
    \draw[thick] (y) to[out=-110, in=60] (d2);
    \draw[thick] (x) to[out=-40, in=150] (c3);
    \draw[thick] (y) to[out=-20, in=135] (d3);
    
\end{tikzpicture}
    \caption{In this figure, the red path is $\rho[s,t,L]$, and $f = (x, y)$ appears many times in the multi-set $P[s, t, L]$. Then, we can show that $f$ can hang on the path $\rho[s, t, L]$ at multiple positions, say $(u_1, v_1), (u_2, v_2),\ldots ,(u_9, v_9)$. Then, we can replace the sub-path $\rho[u_1, v_9]$ with a shortcut $u_1\rightarrow (x, y)\rightarrow v_9$, which reduces the total weight of $\rho$ by at least $\wts(f)$.}
    \label{fig:overview-shortcut}
\end{figure}
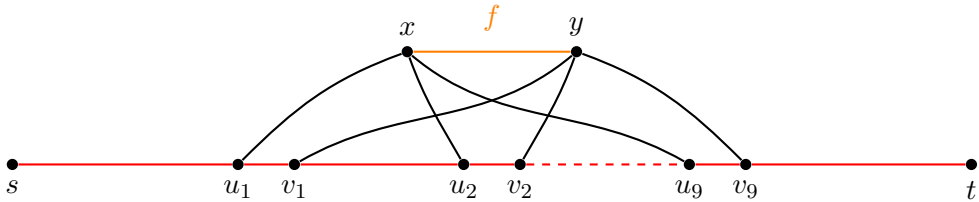

To resolve this issue, we will instead maximize the \EMPH{union with multiplicities}, $P[s, z, L']\uplus P[z, t, L-L']$, whose total weight is $\wts(P[s, z, L']) + \wts(P[z, t, L-L'])$. By using such an {\em over-estimation}, we will be able to \emph{lower bound} $\frac{\wts(P[s, t, L])}{\wts(\rho[s, t, L])}$ with $\frac{\wts(F^*)}{\wts(\rho^*)}$. 
It turns out that we can handle multi-set union and the consequent over-estimation much more effectively than an ordinary set union and the consequent under-estimation. 
The issue with an over-estimation is that when we replace the edge set $P[s, t, L]$ with the path $\rho[s, t, L]$, the total weight of $H$ usually does not decrease by the amount of $\wts(P[s, t, L])$, since $P[s, t, L]$ is a multi-set. To lower bound the total weight of edges that we can actually prune, we will upper bound the multiplicities of most edges 
in the multi-set $P[s, t, L]$ by a large constant (say $20$), crucially relying on the fact that $\rho[s, t, L]$ is an approximate shortest path. Roughly speaking, imagine that some edge $f\in P[s, t, L]$ has a high multiplicity, then we can show that $f$ must hang on the path $\rho[s, t, L]$ at many different positions. In this case, there exists a shortcut on $\rho[s, t, L]$ through the edge $f$ that can reduce the total weight 
of $\rho[s, t, L]$ by at least $\wts(f)$. Since we know that $\wts(\rho[s, t, L])$ is at most $(1+\eps)\dist_G(s, t)$ beforehand, we can upper bound the total weight 
improvements due to shortcuts by $\eps\cdot\dist_G(s, t)$, which, in turn, yields the required constant upper bound on the overall 
multiplicities of edges in $P[s, t, L]$. See \Cref{fig:overview-shortcut} for an illustration.

\section{Preliminaries}

For every integer $x\geq 1$, let $\floor{x}_2$ be the largest integer power of $2$ not exceeding $x$ (that is, $\floor{x}_2=2^{\floor{\log_2 x}}$). Let $G= (V, E, \wts)$ be an undirected weighted planar graph, where $\wts: E\rightarrow \{1, 2,\ldots, W\}$ is an integral edge weight function. Let $\epsilon > 0$ be the input stretch parameter. We assume that shortest paths are unique in both $G$ and $G_{\opt, \epsilon}$ by breaking ties lexicographically.

For any $s, t\in V$, let $\dist_{G}(s, t)$ denote the length of the shortest path between $s$ and $t$ in $G$. For any (not necessarily simple) path $\rho = \langle v_1, v_2, \ldots, v_k\rangle$ in $G$ and any pair of indices $1\leq i < j\leq k$, let $P[v_i, v_j]$ be the sub-path of $P$ between $v_i$ and $v_j$. For any two (not necessarily simple) paths $\rho_1$ and $\rho_2$, let $\rho_1\circ \rho_2$ be their concatenation if they share one endpoint. For any subgraph $H$ of $G$, let $\wts(H) = \sum_{e\in E(H)}\wts(e)$ be the total weight of edges in $H$. For any multi-set of edges $F$, let $\wts(F)$ be the total weight of edges in $F$. For two multi-sets $F_1$ and $F_2$, let $F_1\uplus F_2$ denote the multi-set union of $F_1$ and $F_2$ by summing the multiplicities of each element.

\section{Pruning Planar Light Spanners}
\label{sec:pruning}

Throughout this section, we assume that $W < n^2 / \eps$; in the end we will show how to deal with general cases. In order to prove \Cref{thm:main}, our main technical contribution is a pruning algorithm, as summarized in the following statement.
\begin{theorem}\label{prune}
	Take two parameters $\epsilon,\delta>0$ such that $\eps \leq \min\{10^{-2}, \delta\}$. Let $G = (V, E, \wts)$ be an undirected planar graph with positive integral edge weights $\wts: E\rightarrow \mathbb{N}_+$, 
    and let $H$ be a $(1+\delta)$-spanner of $G$ such that $\theta := \wts(H) / \wts(G_{\opt, \epsilon}) \geq \Omega(1)$.   Then one can compute, in polynomial time, two sets of edges, $F^\nw\subseteq E$ and $F^\old\subseteq E(H)$, such that the following holds:
	\begin{enumerate}[(1)]
         \item the stretch of graph $H_1 = F^\nw\cup (H\setminus F^\old)$ is at most $1+O(1)\cdot\delta$, and
         \item $\wts(H_1)\leq O(\log\theta)\cdot \wts(G_{\opt,\eps})$.
	
	\end{enumerate}
\end{theorem}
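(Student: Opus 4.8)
I would prove \Cref{prune} by iterative pruning. Set $H^{(0)}:=H$; in round $i$ I locate a \emph{pruning pair} $(\rho_i,F_i)$, where $\rho_i$ is a (short) path in $G$ and $F_i\subseteq E(H)$ is a set of original spanner edges, pairwise disjoint from all earlier $F_j$ and from $\bigcup_{j\le i}E(\rho_j)$, each of which is $\kappa$-hanging on $\rho_i$ for a fixed constant $\kappa\approx\frac13$; I set $H^{(i)}:=E(\rho_i)\cup\bigl(E(H^{(i-1)})\setminus F_i\bigr)$ — so inserted paths are never later pruned — and stop once $\theta^{(i)}:=\wts(H^{(i)})/\wts(G_{\opt,\eps})$ drops below $\Theta(\log\theta)$ (if $\theta=O(1)$ already, return $H$). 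The answer is $F^\nw:=E(H_1)\setminus E(H)$ and $F^\old:=E(H)\setminus E(H_1)$ for the final $H_1$, which automatically satisfy $F^\nw\subseteq E$, $F^\old\subseteq E(H)$ and $H_1=F^\nw\cup(H\setminus F^\old)$. The \emph{existence} of a good pruning pair in round $i$ is exactly \Cref{lm:struct}, applied to $H^{(i-1)}$: it yields a path $\rho^*$ and $F^*\subseteq E(H^{(i-1)})$ with each edge $\frac23$-hanging on $\rho^*$, $\wts(F^*)\ge\frac{\theta^{(i-1)}}{6}\wts(\rho^*)$, and the hanging sub-intervals forming a laminar family; discarding from $F^*$ the few edges lying on earlier paths or earlier $F_j$ costs only $O(\log\theta)\cdot\wts(G_{\opt,\eps})$ of weight (the total inserted so far), which is lower order while the procedure runs, so a pair of essentially the same quality survives.

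\textbf{The dynamic program.} The algorithmic heart is to compute, in $\poly(n,1/\eps)$ time, a pruning pair matching \Cref{lm:struct} up to constants. Maintain tables $\rho[s,t,L]$ and $P[s,t,L]$ over $s,t\in V$ and integers $L\le(1+\eps)\dist_G(s,t)$ — only $\poly(n,1/\eps)$ indices since $W<n^2/\eps$ — where $\rho[s,t,L]$ is a (not necessarily simple) $s$--$t$ walk of weight at most $L$ and $P[s,t,L]$ is a \emph{multiset} of allowed edges of $H^{(i-1)}$ that $\frac1{3(1+\eps)}$-hang on it; the transition concatenates $\rho[s,z,L']\circ\rho[z,t,L-L']$ and takes the multiset union $P[s,z,L']\uplus P[z,t,L-L']$, maximizing its weight over $z$ and $L'$. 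Working with multiset unions (an over-estimate of the prunable weight) is what lets us \emph{lower-bound} the DP value: the laminar hanging intervals of $F^*$ on $\rho^*$ induce a recursive split of $\rho^*$ at split vertices that the transition can mimic — an edge whose interval lies inside a part stays hanging there, and an edge straddling a split vertex $z$ still $\frac1{3(1+\eps)}$-hangs (using the triangle inequality) on at least one part at $z$ — so the entry at $\bigl(s^*,t^*,\wts(\rho^*)\bigr)$ (a valid index since $G_{\opt,\eps}$ is a $(1+\eps)$-spanner) has $\wts\bigl(P[s^*,t^*,\wts(\rho^*)]\bigr)\ge\wts(F^*)\ge\frac{\theta^{(i-1)}}{6}\wts(\rho^*)$. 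I take the entry $(s,t,L)$ with the best ratio $\wts(P[s,t,L])/L$, and then cap multiplicities: if an edge $f$ occurs more than a constant number of times in $P:=P[s,t,L]$, it $\frac1{3(1+\eps)}$-hangs on $\rho:=\rho[s,t,L]$ at many disjoint positions, and splicing $f$ across the outermost two of them shortcuts $\rho$ and drops its weight by at least $\wts(f)$; as $\rho$ began with weight $\le(1+\eps)\dist_G(s,t)$ and never drops below $\dist_G(s,t)$, all such shortcuts remove only $O(\eps\cdot\wts(\rho))$ of weight, negligible against $\wts(P)\ge\frac{\theta^{(i-1)}}{6}\wts(\rho)$. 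The distinct edges $F$ of the capped multiset then satisfy $\wts(F)=\Omega\bigl(\theta^{(i-1)}\bigr)\cdot\wts(\rho)$, and $(\rho,F)$ is the pruning pair of round $i$ (restricted to the allowed edge set).

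\textbf{Stretch.} I must show the final $H_1$ has stretch $1+O(\delta)$, with the constant \emph{not} growing with the number of rounds. It suffices to prove $\dist_{H_1}(a,b)\le(1+\beta_1)\wts(a,b)$ for every $(a,b)\in E(H)$ and a universal $\beta_1=O(\delta)$, since then $\dist_{H_1}(u,v)\le(1+\beta_1)\dist_H(u,v)\le(1+\beta_1)(1+\delta)\dist_G(u,v)$ for all $u,v$ because $H$ is a $(1+\delta)$-spanner of $G$. I prove the edge claim by induction on the integer $\wts(a,b)$. If $(a,b)$ was never pruned, it is in $H_1$ and we are done. Otherwise $(a,b)\in F_i$ is $\kappa$-hanging ($\kappa=\frac1{3(1+\eps)}$) at some $(v',v'')$ on $\rho_i$, and $\rho_i\subseteq H_1$ gives $\dist_{H_1}(v',v'')\le\wts(\rho_i[v',v''])$; from $\dist_G(a,v')+\wts(\rho_i[v',v''])+\dist_G(v'',b)\le(1+\eps)\wts(a,b)$ and $\wts(\rho_i[v',v''])\ge\kappa\wts(a,b)$ we get $\dist_G(a,v'),\dist_G(v'',b)\le(1-\kappa+\eps)\wts(a,b)$, so a shortest $H$-path from $a$ to $v'$ (resp.\ from $v''$ to $b$) uses only edges of $E(H)$ of weight at most $(1+\delta)(1-\kappa+\eps)\wts(a,b)<\wts(a,b)$ (as $\kappa\approx\frac13$ and $\delta,\eps\le10^{-2}$), to which the induction hypothesis applies, giving $\dist_{H_1}(a,v')\le(1+\beta_1)(1+\delta)\dist_G(a,v')$ and likewise for $v''$--$b$. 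With $A:=\dist_G(a,v')+\dist_G(v'',b)$ and $B:=\wts(\rho_i[v',v''])$, the triangle inequality yields $\dist_{H_1}(a,b)\le(1+\beta_1)(1+\delta)A+B$; substituting $A\le(1+\eps)\wts(a,b)-B$ and $B\ge\kappa\wts(a,b)$ reduces the desired inequality to $1+\beta_1\ge\kappa/\bigl(1-(1+\delta)(1+\eps-\kappa)\bigr)$, whose right side is $1+\Theta(\delta+\eps)=1+O(\delta)$ once we use $\eps\le\delta$. The crucial feature is that every recursive call is to an edge of the \emph{fixed} original spanner $H$ of strictly smaller integer weight, so $\beta_1$ never compounds with the number of rounds.

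\textbf{Weight, running time, and the main obstacle.} For the weight bound I group rounds into phases, phase $k$ being the rounds in which $\theta^{(i)}\in[2^k,2^{k+1})$. The weight ever removed during phase $k$ is at most the spanner weight at its start, $<2^{k+1}\wts(G_{\opt,\eps})$, and since $\theta^{(i-1)}=\Omega(1)$ forces $\wts(\rho_i)\le\frac12\wts(F_i)$, the net decrease is at least $\frac12\sum_{i\in\text{phase }k}\wts(F_i)$; hence $\sum_{i\in\text{phase }k}\wts(F_i)\le2^{k+2}\wts(G_{\opt,\eps})$ and $\sum_{i\in\text{phase }k}\wts(\rho_i)=O(2^{-k})\sum_{i}\wts(F_i)=O(\wts(G_{\opt,\eps}))$. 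Over $O(\log\theta)$ phases the inserted paths weigh $O(\log\theta)\cdot\wts(G_{\opt,\eps})$ in total, and since we stop once $\theta^{(i)}=O(\log\theta)$ the surviving original edges weigh $O(\log\theta)\cdot\wts(G_{\opt,\eps})$ as well, so $\wts(H_1)=O(\log\theta)\cdot\wts(G_{\opt,\eps})$. Each DP pass runs in $\poly(n,1/\eps)$ time and each round removes at least one edge of $E(H)$, so the procedure is polynomial. I expect the genuine difficulties to be: (i) \Cref{lm:struct} itself — extracting, past the averaging obstruction, a \emph{single} short path carrying $\Omega(\theta)$ times its weight in $\frac23$-hanging edges with laminar hanging intervals, the step where planarity is used essentially; and (ii) making the DP-versus-structural-lemma comparison airtight, in particular the joint control of the laminar decomposition, the multiset over-counting, and the multiplicity-capping shortcut so that only a constant fraction of $\wts(F^*)$ is lost; the disjointness bookkeeping noted in the first paragraph is routine but must be threaded carefully so that the structural guarantee is preserved under the restrictions.
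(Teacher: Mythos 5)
Your architecture coincides with the paper's: iterative pruning driven by an $(s,t,L)$-indexed dynamic program over \emph{multisets} of hanging edges, a structural lemma supplying the pruning pair, multiplicity capping via shortcuts, and a phase-by-phase weight analysis. Your stretch argument (induction on the integer weight of original spanner edges, with recursive calls only to strictly lighter edges of the fixed $H$) is a clean alternative to the paper's potential-function transformation and is sound. The most concrete gap is in the DP. As you describe it, the transition is a pure multiset union $P[s,z,L']\uplus P[z,t,L-L']$, so an edge can enter $P[s,t,L]$ only at base cases, while the edges of $F^*$ hang at laminar intervals of \emph{all} scales. Pushing an edge hanging at a coarse interval down to one of the two halves degrades its hanging parameter by up to a factor of $2$ at \emph{each} level of the recursion, so after $\omega(1)$ levels it is no longer $\tfrac{1}{3(1+\eps)}$-hanging anywhere, and the claimed inequality $\wts\bigl(P[s^*,t^*,\wts(\rho^*)]\bigr)\ge\wts(F^*)$ fails. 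If instead you collect the set $B[s,t]$ of edges hanging at the whole interval at every internal node, the lower bound is restored but your capping step breaks: the assertion that a high-multiplicity edge ``hangs at many disjoint positions'' is false, since occurrences contributed by nested DP nodes are nested, and nested occurrences admit no shortcut. The paper threads this needle by adding $B[s,t]$ at an internal node only when $\max\{L',L-L'\}<\floor{L}_2$; this gating both guarantees (after the reassignment of each edge of $F^*$ to a suitable descendant node, which is the bulk of Lemma~\ref{lowerbd}) that every edge of $F^*$ is still collected somewhere, and forces any nested chain of occurrences of one edge to span strictly decreasing powers of two of path length, hence to have depth at most $3$ (Lemma~\ref{nest-depth}); only then does the shortcut argument, applied to the remaining disjoint occurrences, yield the round guarantee $\wts(F_i)=\Omega(\theta^{(i-1)})\cdot\wts(\rho_i)$. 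This is precisely the ``joint control'' you defer, and without it the weight analysis has no valid per-round bound.

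The second gap is that \Cref{lm:struct} is itself part of the proof of \Cref{prune} (it is restated and proved as Lemma~\ref{struct}), and you invoke it as a black box. Its proof is where planarity enters: the regions bounded by each spanner edge and its shortest $G_{\opt,\eps}$-path form a laminar family, a depth-first traversal of the resulting tree selects a subfamily $\Gamma$ of shortest paths on which every remaining edge $\tfrac23$-hangs, and a face-counting potential shows $\wts(\Gamma)\le 6\,\wts(G_{\opt,\eps})$, after which averaging produces the pair $(\rho^*,F^*)$ with laminar hanging intervals. You correctly flag this as the essential difficulty but do not supply it. The disjointness bookkeeping and phase accounting you describe are routine and match the paper.
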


\noindent Our main theorem follows immediately by a successive application of \Cref{prune}. 
\begin{proof}[Proof of \Cref{thm:main}]
	Starting with $H$ being the greedy light $(1+\epsilon)$-spanner from \cite{althofer1993sparse} of weight $O(1/\epsilon)\cdot\wts(G_\mst)\leq O(1/\epsilon)\cdot\wts(G_{\opt, \epsilon})$, we successively apply \Cref{prune} and update $H\leftarrow H_1$ for $O(\log^* (1/\epsilon))$ iterations. At the end, $H$ is a $\brac{1+\epsilon\cdot 2^{O\brac{\log^* (1/\epsilon)}}}$-spanner of weight $O(1)\cdot \wts(G_{\opt, \epsilon})$.
\end{proof}
The rest of this section is dedicated to the proof of \Cref{prune}. In~\Cref{ssec:alg}, we present a pruning algorithm that finds the edge sets $F^\nw\subseteq E$ and $F^\old\subseteq E(H)$, followed by the weight and stretch analyses establishing properties (1) and (2) in~\Cref{ssec:analysis1}. 

\subsection{Description of the Pruning Algorithm and Runtime Analysis}
\label{ssec:alg}

\begin{definition}\label{hang}
	Consider any (not necessarily simple) path $\rho = \langle v_1, v_2, \ldots, v_k\rangle$ in $G$ and any edge $(a, b)\in E(H)$. We say that edge $(a, b)$ can $\kappa$-\emph{hang} on the path $\rho$ if there exists a pair of vertices $v_i, v_j$ with $i<j$ such that
	\begin{enumerate}[(1)]
		\item $\wts\brac{\rho[v_i, v_j]}\geq \kappa\cdot\wts(a, b)$;
		\item $\dist_G(a, v_i) + \wts(\rho[v_i, v_j]) + \dist_G(v_j, b)\leq (1+\epsilon)\cdot \wts(a, b)$.
	\end{enumerate}
	The vertex pair $(v_i, v_j)$ will be called the $\kappa$\emph{-hanging} points of $(a, b)$ on $\rho$; or equivalently, we say that $(a, b)$ is \emph{$\kappa$-hanging} at $(v_i, v_j)$ on $\rho$.
\end{definition}

Initially, set $F^\nw\leftarrow \emptyset$ and $F^\old \leftarrow \emptyset$. We will repeatedly find an approximate shortest path $\rho$ on which a large number of edges in $P\subseteq E(H)\setminus (F^\nw\cup F^\old)$ can $\frac{1}{3(1+\eps)}$-hang:  
In each round, we prune these $\frac{1}{3(1+\eps)}$-hanging edges in $P$ by setting $F^\old\leftarrow F^\old \cup P$, and then add the path $\rho$ to the spanner $H_1$ by setting
$F^\nw\leftarrow F^\nw \cup E(\rho)$. We expect that adding $E(\rho)$ and removing $P$ would significantly reduce the total weight while approximately preserving distance stretch.

To find such a good approximate shortest path $\rho$ in each round, we will rely on a dynamic programming approach. The dynamic programming procedure maintains three tables, described in the following.
\begin{framed}
	\noindent A dynamic programming table for triples $(s, t, L)$, where $s,t\in V(G)$ and $L\leq (1+\epsilon)\cdot\dist_G(s, t)$ is a positive integer.
	\begin{itemize}[leftmargin=*]
		\item A table of paths $\rho[s, t, L]$, for all $(s, t, L)\in V\times V\times [2nW]$ such that $L\leq (1+\epsilon)\cdot\dist_G(s, t)$., where $\rho[s, t, L]$ is a (not necessarily simple) path between $s$ and $t$ in $G$ such that $\wts\brac{\rho[s, t, L]} = L$. 

		\item A table of edge multi-sets $P[s, t, L]\subseteq E(H)\setminus (F^\nw\cup F^\old)$, for all $(s, t, L)\in V\times V\times [2nW]$ such that $L\leq (1+\epsilon)\cdot\dist_G(s, t)$, and each edge $e\in P[s, t, L]$ is $\frac{1}{3(1+\eps)}$-hanging on the path $\rho[s,t,L]$.
		\item A table of edge weight sums $\DP[s, t, L]\in \mathbb{N}^+$, for all $(s, t, L)\in V\times V\times [2nW]$ such that $L\leq (1+\epsilon)\cdot \dist_G(s, t)$ and $\DP[s, t, L] = \wts\brac{P[s, t, L]}$.
	\end{itemize}
\end{framed}

Intuitively, our goal is to find an approximate shortest path $\rho[s, t, L]$ on which an edge set $P[s, t, L]\subseteq E(H)\setminus (F^\nw\cup F^\old)$ of large total weight can be $\frac{1}{3(1+\eps)}$-hanging. Due to technical reasons, we approximate the charging sets by multi-sets (rather than ordinary sets).

We compute the entries of the three tables as follows.
For every pair of vertices $s, t\in V$, let $\pi_{s, t}$ be the shortest path between $s$ and $t$ in $G$. First, compute the (non-multi) set $B[s, t]$ of all edges $e\in E(H)\setminus (F^\nw\cup F^\old)$ that is $\frac{1}{3(1+\eps)}$-hanging at $(s, t)$ on $\pi_{s, t}$; this computation can be done in $O(n^3)$ time. 

Initialize $\rho[s,t,\dist_G(s, t)]\leftarrow \pi_{s,t}$ (i.e., a shortest $st$-path in $G$), $P[s, t, L] \leftarrow B[s,t]$, and $\DP[s, t, L] \leftarrow \wts\brac{B[s, t]}$ for any $L\leq (1+\eps)\dist_G(s, t)$. Next, go over every length parameter $L = 1, 2, \ldots, nW$. For every pair $(s, t)$ such that $L\leq (1+\epsilon)\cdot \dist_G(s, t)$, we choose a via point $z\in V$ and $0\leq L' < L$ by setting 
$$(z, L') = \arg\max_{(z, L')} \left\{\DP[s, z, L'] + \DP[z, t, L-L'] + \mathbf{1}\left[\max\{L', L-L'\} < \floor{L}_2\right]\cdot \wts\brac{B[s, t]}
\right\},$$
and then assign 
\begin{align*}
\DP[s, t, L] &\leftarrow \DP[s, z, L'] + \DP[z, t, L-L'] + \mathbf{1}\left[\max\{L', L-L'\} < \floor{L}_2\right]\cdot \wts\brac{B[s, t]},\\
P[s, t, L] & \leftarrow 
\begin{cases}
	P[s, z, L'] \uplus P[z, t, L-L']	& \mbox{\rm if }	\max\{L', L-L'\} \geq \floor{L}_2\\
	P[s, z, L'] \uplus P[z, t, L-L'] \uplus B[s, t]	&	\mbox{\rm if }\max\{L', L-L'\} < \floor{L}_2,
\end{cases} \\
\rho[s, t, L] & \leftarrow \rho[s, z, L']\circ \rho[z, t, L-L'].
\end{align*}
Intuitively, to find the best path between $s, t$, we identify the best via point $z'$ along with a length parameter $L'$ and build the path $\rho[s, t, L]$ as the concatenation of paths $\rho[s, z, L']$ and $\rho[z, t, L-L']$, and entries $P[s, t, L], \DP[s, t, L]$ keep track of the  pruned sets and weights associated with path $\rho[s, t, L]$.

After all entries in the dynamic programming table have been computed, which takes $O(n^5W^2) = n^{O(1)}$ time, find the triple $(s^*, t^*, L^*)$ which maximizes the ratio 
$$\beta = \frac{\DP[s^*, t^*, L^*] }{\wts\brac{\rho[s^*, t^*, L^*]}}.$$ 
If $\beta \geq 1$, then update $F^\nw\leftarrow F^\nw\cup E(\rho[s^*, t^*, L^*])$ and $F^\old\leftarrow F^\old\cup P[s^*, t^*, L^*]$, and start a new round (recomputing the dynamic programming tables); otherwise our pruning algorithm terminates and returns $$H_1 = F^\nw\cup (H\setminus F^\old)$$
The whole algorithm is summarized below as \Cref{alg-prune}.

\begin{algorithm}
    \caption{Prune a $(1+\delta)$-spanner $H\subseteq G$}\label{alg-prune}
    initialize $F^\nw, F^\old \leftarrow \emptyset$\;
    \While{true}{
        compute $B[s,t]\subseteq E(H)\setminus (F^\nw\cup F^\old)$ which are edges that is $\frac{1}{3(1+\eps)}$-hanging at $(s, t)$ on $\pi_{s, t}$\;
        initialize $\rho[s, t, \dist_G(s, t)]\leftarrow \pi_{s, t}, P[s, t, L] \leftarrow B[s,t], \DP[s, t, L] \leftarrow \wts\brac{B[s, t]}, \forall L\leq (1+\eps)\dist_G(s, t)$\;
        \For{$L = 1, 2, \ldots, nW$}{
            \For{$(s, t)$ such that $L\leq (1+\epsilon)\cdot \dist_G(s, t)$}{
                choose a vertex $z\in V$ and $0\leq L' < L$ such that:
                $(z, L') \leftarrow \arg\max_{(z, L')} \left\{\DP[s, z, L'] + \DP[z, t, L-L'] + \mathbf{1}\left[\max\{L', L-L'\} < \floor{L}_2\right]\cdot \wts\brac{B[s, t]}
\right\}$\;
                $\DP[s, t, L] \leftarrow \DP[s, z, L'] + \DP[z, t, L-L'] + \mathbf{1}\left[\max\{L', L-L'\} < \floor{L}_2\right]\cdot \wts\brac{B[s, t]}$\;
                $P[s, t, L] \leftarrow 
\begin{cases}
	P[s, z, L'] \uplus P[z, t, L-L']	& \mbox{\rm if }	\max\{L', L-L'\} \geq \floor{L}_2\\
	P[s, z, L'] \uplus P[z, t, L-L'] \uplus B[s, t]	&	\mbox{\rm if }\max\{L', L-L'\} < \floor{L}_2,
\end{cases}$ \;
                $\rho[s, t, L]  \leftarrow \rho[s, z, L']\circ \rho[z, t, L-L']$\;
            }
        }
        find the triple $(s^*, t^*, L^*)$ which maximizes the ratio $\beta = \frac{\DP[s^*, t^*, L^*] }{\wts\brac{\rho[s^*, t^*, L^*]}}$\;
        \If{$\beta\geq 1$}{
            update $F^\nw\leftarrow F^\nw\cup E(\rho[s^*, t^*, L^*])$, $F^\old\leftarrow F^\old\cup P[s^*, t^*, L^*]$\;
        }
        \Else{
            break\;
        }
    }
    \Return $H_1 \leftarrow F^\nw\cup (H\setminus F^\old)$\;
\end{algorithm}

\subsection{Weight Analysis}
\label{ssec:analysis1}

Let us begin with some basic properties of the dynamic programming scheme.

\begin{lemma}\label{dyn-basic}
    Throughout the course of the dynamic programming algorithm, for any triple $(s, t, L)$, $L\leq (1+\eps)\dist_G(s, t)$, the value of $\DP[s, t, L]$ is non-decreasing, and $\DP[s, t, L]\geq \wts(B[s, t])$.
\end{lemma}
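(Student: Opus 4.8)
The plan is to reduce both assertions to a single claim about the transition rule. Within one execution of the dynamic‑programming routine, each entry $\DP[s,t,L]$ with $L\le(1+\eps)\dist_G(s,t)$ is assigned a value exactly twice: once at initialization, where it is set to $\wts(B[s,t])$, and once more while the outer loop processes this particular length $L$, where it is overwritten by
\[
\max_{(z,L')}\Big\{\DP[s,z,L']+\DP[z,t,L-L']+\mathbf{1}\big[\max\{L',L-L'\}<\floor{L}_2\big]\cdot\wts(B[s,t])\Big\}
\]
(entries with $L>nW$ are never overwritten and simply keep their initial value). So it suffices to show that this $\max$ is always at least $\wts(B[s,t])$: then the lone overwrite never pushes the entry below its initial value, which gives both the monotonicity and the lower bound simultaneously.

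To prove this I would exhibit one admissible pair $(z,L')$ whose objective already attains $\wts(B[s,t])$, namely $z=s$ and $L'=0$. This is admissible because $0\le 0<L$ and both $\DP[s,s,0]$ and $\DP[s,t,L]$ are table entries. For this choice $\max\{L',L-L'\}=L\ge\floor{L}_2$, so the indicator vanishes and the objective equals $\DP[s,s,0]+\DP[s,t,L]$. Two elementary observations then close the argument: (i) $\DP[s,s,0]=\wts(B[s,s])=0$, since $\pi_{s,s}$ is the trivial one‑vertex path of weight $0$ and hence no edge $(a,b)$ can satisfy condition~(1) of \Cref{hang} (it would require $\kappa\cdot\wts(a,b)\le 0$ with $\kappa=\tfrac{1}{3(1+\eps)}>0$ and $\wts(a,b)\ge 1$), so $B[s,s]=\emptyset$; moreover $\DP[s,s,0]$ is never touched by the loop, which starts at $L=1$; and (ii) at the instant $\DP[s,t,L]$ is overwritten its value is still the initial $\wts(B[s,t])$, because the loop writes to $\DP[s,t,L]$ only during the iteration for this very $L$. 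Hence the candidate's objective is $0+\wts(B[s,t])=\wts(B[s,t])$, so the maximum is $\ge\wts(B[s,t])$.

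To make this fully rigorous I would run it as an induction over the sequence of write operations performed by the routine, in execution order, with the invariant that every entry $\DP[s,t,L]$ currently holds a value that lies in $[\wts(B[s,t]),\infty)$ and is no smaller than its value after the previous write that touched it. The base case is the initialization; the inductive step is exactly the computation above, and one checks that the candidate $(z,L')=(s,0)$ only reads $\DP[s,s,0]$ (never modified) and $\DP[s,t,L]$ (still at its initial value), so the invariant is available for everything the step depends on, irrespective of the order in which the pairs $(s,t)$ are processed at a given $L$. The only delicate point—hardly an obstacle—is precisely this bookkeeping: the transition rule references $\DP[s,t,L]$ on the right‑hand side of its own recurrence (through $z=s$, $L'=0$), so the whole argument hinges on the fact that this entry still carries its initialized value $\wts(B[s,t])$ at the moment of the overwrite, together with the routine identity $\DP[s,s,0]=0$; everything else is immediate.
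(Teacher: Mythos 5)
Your proof is correct and follows essentially the same route as the paper's: exhibit the candidate $(z,L')=(s,0)$, for which the objective equals $\DP[s,s,0]+\DP[s,t,L]=0+\DP[s,t,L]$, so the maximization can never decrease the entry below its current (initial) value $\wts(B[s,t])$. Your version merely spells out the bookkeeping the paper leaves implicit ($B[s,s]=\emptyset$, hence $\DP[s,s,0]=0$, and the fact that the self-referenced entry still holds its initialized value at the moment of the overwrite).
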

\begin{proof}
    At the beginning, we have $\DP[s, t, L] = \wts(B[s, t])$. In each update, if we take $L' = 0$ and $z = s$, then the value of $\DP[s, z, L'] + \DP[z, t, L-L'] + \mathbf{1}\left[\max\{L', L-L'\} < \floor{L}_2\right]\cdot \wts\brac{B[s, t]}$ is exactly the current value of $\DP[s, t, L]$. Since $(z, L')$ is the maximizer, we know that $\DP[s, t, L]$ never decreases.
\end{proof}

\subsubsection{Structural Properties}
Let us begin with a high-level outline before focusing on technical details. When $H$ is much heavier than the unknown optimal spanner $G_{\opt,\eps}$, our goal is to add to $H$ a small set of new edges to help remove a large set of old edges from $H$. The purpose of this subsection, roughly speaking, is to identify a key structural property that there always exists an approximate shortest path $\rho^*$ in $G$ as well as an edge subset $F^*\subseteq E(H)$, such that $\frac{\wts(\rho^*)}{\wts(F^*)}\approx \frac{\wts(G_{\opt,\eps})}{\wts(H)}$, and $E(\rho^*)\cup (H\setminus F^*)$ has stretch $1+O(\delta)$. As we shall see shortly, finding the ideal choice of $\rho^*, F^*$ requires knowledge of the optimal $(1+\eps)$-spanner $G_{\opt,\eps}$, but we will discuss how to find $\rho^*, F^*$ algorithmically by allowing approximations later on.

Fix a plane embedding of $G$ (the embedding is used only in the analysis). For each edge $(a, b)\in E(H)$, let $\gamma_{a, b}$ be the shortest $ab$-path in $G_{\opt, \epsilon}$. Since $G$ is an embedded planar graph, we can define $R_{a, b}\subseteq \mathbb{R}^2$ as the bounded region enclosed by the edge $(a, b)$ and the path $\gamma_{a, b}$.
\begin{lemma}\label{lem:laminar}
	For any two distinct edges $e_1, e_2\in E(H)$, the regions $R_{e_1}$ and $R_{e_2}$ are either interior-disjoint, or one contains the other.
\end{lemma}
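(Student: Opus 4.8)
The plan is to prove \Cref{lem:laminar} by contradiction using planarity, specifically by showing that if the regions $R_{e_1}$ and $R_{e_2}$ ``cross'' (neither containing the other and not interior-disjoint), then one can produce a strictly shorter path between one of the four endpoints and the edge it lies on, contradicting the fact that $\gamma_{e_1}$ and $\gamma_{e_2}$ are \emph{shortest} paths in $G_{\opt,\eps}$. Write $e_1 = (a_1,b_1)$, $e_2 = (a_2,b_2)$, with $\gamma_1 = \gamma_{a_1,b_1}$, $\gamma_2 = \gamma_{a_2,b_2}$, and $R_1 = R_{a_1,b_1}$, $R_2 = R_{a_2,b_2}$. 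The boundary of $R_i$ is the closed curve $\partial R_i$ formed by concatenating the straight-line image of the edge $e_i$ with the polygonal curve traced by $\gamma_i$ in the fixed plane embedding. First I would observe that since $G$ is embedded, $\gamma_1$ and $\gamma_2$ are non-self-crossing curves, and because shortest paths in $G_{\opt,\eps}$ are unique (ties broken lexicographically), any two such shortest paths can only intersect in a connected common sub-path — if they shared two disjoint sub-paths or crossed transversally, one could reroute one of them along the other to get an equally short but lexicographically smaller path, a contradiction. This ``shortest paths form a laminar-ish / non-crossing system'' fact is the workhorse.

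The key case analysis: suppose for contradiction that $R_1$ and $R_2$ overlap but neither contains the other. Then $\partial R_1$ and $\partial R_2$ must cross an even number $\geq 2$ of times. I would split on where these crossings occur. Because the edges $e_1$ and $e_2$ are edges of the planar graph $G$, their relative interiors are disjoint from everything except at shared endpoints; so crossings of $\partial R_1$ and $\partial R_2$ happen either (a) between $\gamma_1$ and $\gamma_2$, (b) between $\gamma_1$ and $e_2$ (only at a vertex of $G$ that $\gamma_1$ passes through, namely an endpoint of $e_2$), or symmetrically (c) between $e_1$ and $\gamma_2$, or (d) at a shared endpoint of $e_1$ and $e_2$. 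Using the non-crossing property of $\gamma_1,\gamma_2$ from the previous paragraph, $\gamma_1$ and $\gamma_2$ intersect in at most one connected sub-path, so they contribute at most a ``bounded'' amount of boundary interaction. The heart of the argument is then: if $R_1, R_2$ properly cross, at least one endpoint of $e_2$, say $a_2$, lies strictly inside $R_1$ while the other, $b_2$, lies strictly outside (or on the boundary) — otherwise the whole boundary curve $\partial R_2$ would be trapped inside $R_1$ or outside $R_1$, forcing containment or disjointness. Now the path $\gamma_2$ goes from $a_2$ (inside $R_1$) to $b_2$ (outside $R_1$), hence must cross $\partial R_1$; it cannot cross the edge $e_1$ transversally (edges don't cross paths in a plane embedding except at common vertices), so it must cross $\gamma_1$, at a vertex $x$ that lies on both $\gamma_1$ and $\gamma_2$. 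Then I would derive a contradiction by a swapping argument at $x$: the sub-path of $\gamma_1$ from $a_1$ to $x$ together with the sub-path of $\gamma_2$ from $x$ to $b_2$ (or an appropriate combination) yields a shortest-path inconsistency — either a shorter $a_1 b_1$-path, a shorter $a_2 b_2$-path, or a lexicographic tie-break violation, depending on the precise combinatorics of how $\gamma_1$ and $\gamma_2$ overlap near $x$.

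More carefully, I would set it up so that the contradiction is clean: Let $P$ be the (possibly empty) maximal common sub-path of $\gamma_1$ and $\gamma_2$, with endpoints $p, q$ (if $\gamma_1,\gamma_2$ are vertex-disjoint, the crossing vertex $x$ found above already is a common vertex, so $P$ is nonempty — contradiction with vertex-disjointness, meaning they \emph{do} share a vertex and $P$ is a genuine sub-path). Walking along $\gamma_1$, the curve enters and leaves $P$; walking along $\gamma_2$ likewise. Uniqueness of shortest paths forces $d_{G_{\opt,\eps}}(a_1,p) + d_{G_{\opt,\eps}}(p,q)+d_{G_{\opt,\eps}}(q,b_1) = w(\gamma_1)$ and similarly for $\gamma_2$; and crucially $d(a_1,p) = d(a_2,p)$ is impossible to exploit directly, so instead I compare: if, say, $\gamma_1$ and $\gamma_2$ traverse $P$ in the \emph{same} direction ($p$ before $q$ on both), then $a_1,a_2$ are on the ``$p$-side'' and $b_1,b_2$ on the ``$q$-side''; the region-crossing hypothesis then says, WLOG, $a_2 \in \mathrm{int}(R_1)$, but the only way $\gamma_2$ reaches $a_2$ from $P$ without crossing $e_1$ or $\gamma_1 \setminus P$ is… — and here planarity pins down that $a_2$ actually lies on the same side of $\partial R_1$ as $a_1$'s local neighborhood, contradicting the crossing. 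If $\gamma_1, \gamma_2$ traverse $P$ in \emph{opposite} directions, then $\gamma_1[a_1\to p] \circ P \circ \gamma_2[p \to b_2]$ — wait, that's $\gamma_1[a_1,p]$ then $P$ from $p$ to $q$ then we need to get to $b_2$; since $\gamma_2$ hits $P$ at $q$ going toward $a_2$, actually the opposite-direction case gives a path $a_1 \rightsquigarrow p \rightsquigarrow q \rightsquigarrow a_2$ of length $\le w(\gamma_1) + w(\gamma_2)$ minus the ``$b$-parts'', which one shows is shorter than the corresponding direct shortest paths, a contradiction.

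The main obstacle I anticipate is making the topological ``crossing parity / Jordan curve'' bookkeeping fully rigorous — precisely cataloguing how $\partial R_1$ and $\partial R_2$ intersect given that their constituent arcs ($e_i$ vs. $\gamma_i$) have constrained intersection patterns in a plane embedding, and ensuring that ``properly crossing regions'' really does force an endpoint of one edge strictly inside the other region with the matching endpoint outside. Once that topological fact is nailed down, the metric contradiction (shorter path / lexicographic violation via a swap at a common vertex of $\gamma_1,\gamma_2$) is the standard ``shortest paths don't cross'' argument and should go through smoothly. A secondary subtlety is the degenerate cases: $e_1$ and $e_2$ sharing an endpoint, $\gamma_1$ passing through an endpoint of $e_2$ (or vice versa), or $\gamma_1 = \gamma_2$ as sets — these I would handle up front by noting they make $R_1, R_2$ trivially nested or disjoint, or reduce to a lower-complexity instance of the same crossing argument.
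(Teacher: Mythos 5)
You have correctly identified the ingredients the paper's (one-line) proof relies on: in a plane embedding, edge interiors meet nothing else, and unique shortest paths in $G_{\opt,\eps}$ intersect in a single common sub-path. However, the way you combine them has a genuine gap at the central step. First, your claim that a proper crossing of $R_{e_1}$ and $R_{e_2}$ forces one endpoint of $e_2$ strictly inside $R_{e_1}$ and the other (weakly) outside is not justified: the stated reason, ``otherwise $\partial R_{e_2}$ would be trapped on one side,'' presupposes that $\gamma_{e_2}$ cannot make an excursion across $\partial R_{e_1}$, which is exactly what needs to be proved. Moreover, the configuration you then set up ($a_2$ strictly inside, $b_2$ strictly outside) is impossible for a much simpler reason: the edge $e_2$ joining them would itself have to cross $\partial R_{e_1}$, which planarity forbids outright, so no shortest-path argument is needed or available there. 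Second, and more seriously, the ``metric contradiction'' you invoke does not exist: two unique shortest paths between \emph{different} terminal pairs can perfectly well cross transversally at a single common vertex or common sub-path (think of a plus-shaped graph where the E--W and N--S shortest paths cross at the center), and no suffix-swap or lexicographic tie-breaking argument rules this out. Uniqueness only forbids two common vertices $x,y$ joined by \emph{different} sub-paths of $\gamma_{e_1}$ and $\gamma_{e_2}$.

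The correct assembly is essentially the reverse of yours: (i) the interior of $e_2$ is disjoint from $\partial R_{e_1}$, so $e_2$ and its endpoints lie weakly on one side of $\partial R_{e_1}$; (ii) $\gamma_{e_2}$ cannot have a maximal excursion strictly on the other side of $\partial R_{e_1}$, because such an excursion would leave and re-enter $\partial R_{e_1}$ at two vertices $x,y$ of $\gamma_{e_1}$ (it cannot pass through the interior of $e_1$), whence $\gamma_{e_2}[x,y]=\gamma_{e_1}[x,y]\subseteq\partial R_{e_1}$ by uniqueness of shortest paths, contradicting that the excursion lies strictly off the boundary. Hence all of $\partial R_{e_2}$ lies weakly on one side of $\partial R_{e_1}$, and the Jordan curve theorem gives nestedness or interior-disjointness. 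In short, the double crossing (the excursion) is what uniqueness kills, and the single crossing is what the presence of the edge $e_2$ kills; your proposal tries to kill the single crossing with a uniqueness argument, and that step fails.
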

\begin{proof}
	The proof is evident because shortest paths and edges do not cross each other in any plane embedding of $G$. See \Cref{region} for an illustration.
    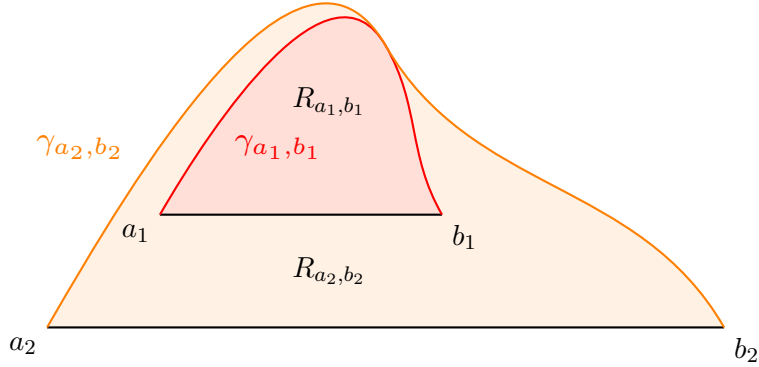
\begin{figure}[!htb]
        \centering
        \begin{tikzpicture}[scale=1.5]

\pgfdeclarelayer{background}
\pgfsetlayers{background,main}

\coordinate (a1) at (0, 0);
\coordinate (b1) at (2.5, 0);
\coordinate (a2) at (-1, -1);
\coordinate (b2) at (5, -1);

\draw[thick] (a1) -- (b1); 
\draw[thick] (a2) -- (b2); 

\draw[thick, red] (a1) to[out=60, in=120] (2, 1.5) to[out=-60, in=120] (b1) node[midway, above, yshift=10pt, xshift=30pt, font=\scriptsize] {$\gamma_{a_1, b_1}$};
\draw[thick, orange] (a2) to[out=60, in=120] (2, 1.5) to[out=-60, in=120] (b2) node[midway, above, yshift=10pt, xshift=-20pt, font=\scriptsize] {$\gamma_{a_2, b_2}$};

\begin{pgfonlayer}{background}
\fill[orange!20, opacity=0.5] (a2) to[out=60, in=120] (2, 1.5) to[out=-60, in=120] (b2) -- cycle;
\fill[red!20, opacity=0.5] (a1) to[out=60, in=120] (2, 1.5) to[out=-60, in=120] (b1) -- cycle;
\end{pgfonlayer}

\node at (1.5, 1) {$R_{a_1, b_1}$};
\node at (1.5, -0.5) {$R_{a_2, b_2}$};

\node[below left] at (a1) {$a_1$};
\node[below right] at (b1) {$b_1$};
\node[below left] at (a2) {$a_2$};
\node[below right] at (b2) {$b_2$};


\end{tikzpicture}
        \caption{In this example, $e_1 = (a_1, b_1), e_2 = (a_2, b_2)$ and region $R_{e_1}$ is contained within region $R_{e_2}$.}
        \label{region}
    \end{figure}
\end{proof}

By Lemma~\ref{lem:laminar}, the regions $R_{e}$, $e\in E(H)$, naturally form a laminar family. Create a rooted tree $\mathcal{T}$ where each node corresponds to a region $R_{e}$, and a region $R_{e_1}$ is an ancestor of another region $R_{e_2}$ if and only if $R_{e_1}\supseteq R_{e_2}$.

Consider any iteration of the while-loop of \Cref{alg-prune}. Define a weight ratio 
\begin{equation} \label{eq:wtratio}
\alpha = \frac{\wts\brac{H\setminus (F^\nw\cup F^\old)} }{ \wts\brac{G_{\opt, \eps}} }.
\end{equation}
The following structural property will be important to our analysis.

\begin{lemma}[structural property]\label{struct}
	There exists a shortest path $\rho^*$ in $G_{\opt, \epsilon}$ together with an edge set $F^*\subseteq E(H)\setminus (F^\nw\cup F^\old)$ such that:
    \begin{enumerate}[(1)]
        \item Each edge $e\in F^*$ is $\frac{2}{3}$-hanging on $\rho^*$;
        \item The total weight $\wts\brac{F^*}$ of $F^*$ satisfies $\wts\brac{F^*}\geq \frac{\alpha}{6}\cdot \wts\brac{\rho^*}$;
        \item   For each edge $e\in F^*$, let $(a_e, b_e)$ be the hanging points of $e$ on $\rho^*$; then all the sub-path intervals $\{\rho^*[a_e, b_e]: e\in F^*\}$ form a laminar family.
    \end{enumerate}
\end{lemma}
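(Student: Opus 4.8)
I would prove the lemma by combining the laminar family of regions from \Cref{lem:laminar} with the uniqueness of shortest paths in $G_{\opt,\epsilon}$. The first ingredient, together with a congestion-averaging argument, produces the path $\rho^*$ (item~(2)); the second makes the $\tfrac23$-hanging condition essentially automatic (item~(1)); and planarity, via \Cref{lem:laminar} again, is then invoked to force the laminarity of the hanging intervals (item~(3)). Throughout, write $T = E(H)\setminus(F^\nw\cup F^\old)$ for the set of active edges, so $\wts(T)=\alpha\cdot\wts(G_{\opt,\epsilon})$.

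\textbf{Normalization and the key observation.} First I would record that we may assume $\wts(e)\le\wts(\gamma_e)\le(1+\epsilon)\wts(e)$ for every $e=(a,b)\in T$: the upper bound holds because $G_{\opt,\epsilon}$ is a $(1+\epsilon)$-spanner and $(a,b)\in E(G)$, while the lower bound is $\dist_G(a,b)=\wts(a,b)$, which is true for the greedy spanner we start with and is preserved under pruning after a cheap cleanup. Next comes the workhorse: for \emph{any} shortest path $\sigma$ in $G_{\opt,\epsilon}$ and any $e=(a,b)\in T$, the common edge set $\gamma_e\cap\sigma$ is a single sub-path $\gamma_e[p_e,q_e]=\sigma[p_e,q_e]$, because between two shared vertices the two (unique) shortest paths of $G_{\opt,\epsilon}$ must coincide. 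Since $\gamma_e[a,p_e]$ and $\gamma_e[q_e,b]$ are themselves shortest paths of $G_{\opt,\epsilon}\subseteq G$, we get $\dist_G(a,p_e)+\dist_G(q_e,b)\le\wts(\gamma_e)-\wts(\gamma_e\cap\sigma)$, and therefore if $\wts(\gamma_e\cap\sigma)\ge\tfrac23\wts(e)$ then both conditions of \Cref{hang} hold, i.e.\ $e$ is $\tfrac23$-hanging on $\sigma$ at $(p_e,q_e)$. So it suffices to exhibit one shortest path $\rho^*$ of $G_{\opt,\epsilon}$ with $\wts(F(\rho^*))\ge\tfrac{\alpha}{3}\wts(\rho^*)$, where $F(\sigma):=\{e\in T:\wts(\gamma_e\cap\sigma)\ge\tfrac23\wts(e)\}$; note $e\in F(\gamma_e)$, so these sets are never empty.

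\textbf{Locating $\rho^*$.} Route a unit ``load'' of $e$ along every edge of $\gamma_e$ and let $\mathrm{cong}(f)=|\{e\in T: f\in\gamma_e\}|$. Then $\sum_{e\in T}\wts(\gamma_e)\ge\sum_{e\in T}\wts(e)=\alpha\cdot\wts(G_{\opt,\epsilon})$, while also $\sum_{e\in T}\wts(\gamma_e)=\sum_{f\in E(G_{\opt,\epsilon})}\wts(f)\cdot\mathrm{cong}(f)$, so the $\wts$-weighted average congestion of $G_{\opt,\epsilon}$ is at least $\alpha$. Organizing the paths $\gamma_e$ by the laminar region tree $\mathcal{T}$, I would then argue that this global overload must be \emph{localized}: some node $e'$ of $\mathcal{T}$ satisfies $\wts(F(\gamma_{e'}))\ge\tfrac{\alpha}{3}\wts(\gamma_{e'})$, with the planar nesting of the regions $R_e$ bounding both how far a contributing $\gamma_e$ can stray from $\gamma_{e'}$ and how much the candidate boundary paths can overlap one another, and the $(1+\epsilon)$ slack being absorbed into the constant. \emph{This is the step I expect to be the main obstacle}: converting the global congestion estimate into a single heavy shortest path carrying many hangers, while coping with weight heterogeneity (a heavy sub-path of $\rho^*$ serves $e$ only when $\wts(e)\le\tfrac32\,\wts(\gamma_e\cap\rho^*)$), is precisely where the full strength of planarity and the region laminarity must be used; and it is essential that the loss here be a constant rather than a $\log W$ factor, since that is what makes the outer $\log^*$-fold iteration of \Cref{prune} close.

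\textbf{Enforcing laminar intervals.} Finally, for each $e\in F(\rho^*)$ take the canonical hanging points $(p_e,q_e)$ from the key observation; since $R_e$ meets $\rho^*$ exactly in $\rho^*[p_e,q_e]$, it lies entirely on one of the two sides of that sub-arc, and we colour $e$ by that side. If two equally coloured edges $e_1,e_2$ had crossing intervals on $\rho^*$, then along the overlap $R_{e_1}$ and $R_{e_2}$ would sit on the same side of $\rho^*$, hence be neither interior-disjoint nor nested, contradicting \Cref{lem:laminar}; so within each colour class the intervals $\{\rho^*[p_e,q_e]\}$ form a laminar family. Taking $F^*$ to be the heavier colour class costs a factor $2$, so $\wts(F^*)\ge\tfrac{\alpha}{6}\wts(\rho^*)$, while $\rho^*=\gamma_{e'}$ is a shortest path of $G_{\opt,\epsilon}$ and $F^*\subseteq T$ consists of $\tfrac23$-hanging edges. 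This yields (1), (2) and (3) simultaneously.
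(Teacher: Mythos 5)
Your reduction of item (1) to the overlap condition $\wts(\gamma_e\cap\sigma)\ge\tfrac23\wts(e)$ matches the paper's \Cref{23hang}, and your side-colouring argument for item (3) is a reasonable (slightly more general) variant of the paper's observation that all regions $R_e$, $e\in F^*$, lie on one side of $\rho^*$. But the central step---producing a single shortest path $\rho^*$ with $\wts(F(\rho^*))\ge\Omega(\alpha)\cdot\wts(\rho^*)$---is exactly the part you defer as ``the main obstacle,'' and the congestion-averaging setup you propose for it does not work. Knowing that $\sum_{f}\wts(f)\cdot\mathrm{cong}(f)\ge\alpha\cdot\wts(G_{\opt,\eps})$ only tells you that some path $\gamma_{e'}$ carries a lot of total load; it does not tell you that the edges $e$ loading it satisfy $\wts(\gamma_e\cap\gamma_{e'})\ge\tfrac23\wts(e)$. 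An edge $e$ contributes its full congestion to $\gamma_{e'}$ even when the overlap is a vanishing fraction of $\wts(e)$, in which case $e$ does not hang on $\gamma_{e'}$ at all. This is precisely the failure mode the paper's technical overview calls out when it says the naive averaging argument ``is doomed.''

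The paper's actual argument replaces global averaging by a greedy selection of a sub-family $\Gamma\subseteq\{\gamma_e\}$ along a DFS of the region tree $\mathcal{T}$: a path $\gamma_e$ is added to $\Gamma$ only if its overlap with the nearest already-selected ancestor path is less than $\tfrac23\wts(e)$, which guarantees every active edge is $\tfrac23$-hanging on exactly one selected path (either its own, or that ancestor's). The weight of $\Gamma$ is then controlled not by congestion but by a face-counting potential: each edge $f$ of $G_{\opt,\eps}$ starts with potential $6\wts(f)$, and each newly selected $\gamma_e$ claims, for every $f$ in its non-overlapping portion, a face of $R_e$ not previously claimed (this is where planarity and the DFS order enter), so the potential drop $3(\wts(\gamma_e)-\wts(\eta))>\wts(\gamma_e)$ pays for the addition and yields $\wts(\Gamma)\le 6\wts(G_{\opt,\eps})$. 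Averaging over $\Gamma$---a family in which every edge hangs somewhere---then gives a path with hanger weight at least $\tfrac{\alpha}{6}$ times its length. Without some substitute for this selection-plus-potential argument, your proof does not establish item (2).
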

\begin{proof}
   The main part of this proof is to construct a set $\Gamma$ of shortest paths of the form $\gamma_{e}$ for some edges $e\in E(H)\setminus (F^\nw\cup F^\old)$,   such that 
  (1) each edge $e\in E(H)\setminus (F^\nw\cup F^\old)$ is $\frac{2}{3}$-hanging on some path $\gamma\in \Gamma$, and 
  (2) the total length $\wts(\Gamma)$ of all paths in $\Gamma$ is at most $6\wts\brac{G_{\opt, \epsilon}}$.
  Having constructed $\Gamma$, the proof will be concluded by applying the pigeon-hold principle.

\paragraph{Construction of $\Gamma$.}
 Initialize $\Gamma\leftarrow \emptyset$. 
 Traverse the nodes of the tree $\mathcal{T}$ in a depth-first-search order. For each node $R_e\in \mathcal{T}$ such that $e\in E(H)\setminus (F^\nw\cup F^\old)$, let $R_{e_0}\supseteq R_e$ be the lowest ancestor of $R_e$ in $\mathcal{T}$ such that $\gamma_{e_0}\in \Gamma$ (if $\gamma_{e}\notin \Gamma$ for any ancestor $R_e$ in $\mathcal{T}$, then $\gamma_{e_0}\leftarrow \emptyset$). Let $\eta = \gamma_e\cap \gamma_{e_0}$; note that $\gamma_e\cap \gamma_{e_0}$ is empty or a path because both $\gamma_e$ and $\gamma_{e_0}$ are unique shortest paths in $G_{\opt, \epsilon}$. Then, we add $\gamma_e$ to $\Gamma$ iff $\wts\brac{\eta} < \frac{2}{3}\cdot \wts(e)$. See \Cref{fig:dfs} for an illustration.

 \begin{figure}[!htb]
     \centering
     \begin{tikzpicture}[scale=2]


  \draw[thick, color=orange] plot[smooth] coordinates {(1,1) (2, 1.1) (3,1)};


  \node at (2, 0.8) {$\eta$};
  
  \node[circle, fill=black, inner sep=1.5pt] (x) at (1,1) {};
  \node[circle, fill=black, inner sep=1.5pt] (y) at (3,1) {};
  
  \node[circle, fill=black, inner sep=1.5pt, label=below:$a_1$] (a1) at (1,0) {};
  \node[circle, fill=black, inner sep=1.5pt, label=below:$b_1$] (b1) at (3,0) {};

  \node[circle, fill=gray, inner sep=1.5pt, label=below:{\color{gray} $a_2$}] (a2) at (0.5,-0.5) {};
  \node[circle, fill=gray, inner sep=1.5pt, label=below:{\color{gray} $b_2$}] (b2) at (3.5,-0.5) {};

  \node[circle, fill=gray, inner sep=1.5pt, label=below:{\color{gray} $a_3$}] (a3) at (0,-1) {};
  \node[circle, fill=gray, inner sep=1.5pt, label=below:{\color{gray} $b_3$}] (b3) at (4,-1) {};
  
  \node[circle, fill=black, inner sep=1.5pt, label=below:$a_4$] (a4) at (-0.5,-1.5) {};
  \node[circle, fill=black, inner sep=1.5pt, label=below:$b_4$] (b4) at (4.5,-1.5) {};
  
  \draw[thick] (a1) -- (b1);
  \draw[thick, color=gray] (a2) -- (b2);
  \draw[thick, color=gray] (a3) -- (b3);
  \draw[thick] (a4) -- (b4);
  
  \draw[thick, color=orange] (a1) to[out=100, in=250] (x);
  \draw[thick, color=orange] (b1) to[out=80, in=-80] (y);
  \draw[thick, color=gray] (a2) to[out=100, in=240] (x);
  \draw[thick, color=gray] (b2) to[out=80, in=-70] (y);
  \draw[thick, color=gray] (a3) to[out=100, in=230] (x);
  \draw[thick, color=gray] (b3) to[out=80, in=-60] (y);
  \draw[thick, color=orange] (a4) to[out=100, in=220] (x);
  \draw[thick, color=orange] (b4) to[out=80, in=-50] (y);
\end{tikzpicture}
     \caption{In this picture, $e = (a_1, b_1)$, and $e_0 = (a_4, b_4)$. If $\wts(\eta) < \frac{2}{3}\wts(e)$, we would add $\gamma_e$ to $\Gamma$ which decreases $\Phi$. There are two intermediate regions $R_{(a_2, b_2)}$ and $R_{(a_3, b_3)}$ whose paths $\gamma_{(a_2, b_2)}, \gamma_{(a_3, b_3)}$ were not added to $\Gamma$.}
     \label{fig:dfs}
 \end{figure}
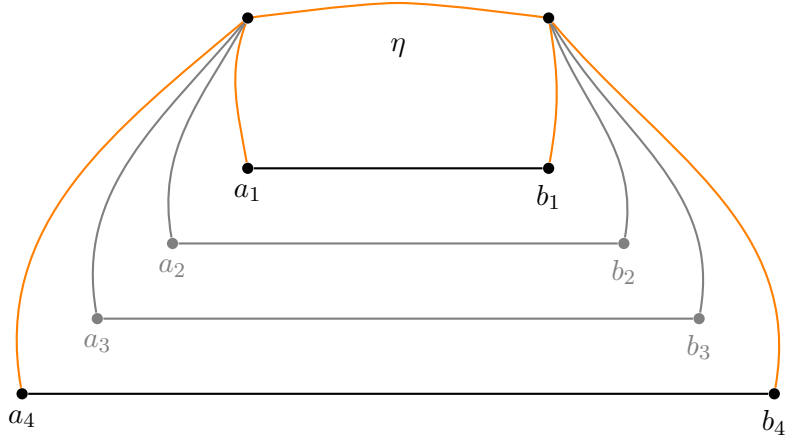

 \paragraph{Analysing the properties of $\Gamma$.}
	\begin{claim}\label{23hang}
		Each edge $e\in E(H)\setminus (F^\nw\cup F^\old)$ is $\frac{2}{3}$-hanging on a path in $\Gamma$.
	\end{claim}
	\begin{proof}
		Consider the step in which node $R_e$ is visited by the depth-first-search, and let $R_{e_0}$ be its lowest ancestor, with $\gamma_{e_0} \in \Gamma$, as defined above. If $\gamma_e\in \Gamma$, then by definition, $e$ is $1$-hanging on $\gamma_e \in \Gamma$. Otherwise $\gamma_e\notin \Gamma$, and by the construction we have $\wts(\eta)\geq\frac{2}{3}\cdot \wts(e)$, where $\eta = \gamma_e \cap \gamma_{e_0}$, which means $e$ is $\frac{2}{3}$-hanging on $\gamma_{e_0} \in \Gamma$.
	\end{proof}
	
	\begin{claim}\label{totalweight}
		The total weight $\wts\brac{\Gamma}$ is at most $6\cdot\wts\brac{G_{\opt, \epsilon}}$.
	\end{claim}
	\begin{proof}
		For each edge $f\in E_{\opt, \epsilon}$, let $P_f^1$ and $P_f^2$ be the two faces containing $f$ in the planar embedding of $G$ (possibly $P_f^1=P_f^2$ if $f$ is a cut edge). Define the following potential function $\Phi(f)$ with respect to $\Gamma$:
 		$$\Phi(f) = \begin{cases}
			0	&	\text{if both }P_f^1 \text{ and } P_f^2 \text{ are contained in some }R_{e} \text{ where }f\in\gamma_{e}\in \Gamma ,\\
			3\wts(f)		&	\text{if only one of }P_f^1 \text{ and } P_f^2 \text{ is contained in some }R_{e} \text{ where }f\in\gamma_{e}\in \Gamma ,\\
			6\wts(f)	&	\text{if neither }P_f^1 \text{ nor } P_f^2 \text{ is contained in  any }R_{e} \text{ where }f\in\gamma_{e}\in \Gamma .
		\end{cases}$$
Define the following sum as the overall potential (in the sum below, we want to count each $f$ at most once):
$$\Phi ~=~ \wts\brac{\Gamma} + \sum_{e\in E(H)\setminus (F^\nw\cup F^\old), f\in \gamma_e}\Phi(f).$$	
  Clearly, we have $\Phi\leq 6\wts\brac{G_{\opt, \epsilon}}$ at the beginning when $\Gamma = \emptyset$. We show that $\Phi$ never increases in the course of the algorithm. Consider any step when we add a certain path $\gamma_e$ to $\Gamma$. 
        Since $R_e$ is enclosed by a simple curve, it contains exactly one face $P_f\in \{P_f^1, P_f^2\}$ for any edge $f\in \gamma_e\setminus \eta$. More importantly, since the algorithm visits all the nodes on $\mathcal{T}$ in a depth-first-search order, $P_f$ was not included in any region $R_{e'}$ before where $f\in \gamma_{e'}\in \Gamma$.
		
		Therefore, $\Phi(f)$ decreases by $3\wts(f)$, and so the sum $\sum_{e\in E(H)\setminus (F^\nw\cup F^\old), f\in \gamma_e}\Phi(f)$ decreases by at least $3\cdot \brac{\wts(\gamma_e) - \wts(\eta)}> \wts(\gamma_e)$. On the other hand, $\wts(\Gamma)$ increases by at most $\wts(\gamma_e)$, and so overall $\Phi$ deceases.
	\end{proof}

 \paragraph{Conclusion of the proof.}
	By \Cref{23hang}, for every edge $e\in E(H)\setminus (F^\nw\cup F^\old)$, there exists a path in $\Gamma$ path on which $e$ can $\frac{2}{3}$-hang. For any path $\rho\in \Gamma$, let $C_\rho\subseteq E(H)\setminus (F^\nw\cup F^\old)$ be the set of all edges $e$ which can $\frac{2}{3}$-hang on $\rho$. By the averaging argument and \Cref{totalweight}, there exists $\rho^*\in \Gamma$ such that
    $$\frac{\wts(C_{\rho^*})}{\wts(\rho^*)}
    \geq \frac{\sum_{\rho\in \Gamma}\wts(C_\rho)}{\sum_{\rho\in\Gamma}\wts(\rho)} 
    \geq \frac{\sum_{e\in E(H)\setminus (F^\nw\cup F^\old)}\wts(e)}{\sum_{\rho\in\Gamma}\wts(\rho)}
    \geq \frac{\sum_{e\in E(H)\setminus (F^\nw\cup F^\old)}\wts(e)}{6\cdot \wts(G_{\opt,\eps})} 
    = \alpha / 6 .$$
    Set $F^* = C(\rho^*)$. To verify the requirements, let $s, t$ be the two endpoints of $\rho^*$. By construction, we know that for any $e\in F^*$, the region $R_e$ is contained within region $R_{s, t}$. Also, since all edges in $F^*$ are on the same side of $\rho^*$, all the sub-path intervals $\{\rho^*[a_e, b_e]: e\in F^*\}$ should form a laminar structure. Check \Cref{fig:struct} for an illustration \qedhere.

    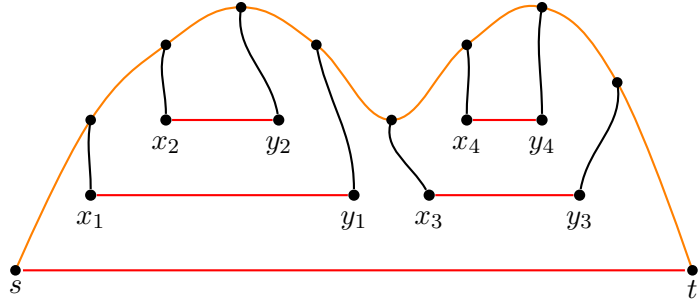
\begin{figure}
        \centering
        \begin{tikzpicture}
    \coordinate (P1) at (0,0);
    \coordinate (P2) at (1,2);
    \coordinate (P3) at (2,3);
    \coordinate (P4) at (3,3.5);
    \coordinate (P5) at (4,3);   
    \coordinate (P6) at (5,2);   
    \coordinate (P7) at (6,3);   
    \coordinate (P8) at (7,3.5);
    \coordinate (P9) at (8,2.5);
    \coordinate (P10) at (9,0);

    \draw[thick, smooth, tension=0.7, color=orange] plot coordinates {
        (P1) (P2) (P3) (P4) (P5) (P6) (P7) (P8) (P9) (P10)
    };

    \foreach \i in {1,...,10} {
        \fill (P\i) circle (2pt);
    }
    \node[below] at (P1) {$s$};
    \node[below] at (P10) {$t$};

    \draw[thick, red] (0.1, 0) -- (8.9, 0);
    

    \node[circle, fill=black, inner sep=1.5pt, label=below:$x_1$] (x1) at (1,1) {};
    \node[circle, fill=black, inner sep=1.5pt, label=below:$y_1$] (y1) at (4.5, 1) {};
    \draw[thick, color=red] (x1) -- (y1);
    \draw[thick] (x1) to[out=90, in=-100] (P2);
    \draw[thick] (y1) to[out=90, in=-80] (P5);

    \node[circle, fill=black, inner sep=1.5pt, label=below:$x_2$] (x2) at (2,2) {};
    \node[circle, fill=black, inner sep=1.5pt, label=below:$y_2$] (y2) at (3.5,2) {};
    \draw[thick, color=red] (x2) -- (y2);
    \draw[thick] (x2) to[out=90, in=-110] (P3);
    \draw[thick] (y2) to[out=100, in=-100] (P4);

    \node[circle, fill=black, inner sep=1.5pt, label=below:$x_3$] (x3) at (5.5,1) {};
    \node[circle, fill=black, inner sep=1.5pt, label=below:$y_3$] (y3) at (7.5,1) {};
    \draw[thick, color=red] (x3) -- (y3);
    \draw[thick] (x3) to[out=120, in=-110] (P6);
    \draw[thick] (y3) to[out=80, in=-80] (P9);

    \node[circle, fill=black, inner sep=1.5pt, label=below:$x_4$] (x4) at (6,2) {};
    \node[circle, fill=black, inner sep=1.5pt, label=below:$y_4$] (y4) at (7,2) {};
    \draw[thick, color=red] (x4) -- (y4);
    \draw[thick] (x4) to[out=90, in=-80] (P7);
    \draw[thick] (y4) to[out=90, in=-100] (P8);
\end{tikzpicture}
        \caption{The path $\rho^*$ is drawn as the orange curve, and $C(\rho^*)$ are the red edges which are $\frac{2}{3}$-hanging on $\rho^*$, and the black curves are shortest paths in $G_{\opt,\eps}$.}
        \label{fig:struct}
    \end{figure}
\end{proof}

Next, our main goal is to show that our dynamic programming finds a path $\rho[s^*, t^*, L^*]$ which would be a good approximation to the path $\rho^*$ in $G_{\opt, \epsilon}$, and the multiset $P[s^*, t^*, L^*]$ would also be a good approximation to the set $F^*$. The analysis consists of two steps: first, we will show that the multi-set $P[s^*, t^*, L^*]$ contains a large number of edges in terms of edge weights; secondly, since $P[s^*, t^*, L^*]$ is a multi-set, we need to upper bound the amount of duplications so as to prove a true lower bound on $P[s^*, t^*, L^*]$.

\subsubsection{Lower Bounding $\DP[s^*, t^*, L^*]$}
In this subsection, we show that the multi-set $P[s^*, t^*, L^*]$ is relatively heavy (in terms of $\alpha$, see \Cref{eq:wtratio}) compared to $\rho[s^*, t^*, L^*]$.

\begin{lemma}\label{lowerbd}
	$\DP[s^*, t^*, L^*]\geq \frac{\alpha}{6}\cdot \wts\brac{\rho[s^*, t^*, L^*]}$.
\end{lemma}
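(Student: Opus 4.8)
The plan is to combine the structural property (\Cref{struct}) with the fact that the algorithm picks $(s^*,t^*,L^*)$ to maximize $\beta=\DP[s^*,t^*,L^*]/\wts\brac{\rho[s^*,t^*,L^*]}$. Since $\wts\brac{\rho[s,t,L]}=L$ for every populated entry, it is enough to exhibit one valid triple $(s,t,L)$ with $\DP[s,t,L]\ge\frac{\alpha}{6}L$. I would take the path $\rho^*$ and edge set $F^*$ furnished by \Cref{struct}, let $s,t$ be the endpoints of $\rho^*$, and set $L=\wts(\rho^*)$. As $\rho^*$ is a shortest path in the $(1+\epsilon)$-spanner $G_{\opt,\epsilon}$, we have $L=\dist_{G_{\opt,\epsilon}}(s,t)\le(1+\epsilon)\dist_G(s,t)$, so this triple is legitimate; and part (2) of \Cref{struct} gives $\wts\brac{F^*}\ge\frac{\alpha}{6}\wts(\rho^*)=\frac{\alpha}{6}L$. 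Thus the lemma reduces to the bound $\DP[s,t,L]\ge\wts\brac{F^*}$.

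The first ingredient for that bound is an elementary fact about the sets $B[\cdot,\cdot]$: if $e=(a,b)\in F^*$ has $\tfrac23$-hanging points $(a_e,b_e)$ on $\rho^*$, then for \emph{every} subpath $\rho^*[a',b']\subseteq\rho^*[a_e,b_e]$ with $\wts(\rho^*[a',b'])\ge\tfrac13\wts(e)$ one has $e\in B[a',b']$. Indeed $\rho^*[a',b']$ is a shortest path in $G_{\opt,\epsilon}$, so $\wts(\pi_{a',b'})=\dist_G(a',b')\ge\frac{1}{1+\epsilon}\wts(\rho^*[a',b'])\ge\frac{1}{3(1+\epsilon)}\wts(e)$, which is the first hanging condition at $(a',b')$ on $\pi_{a',b'}$; and, using $\wts(\pi_{a',b'})\le\wts(\rho^*[a',b'])$ and the triangle inequality along $\rho^*$,
\[
\dist_G(a,a')+\wts(\pi_{a',b'})+\dist_G(b',b)\ \le\ \dist_G(a,a_e)+\wts(\rho^*[a_e,b_e])+\dist_G(b_e,b)\ \le\ (1+\epsilon)\wts(e),
\]
which is the second hanging condition; since $F^*\subseteq E(H)\setminus(F^\nw\cup F^\old)$, this yields $e\in B[a',b']$. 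The choice $[a',b']=[a_e,b_e]$ is admissible because $\wts(\rho^*[a_e,b_e])\ge\tfrac23\wts(e)\ge\tfrac13\wts(e)$.

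The second ingredient is to feed a suitable recursive decomposition of $\rho^*$ into the DP. Unrolling the transition rule and using \Cref{dyn-basic} at the leaves, for \emph{any} binary decomposition tree $\mathcal D$ of $\rho^*$---whose nodes are subpaths of $\rho^*$ and each of whose internal nodes is split at one of its vertices---one has $\DP[s,t,L]\ge\sum_{v}\wts\brac{B[v]}$, where the sum ranges over all leaves of $\mathcal D$ together with all internal nodes of $\mathcal D$ whose split is \emph{balanced}; here a split of a weight-$w$ node into children of weights $w_1,w_2$ is balanced if $\max\{w_1,w_2\}<\floor{w}_2$, so that the indicator in the transition fires. Because $P$ is built up as a multiset (weights are summed with multiplicity), it therefore suffices to build $\mathcal D$ so that each $e\in F^*$ belongs to $B[v]$ for at least one leaf or balanced internal node $v$; the displayed sum is then at least $\sum_{e\in F^*}\wts(e)=\wts\brac{F^*}$. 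I would build $\mathcal D$ by first turning every interval $\rho^*[a_e,b_e]$, $e\in F^*$, into a node of $\mathcal D$---which is possible exactly because part (3) of \Cref{struct} asserts that these intervals form a laminar family---and then choosing the split points inside each such node and inside the gaps between consecutive intervals; whenever $\rho^*[a_e,b_e]$ is a leaf of $\mathcal D$ or admits a balanced split, $e$ gets collected there via the first ingredient.

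The hard part is precisely this construction of $\mathcal D$: the laminar family $\{\rho^*[a_e,b_e]\}$ is rigid and may force unbalanced splits---for instance when some interval contains a nested $F^*$-interval of weight exceeding $\floor{w}_2$ for the enclosing node of weight $w$, or when $\rho^*$ carries a single very heavy edge---so $B[\cdot]$ cannot simply be collected at every interval. I expect to handle this by showing that for each $e\in F^*$ there is a leaf or balanced-split node of $\mathcal D$ contained in $\rho^*[a_e,b_e]$ whose weight is still at least $\tfrac13\wts(e)$, obtained by descending only a bounded number (at most roughly two) of levels into the laminar structure---which is exactly the budget afforded by the factor-$2$ gap between the structural parameter $\tfrac23$ and the DP hanging parameter $\tfrac1{3(1+\epsilon)}$---occasionally charging several chain-nested edges to one common descendant. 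After this combinatorial charging argument (and the routine verification that the intermediate entries $\rho[\cdot,\cdot,\cdot]$ along subpaths of $\rho^*$ are populated), combining it with the reduction of the first paragraph gives $\DP[s^*,t^*,L^*]\ge\frac{\alpha}{6}\wts\brac{\rho[s^*,t^*,L^*]}$.
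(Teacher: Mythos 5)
Your setup coincides with the paper's: you reduce to the single triple $(s,t,L)$ supplied by \Cref{struct} via the maximality of $\beta$; your ``first ingredient'' (that $e\in B[a',b']$ for every subinterval $[a',b']\subseteq[a_e,b_e]$ of weight at least $\tfrac13\wts(e)$) is exactly the calculation the paper performs repeatedly; and your ``second ingredient'' (unrolling the recurrence along a binary decomposition tree built on the laminar family, collecting $\wts(B[v])$ at leaves and at internal nodes whose split makes the indicator fire) is precisely the paper's assignment of values $q_{[a,b]}$ and the induction using \Cref{dyn-basic}. All of that is sound.

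The gap is that the step you defer---showing that each $e\in F^*$ is in fact collected at some leaf or fired-indicator node of weight at least $\tfrac13\wts(e)$ inside $[a_e,b_e]$---is the entire substance of the paper's proof, and your heuristic for it is not right as stated. The obstruction is a chain of nested hanging intervals whose weights all lie in $[\floor{w}_2,\,w]$ for $w=\wts(\rho^*[a_e,b_e])$: at every node of such a chain one part of the (binarized) split has weight at least $\floor{w}_2=\floor{\wts(\rho^*[a_i,b_i])}_2$, so no indicator fires there, and one must descend to the \emph{lowest} laminar descendant $[c,d]$ with $\wts(\rho^*[c,d])\geq\floor{w}_2$---which can be arbitrarily many tree levels down, not ``roughly two''; what is bounded by a factor of two is the weight loss, not the depth. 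Even at that node the claim is not automatic: one still has to argue, by cases on whether $[c,d]$ is a leaf, has one child, or has two children (and on which of the sub-splits introduced by binarization exceed the threshold $\floor{\wts(\rho^*[c,d])}_2=\floor{w}_2$), that at least one of the $B[\cdot,\cdot]$ terms actually accrued at $[c,d]$ corresponds to a pair of weight at least $\floor{w}_2\geq\tfrac13\wts(e)$ at which $e$ hangs. This case analysis is where the specific form of the indicator (thresholding at $\floor{L}_2$) is used, and without carrying it out the inequality $\sum q_{[a,b]}\geq\sum p_{[a,b]}=\wts(F^*)$---hence the lemma---is not established.
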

\begin{proof}
	Let $\rho^*, F^*$ be the path and edge sets provided by \Cref{struct}. Let $s$ and $t$ be the two endpoints of the path $\rho^*$ which is unknown to our algorithm, and set $L = \wts\brac{\rho^*}$. Since the ratio $$\DP[s^*, t^*, L^*] / \wts\brac{\rho[s^*, t^*, L^*]}$$ is maximized by the triple $(s^*, t^*, L^*)$, it suffices to show that:
    $$\DP[s, t, L]\geq \frac{\alpha}{6}\cdot \wts\brac{\rho[s, t, L]} .$$
    
    For every edge $e\in F^*$, let $(a_e, b_e)$ be the hanging points of $e$ on $\rho^*$. According to the proof of \Cref{struct}, all the sub-path intervals $\{\rho^*[a_e, b_e]: e\in F^*\}$ form a laminar family $\mathcal{L}$. For notational convenience, we could naturally rewrite any sub-path $\rho^*[a, b]$ in an interval manner $[a, b]$. We can view $\mathcal{L}$ as a tree where each node corresponds to an interval $[a, b]$ together with a weight: \begin{equation}\label{eq:p}
	p_{[a, b]} = \sum_{e\in F^*, \{a_e, b_e\} = \{a, b\}}\wts(e).
    \end{equation}
    Note that multiple edges $e\in F^*$ might be hanging at the same vertex pair $(a, b)$, which is why we define $p_{[a, b]}$ by a summation. According to \Cref{struct}, we have: $$\sum_{[a, b]\in \mathcal{L}}p_{[a, b]} = \wts(F^*)\geq \frac{\alpha}{6}\cdot \wts(\rho^*) .$$
    Thus, our goal is to show that $\DP[s, t, L]$ is at least the total weight over all nodes in the tree $\mathcal{L}$.
	
	For technical convenience, we will modify $\mathcal{L}$ to make it a binary tree. First, if the root node of $\mathcal{L}$ is not $[s, t]$, then add a root node corresponding to the interval $[s, t]$ with weight $p_{[s, t]} = 0$. In general, while there is a node $[a, b]$ with more than two children $[a_1, b_1], [a_2, b_2], \ldots, [a_k, b_k]$, where $k>2$, insert an intermediate node $[b_1, b]$ as a child of $[a, b]$ of zero weight $p_{[b_1, b]} = 0$; and move the children $[a_2, b_2], \ldots, [a_k, b_k]$ below $[b_1, b]$.
	
	We will derive a sequence of lower bounds for $\DP[s, t, L]$ using the tree $\mathcal{L}$ and the dynamic programming rules. Intuitively, we will keep unpacking the term $\DP[s, t, L]$ recursively using the dynamic programming rule, and collect all the additive terms of the form $\wts(B[\cdot, \cdot])$ that show up during the unpacking procedure, and argue that the total sum of these additive terms is lower bounded by $\wts(F^*)$.
	
	To do this, to each node $[a, b]$ of $\mathcal{L}$, we will associate with a value $q_{[a, b]}\geq 0$ such that:
	\begin{equation}\label{eq:q}
    \DP\left[a, b, \wts\brac{\rho^*[a, b]}\right]\geq \sum_{[c, d]\in \mathcal{L}[a, b]}q_{[c, d]},
    \end{equation}
	where $\mathcal{L}[a, b]$ denotes the subtree of $\mathcal{L}$ rooted at $[a, b]$.
	We will define the values in $\{q_{[a, b]}: [a, b]\in \mathcal{L}\}$ as following. 
    \begin{itemize}
        \item $[a, b]$ is a leaf node of $\mathcal{L}$. In this case, assign $q_{[a,b]}=\wts(B[a,b])$.

        \item $[a, b]$ has exactly one child $[a_1, b_1]$. In this case, we assign
	$$\begin{aligned}
		q_{[a, b]} =& \wts(B[a, a_1]) + \wts(B[b_1, b])\\
		&+\mathbf{1}\big[\max\{\wts(\rho^*[a, b_1]), \wts(\rho^*[b_1, b])\} < \floor{\wts(\rho^*[a, b])}_2\big]\cdot \wts\brac{B[a, b]}\\
		&+ \mathbf{1}\big[\max\{\wts(\rho^*[a_1, b_1]), \wts(\rho^*[a, a_1])\} < \floor{\wts(\rho^*[a, b_1])}_2\big]\cdot \wts\brac{B[a, b_1]}.
	\end{aligned}$$

        \item $[a, b]$ has two children $[a_1, b_1]$ and $[a_2, b_2]$ (where $a_1$ lies between $a$ and $a_2$). In this case, we assign
	$$\begin{aligned}
		q_{[a, b]} &= \wts(B[a, a_1]) + \wts(B[b_1, a_2]) + \wts(B[b_2, b])\\
		&+\mathbf{1}\left[\max\{\wts(\rho^*[a, b_1]), \wts(\rho^*[b_1, b])\} < \floor{\wts(\rho^*[a, b])}_2\right]\cdot \wts\brac{B[a, b]}\\
		&+ \mathbf{1}\left[\max\{\wts(\rho^*[a, a_1]), \wts(\rho^*[a_1, b_1])\} < \floor{\wts(\rho^*[a, b_1])}_2\right]\cdot \wts\brac{B[a, b_1]}\\
		&+ \mathbf{1}\left[\max\{\wts(\rho^*[b_1, b_2]), \wts(\rho^*[b_2, b])\} < \floor{\wts(\rho^*[b_1, b])}_2\right]\cdot \wts\brac{B[b_1, b]}\\
		&+ \mathbf{1}\left[\max\{\wts(\rho^*[b_1, a_2]), \wts(\rho^*[a_2, b_2])\} < \floor{\wts(\rho^*[b_1, b_2])}_2\right]\cdot \wts\brac{B[b_1, b_2]}.
	\end{aligned}$$
    \end{itemize}

    Next, we verify \Cref{eq:q} based on definitions of $q_{[\cdot, \cdot]}$'s. This is done in a bottom up manner on the tree $\mathcal{L}$. The inequality holds trivially for leaf nodes by definition of $q_{[\cdot, \cdot]}$. For a non-leaf node $[a, b]$, if it has only one child $[a_1, b_1]$, then according our dynamic programming rule, we have
	$$\begin{aligned}
		\DP[a, b, \wts(\rho^*[a, b])] \geq& \DP[a, b_1, \wts(\rho^*[a, b_1])] + \DP[b_1, b, \wts(\rho^*[b_1, b])]\\
		&+ \mathbf{1}\big[\max\{\wts(\rho^*[a, b_1]), \wts(\rho^*[b_1, b])\} < \floor{\wts(\rho^*[a, b])}_2\big]\cdot \wts\brac{B[a, b]}\\
		\geq& \DP[a, a_1, \wts(\rho^*[a, a_1])] + \DP[a_1, b_1, \wts(\rho^*[a_1, b_1])] +  \DP[b_1, b, \wts(\rho^*[b_1, b])]\\
		&+\mathbf{1}\big[\max\{\wts(\rho^*[a, b_1]), \wts(\rho^*[b_1, b])\} < \floor{\wts(\rho^*[a, b])}_2\big]\cdot \wts\brac{B[a, b]}\\
		&+\mathbf{1}\big[\max\{\wts(\rho^*[a_1, b_1]), \wts(\rho^*[a, a_1])\} < \floor{\wts(\rho^*[a, b_1])}_2\big]\cdot \wts\brac{B[a, b_1]}\\
		\geq& \wts(B[a, a_1]) + \DP[a_1, b_1, \wts(\rho^*[a_1, b_1])]  + \wts(B[b, b_1])\\
		&+\mathbf{1}\big[\max\{\wts(\rho^*[a, b_1]), \wts(\rho^*[b_1, b])\} < \floor{\wts(\rho^*[a, b])}_2\big]\cdot \wts\brac{B[a, b]}\\
		&+\mathbf{1}\big[\max\{\wts(\rho^*[a_1, b_1]), \wts(\rho^*[a, a_1])\} < \floor{\wts(\rho^*[a, b_1])}_2\big]\cdot \wts\brac{B[a, b_1]}\\
    \geq & \DP[a_1, b_1, \wts(\rho^*[a_1, b_1])] + q_{[a, b]}
	\end{aligned} .$$
    Then by induction, we can conclude \Cref{eq:q}; here, the third inequality is due to \Cref{dyn-basic}, and recall the definition that $\floor{\wts(\rho^*[a, b])}_2=2^k$ such that $\wts(\rho^*[a, b])\in [2^k,2^{k+1})$.

    Next, assume $[a, b]$ has two children $[a_1, b_1]$ and $[a_2, b_2]$ (where $a_1$ lies between $a$ and $a_2$). According to our dynamic programming rule, we have
	$$\begin{aligned}
		\DP[a, b, \wts(\rho^*[a, b])] 
    \geq & \DP[a, b_1, \wts(\rho^*[a, b_1])] + \DP[b_1, b, \wts(\rho^*[b_1, b])]\\
		&+ \mathbf{1}\big[\max\{\wts(\rho^*[a, b_1]), \wts(\rho^*[b_1, b])\} < \floor{\wts(\rho^*[a, b])}_2\big]\cdot \wts\brac{B[a, b]}
	\end{aligned} .$$

    To further expand the two terms $\DP[a_1, b_1, \wts(\rho^*[a_1, b_1])], \DP[a_2, b_2, \wts(\rho^*[a_2, b_2])]$, we apply again the dynamic programming rules together with \Cref{dyn-basic}, we have
    $$\begin{aligned}
	\DP[a, b_1, \wts(\rho^*[a, b_1])] \geq & \wts(B[a, a_1]) + \DP[a_1, b_1, \wts(\rho^*[a_1, b_1])] \\
		&+ \mathbf{1}\left[\max\{\wts(\rho^*[a, a_1]), \wts(\rho^*[a_1, b_1])\} < \floor{\wts(\rho^*[a, b_1])}_2\right]\cdot \wts\brac{B[a, b_1]}.
    \end{aligned}$$
    $$\begin{aligned}
	\DP[b_1, b, \wts(\rho^*[b)1, b])] \geq & \wts(B[b_1, a_2]) + \DP[a_2, b_2, \wts(\rho^*[a_2, b_2])] + \wts(B[b_2, b])\\
    &+ \mathbf{1}\left[\max\{\wts(\rho^*[b_1, b_2]), \wts(\rho^*[b_2, b])\} < \floor{\wts(\rho^*[b_1, b])}_2\right]\cdot \wts\brac{B[b_1, b]}\\
    &+ \mathbf{1}\left[\max\{\wts(\rho^*[b_1, a_2]), \wts(\rho^*[a_2, b_2])\} < \floor{\wts(\rho^*[b_1, b_2])}_2\right]\cdot \wts\brac{B[b_1, b_2]}.
    \end{aligned}$$
    Summing up the above three inequalities, we have
    $$\begin{aligned}
        \DP[a, b, \wts(\rho^*[a, b])] 
    \geq & \DP[a_1, b_1, \wts(\rho^*[a_1, b_1])] + \DP[a_2, b_2, \wts(\rho^*[a_2, b_2])]\\
		&+ \wts(B[a, a_1]) + \wts(B[b_1, a_2]) + \wts(B[b_2, b])\\
		&+ \mathbf{1}\left[\max\{\wts(\rho^*[a, b_1]), \wts(\rho^*[b_1, b])\} < \floor{\wts(\rho^*[a, b])}_2\right]\cdot \wts\brac{B[a, b]}\\
		&+ \mathbf{1}\left[\max\{\wts(\rho^*[a, a_1]), \wts(\rho^*[a_1, b_1])\} < \floor{\wts(\rho^*[a, b_1])}_2\right]\cdot \wts\brac{B[a, b_1]}\\
		&+ \mathbf{1}\left[\max\{\wts(\rho^*[b_1, b_2]), \wts(\rho^*[b_2, b])\} < \floor{\wts(\rho^*[b_1, b])}_2\right]\cdot \wts\brac{B[b_1, b]}\\
		&+ \mathbf{1}\left[\max\{\wts(\rho^*[b_1, a_2]), \wts(\rho^*[a_2, b_2])\} < \floor{\wts(\rho^*[b_1, b_2])}_2\right]\cdot \wts\brac{B[b_1, b_2]}\\
    \geq&\DP[a_1, b_1, \wts(\rho^*[a_1, b_1])] + \DP[a_2, b_2, \wts(\rho^*[a_2, b_2])] + q_{[a, b]}.
    \end{aligned}$$
    By induction, we can show that
 $$\DP\left[s, t, L\right]\geq \sum_{[a, b]\in \mathcal{L}}q_{[a, b]}.$$
	Finally, let us we prove that: $$\sum_{[a, b]\in \mathcal{L}}q_{[a, b]}\geq \sum_{[a, b]\in \mathcal{L}}p_{[a, b]} .$$
    
    By definition~\eqref{eq:p}, the term $p_{[a, b]}$ is the total weight of edges $e\in F^*$ which are $\frac{2}{3}$-hanging on $\rho^*$ at $(a, b)$, and each $q_{[a, b]}$ is a sum of several terms of the form $\wts(B[\cdot, \cdot])$. So the proof strategy is to assign each edge $e\in F^*$ to a set $B[\cdot, \cdot]$ which contains $e$ and contributes to a value $q_{[a, b]}$.

    Consider any edge $e\in F^*$ which is $\frac{2}{3}$-hanging at $(a, b)$. Let $[c, d]$ be the lowest descendant of $[a, b]$ in $\mathcal{L}$ such that $\wts(\rho^*[c, d])\geq \floor{\wts(\rho^*[a, b])}_2$. There are several cases to consider.

    \begin{itemize}
        \item $[c, d]$ is a leaf node of $\mathcal{L}$.

        According to the assumption, we know that $\dist_G(c, d)\geq \frac{1}{1+\eps}\cdot \wts(\rho^*[c, d]) > \frac{1}{2(1+\eps)}\cdot \wts(\rho^*[a, b])$ and $e$ is $\frac{2}{3}$-hanging at $(a, b)$ on $\rho^*$. Since $c, d$ both are lying on the sub-path $\rho^*[a, b]$, by \Cref{hang} we know that $e$ is $\frac{1}{3(1+\eps)}$-hanging at $(c, d)$ on the shortest path $\pi_{c,d}$, and so $e\in B[c, d]$. By definition, $q_{[c, d]} = \wts(B[c, d])$, so we can assign $\wts(e)$ to $q_{[c, d]}$.

        \item $[c, d]$ has one child $[c_1, d_1]$ in $\mathcal{L}$.

        If $\wts(B[c, c_1])$ or $\wts(B[d, d_1])$ is at least $\floor{\wts[\rho^*[c, d]]}_2$, then by the same calculation as above, we can argue that $e$ is $\frac{1}{3(1+\eps)}$-hanging at one of $(c, c_1)$ or $(d_1, d)$ on shortest path $\pi_{a, a_1}$ or $\pi_{d_1, d}$, respectively.
        
        So, let us assume that $\wts(B[c, c_1]), \wts(B[d_1, d]) < \floor{\wts(\rho^*[c, d])}_2$. There are two sub-cases to verify.
        \begin{itemize}
            \item $\wts(\rho^*[c, d_1]) < \floor{\wts(\rho^*[c, d])}_2$.
            
            In this case, by definition of $q_{[c, d]}$, it concludes the term $\wts(B[c, d])$. As $\dist_G(c, d)\geq \frac{1}{1+\eps}\cdot \wts(\rho^*[c, d]) > \frac{1}{2(1+\eps)}\cdot \wts(\rho^*[a, b])$, we know that $e$ is $\frac{1}{3(1+\eps)}$-hanging at $(c, d)$ on shortest path $\pi_{s, t}$. So we can assign $e$ to $q_{[c, d]}$.

            \item $\wts(\rho^*[c, d_1]) \geq \floor{\wts(\rho^*[c, d])}_2$.

            In this case, since $[c, d]$ is the lowest descendant of $[a, b]$ such that $\wts(\rho^*[c, d])\geq \floor{\wts(\rho^*[a, b])}_2$, we have $\wts(\rho^*[c_1, d_1])< \floor{\wts(\rho^*[a, b])}_2\leq \floor{\wts(\rho^*[c, d])}_2$. As we have already assumed $\wts(\rho^*[c, c_1]) <  \floor{\wts(\rho^*[c, d])}_2$, $q_{[c, d]}$ contains the term $\wts(B[c, d_1])$. By the same calculation, we can show that $e$ is $\frac{1}{3(1+\eps)}$-hanging at $(c, d_1)$ and can be assigned to $q_{[c, d]}$.
        \end{itemize}

        \item $[c, d]$ has two children,  $[c_1, d_1]$ and $[c_2, d_2]$, in $\mathcal{L}$ and $c_1$ lies between $c$ and $c_2$ on $\rho^*$.
        
        By the choice of descendant $[c, d]$, we know that 
        $$\max\left\{\wts(\rho^*[c_1, d_1]), \wts(\rho^*[c_2, d_2])\right\} < \floor{\wts(\rho^*[a, b])}\leq \floor{\wts(\rho^*[c, d])}_2 .$$
        We can first assume that 
        $$\max\left\{\wts(\rho^*[c, c_1]), \wts(\rho^*[d_1, c_2]), \wts(\rho^*[d_2, d])\right\} < \floor{\wts(\rho^*[c, d])}_2 .$$        
        since otherwise we could assign $e$ to one of the three edge sets which contribute to $q_{[c, d]}$. Next, we need to discuss several cases.
	\begin{itemize}
		\item $\max\left\{\wts(\rho^*[c, d_1], \wts(\rho^*[d_1, d]))\right\} < \floor{\wts(\rho^*[c, d])}_2$.

        In this case, the term $\wts(B[c, d])$ would be included in the definition of $q_{[c, d]}$. As before, we can show that $e$ is $\frac{1}{3(1+\eps)}$-hanging at $(c, d)$, so we can assign $e\in B[c, d]$ to $q_{[c, d]}$.

        \item $\wts(\rho^*[c, d_1]) \geq \floor{\wts(\rho^*[c, d])}_2$.

        In this case, since we already know 
        $$\wts(\rho^*[c, c_1]), \wts(\rho^*[c_1, d_1]) < \floor{\wts(\rho^*[c, d])}_2 \leq \floor{\wts(\rho^*[c, d_1])}_2 ,$$
        the term $\wts(B[c, d_1])$ should be included in $q_{[c, d]}$. As $\wts(\rho^*[c, d_1]) \geq \floor{\wts(\rho^*[c, d])}_2$, for the same reason as before we can show $e\in B[c, d_1]$, and so we are able to assign $e$ to $q_{[c, d]}$.

        \item $\wts(\rho^*[d_1, d]) \geq \floor{\wts(\rho^*[c, d])}_2$.

        In this case, if $\wts(\rho^*[d_1, d_2]) < \floor{\wts(\rho^*[d_1, d])}_2$, then the term $\wts(B[d_1, d])$ would be included in the definition of $q_{[c, d]}$. As $\wts(\rho^*[d_1, d]) \geq \floor{\wts(\rho^*[c, d])}_2 > \frac{1}{2}\wts(\rho^*[a, b])$, we know that $e\in B[d_1, d]$, and so we can assign $e$ to $q_{[c, d]}$.

        Otherwise, we have $\wts(\rho^*[d_1, d_2]) \geq \floor{\wts(\rho^*[d_1, d])}_2\geq \floor{\wts(\rho^*[c, d])}_2$. Hence, by the choice of $[c, d]$, we must have
        $$\max\left\{\wts(\rho^*[d_1, c_2]), \wts(\rho^*[c_2, d_2]) \right\} < \floor{\wts(\rho^*[a, b])}_2\leq \floor{\wts(\rho^*[d_1, d_2])}_2 .$$
        Consequently, $q_{[c, d]}$ contains the term $\wts(B[d_1, d_2])$. As $\wts(\rho^*[d_1, d_2])\geq\floor{\wts(\rho^*[a, b])}_2$, we know that $e\in B[d_1, d_2]$, and so we can assign $e$ to $q_{[c, d]}$.
	\end{itemize}
    \end{itemize}
    
	In this way, we can show that any term $p_{[a. b]} > 0$ can be assigned to some term $q_{[c, d]}$, and so we have
	$$\DP\left[s, t, L\right]\geq \sum_{[a, b]\in \mathcal{L}}q_{[a, b]}\geq \sum_{[a, b]\in \mathcal{L}}p_{[a, b]}\geq \frac{\alpha}{6}\cdot \wts(\rho^*) .
    \qedhere $$
\end{proof}

\subsubsection{Upper Bounding the Multiplicity of $P[s^*, t^*, L^*]$} \Cref{lowerbd} implies that the weight of the multi-set $P[s^*, t^*, L^*]$ of edges that are $\frac{1}{3(1+\eps)}$-hanging on $\rho[s^*, t^*, L^*]$ is at least $\frac{\alpha}{6}\cdot \wts(\rho[s^*, t^*, L^*])$. However, it does not mean we are pruning a lot of weight by updating $F^\old\leftarrow F^\old\cup P[s^*, t^*, L^*]$ since the multi-set $P[s^*, t^*, L^*]$ might include many edges in $E(H)\setminus (F^\nw\cup F^\old)$ with high multiplicity. Next, our main goal is to upper bound the total amount of \emph{over-counting} (that is, the difference between the weight of the multi-set $P[s^*, t^*, L^*]$ and the corresponding (non-multi) set).

Suppose $\rho[s^*, t^*, L^*]$ is a (not necessarily simple) path $\rho = \langle (s^*=)u_1, u_2, \ldots, u_m(=t^*) \rangle$. Build a tree $\DPtree$ according to the maximizers of the dynamic programming table as follows. The tree is built in a top-down manner. Each node of $\DPtree$ is associated with an interval $[i, j]\subseteq [1, m]$.  The root node corresponds to interval $[1, m]$. For an arbitrary node $[i, j]$, consider the maximizer for computing the entry $\DP[u_i, u_j, \wts(\rho[u_i, u_j])]$. Since $\rho[u_i, u_j]$ is the walk $\langle u_i, u_{i+1}, \ldots, u_{j}\rangle$, the maximizer for entry $\DP[u_i, u_j, \wts(\rho[u_i, u_j])]$ is either  a vertex $u_k$ for $i<k<j$, or $\DP[u_i, u_j, \wts(\rho[u_i, u_j])]$ is simply equal to $\wts(B[u_i, u_j])$. In the former case, we leave $[i, j]$ as a leaf node on $\DPtree$; and in the latter case, we create two children of $[i, j]$ associated with $[i, k]$ and $[k, j]$.

\begin{definition}
	According to the dynamic programming rules, for any copy of edge $e$ in the multi-set $P[s^*, t^*, L^*]$, it should appear at some tree node $[i, j]$ by set $B[u_i, u_j]$. For convenience, we say that $e$ is \emph{hanging} at the interval $[i, j]$ on $\rho$.
\end{definition}

It is clear that any two different copies of the same edge $e$ in $P[s^*, t^*, L^*]$ should be hanging at distinct intervals since none of the sets $B[x, y]$ is a multi-set. Therefore,  any edge $e\in \left(E(H)\setminus (F^\nw\cup F^\old)\right)\cap P[s^*, t^*, L^*]$ is mapped to a set $I_e$ of distinct intervals in $\DPtree$.

\begin{lemma}\label{nest-depth}
	For any edge $e\in E(H)\setminus (F^\nw\cup F^\old)$ and any interval $[i, j]\in I_e$, the interval $[i, j]$ has at most $3$ ancestors in $I_e$ on $\DPtree$.
\end{lemma}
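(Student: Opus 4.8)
The claim is that the intervals in $I_e$ — the set of distinct intervals $[i,j]$ on $\DPtree$ at which some copy of edge $e$ hangs (via $e\in B[u_i,u_j]$) — have bounded nesting depth: any $[i,j]\in I_e$ has at most $3$ ancestors in $I_e$. The overall goal of this subsection is to bound the multiplicity of $e$ in $P[s^*,t^*,L^*]$, and this lemma is the crucial nesting-bound ingredient. My plan is to suppose, for contradiction, that some $[i,j]\in I_e$ has at least $4$ ancestors in $I_e$ on $\DPtree$, say $[i_0,j_0]\supsetneq [i_1,j_1]\supsetneq [i_2,j_2]\supsetneq [i_3,j_3]\supseteq [i,j]$, all belonging to $I_e$. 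Because $e=(a,b)$ is $\tfrac{1}{3(1+\eps)}$-hanging at each $(u_{i_r},u_{j_r})$, each subpath $\rho[u_{i_r},u_{j_r}]$ has weight at least $\tfrac{1}{3(1+\eps)}\wts(e)$ and admits a detour $u_{i_r}\to a\to b\to u_{j_r}$ (through the edge $e$) of length at most $\dist_G(u_{i_r},a)+\wts(e)+\dist_G(b,u_{j_r})\le (1+\eps)\wts(e)$ — wait, more precisely the hanging condition at $(u_{i_r},u_{j_r})$ on $\pi_{u_{i_r},u_{j_r}}$ gives $\dist_G(u_{i_r},a)+\dist_G(b,u_{j_r})\le (1+\eps)\wts(e)-\wts(\rho^*$-type term$)$, so the ``bypass'' $u_{i_r}\rightsquigarrow a \to b \rightsquigarrow u_{j_r}$ costs strictly less than the subpath it replaces, by a gap of at least $\wts(\rho[u_{i_r},u_{j_r}])-$ (bypass cost) $\ge \tfrac{1}{3(1+\eps)}\wts(e) - \eps\wts(e)$, which is a constant fraction of $\wts(e)$.

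**The key step.** The idea is that since the intervals are nested, the corresponding subpaths of $\rho = \rho[s^*,t^*,L^*]$ are nested sub-walks, and replacing each by its bypass through $e$ would shorten $\rho$. Concretely, I would argue: if $[i_3,j_3]\subseteq [i_2,j_2]\subseteq[i_1,j_1]\subseteq [i_0,j_0]$ are all in $I_e$, then on $\rho$ we can shortcut $\rho[u_{i_0},u_{j_0}]$ using the bypass through $e$ — but we must be careful: we want to extract \emph{several disjoint} savings. Since the nesting gives us $\ge 4$ levels, and each bypass through $e$ uses the edge $e$ (reused across levels is fine because $\rho$ need not be simple), the cleaner route is: take the outermost interval $[i_0,j_0]$, replace $\rho[u_{i_0},u_{j_0}]$ by $u_{i_0}\rightsquigarrow a\to b\rightsquigarrow u_{j_0}$ (shortest $G$-paths), obtaining a new walk $\rho'$ between $s^*$ and $t^*$ that is strictly shorter than $\rho$. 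But $\rho = \rho[s^*,t^*,L^*]$ and the dynamic program only stores $\rho$ with $\wts(\rho) = L^* \le (1+\eps)\dist_G(s^*,t^*)$; this alone is not a contradiction. The real contradiction must come from comparing $\DP$ values: I expect the argument is that if $e$ hangs at $4$ nested intervals, one can \emph{reroute} $\rho$ through $e$ to get a shorter walk on which the same charging set (minus a bit) still hangs, so the ratio $\beta$ at some other triple would exceed the one achieved at $(s^*,t^*,L^*)$ — contradicting maximality of $\beta$. Alternatively — and I think this is the intended argument given the technical-overview discussion around Figure~\ref{fig:overview-shortcut} — the point is purely about the walk $\rho$ itself: if $e$ hangs at $\ge 4$ nested intervals $[i_0,j_0]\supset\cdots\supset[i_3,j_3]$, then in particular the two ``gaps'' $\rho[u_{i_0},u_{i_3}]$ and $\rho[u_{j_3},u_{j_0}]$ together with $\rho[u_{i_3},u_{j_3}]$ partition $\rho[u_{i_0},u_{j_0}]$, each gap has length $\le \dist_G(u_{i_0},a)+\dots$ roughly, and one derives $\wts(\rho[u_{i_0},u_{j_0}]) \le (1+\eps)\wts(e)$ while also $\wts(\rho[u_{i_3},u_{j_3}])\ge \tfrac{1}{3(1+\eps)}\wts(e)$ and $\wts(\rho[u_{i_0},u_{j_0}])\ge \wts(\rho[u_{i_1},u_{j_1}]) + (\text{gap lengths at level }0)$ — iterating the ``each level adds a positive chunk'' inequality four times forces $\wts(\rho[u_{i_0},u_{j_0}]) > (1+\eps)\wts(e)$, contradicting the hanging condition at level $0$. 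The constant $3$ comes from $(1+\eps)/(\tfrac{1}{3(1+\eps)}) \approx 3$ when $\eps$ is small, giving room for $3$ but not $4$ nested levels beyond the innermost.

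**Which step is the obstacle.** The main obstacle is pinning down exactly \emph{why} passing from one nested interval $[i_{r+1},j_{r+1}]$ to the enclosing $[i_r,j_r]$ forces the subpath weight $\wts(\rho[u_{i_r},u_{j_r}])$ to strictly exceed $\wts(\rho[u_{i_{r+1}},u_{j_{r+1}}])$ by a definite amount — i.e., that the ``annular'' parts $\rho[u_{i_r},u_{i_{r+1}}]$ and $\rho[u_{j_{r+1}},u_{j_r}]$ cannot be too short. This should follow from the hanging condition \emph{at the inner interval}: since $e$ is $\tfrac{1}{3(1+\eps)}$-hanging at $(u_{i_{r+1}},u_{j_{r+1}})$, we have $\dist_G(u_{i_{r+1}},a)+\dist_G(b,u_{j_{r+1}}) \le (1+\eps)\wts(e) - \wts(\rho^*[\cdot])$, and combined with the hanging condition at the \emph{outer} interval and the triangle inequality applied to $\dist_G(u_{i_r},a) \le \dist_G(u_{i_r},u_{i_{r+1}}) + \dist_G(u_{i_{r+1}},a)$, one extracts a lower bound on the annulus. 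Getting the constants to land on ``$3$ ancestors'' precisely, rather than some other small constant, and handling the case analysis of where $[i,j]$ sits relative to its ancestors in $\DPtree$ (the DP tree is binary but intervals may share an endpoint), is the delicate bookkeeping I'd expect to occupy most of the proof.
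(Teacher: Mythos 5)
Your proposal has a genuine gap: the mechanism you propose for bounding the nesting depth does not work, and the mechanism the paper actually uses is one you never invoke. You correctly identify that the contradiction should come from squeezing the lengths $\wts(\rho[u_{i_r},u_{j_r}])$ of the nested contributing intervals between the lower bound $\approx\frac{1}{3(1+\eps)}\wts(e)$ (from the hanging condition) and the upper bound $\approx(1+\eps)\wts(e)$. But your proposed reason why each nesting level must consume a definite chunk of length --- that the ``annular'' parts $\rho[u_{i_r},u_{i_{r+1}}]$ and $\rho[u_{j_{r+1}},u_{j_r}]$ cannot be too short --- is false, and you essentially admit you cannot prove it. The triangle inequality $\dist_G(u_{i_r},a)\le \dist_G(u_{i_r},u_{i_{r+1}})+\dist_G(u_{i_{r+1}},a)$ only upper-bounds $\dist_G(u_{i_r},a)$; nothing in the two hanging conditions prevents the annuli from having weight as small as $1$, so an edge could hang at arbitrarily many nested intervals of nearly equal length. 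Your other suggestion (contradicting the maximality of $\beta$) is also not developed and is not the route taken.

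The actual proof hinges on the indicator $\mathbf{1}\left[\max\{L',L-L'\}<\floor{L}_2\right]$ in the DP transition: a copy of $e$ from $B[u_i,u_j]$ is collected at a node of $\DPtree$ \emph{only if} both children have length strictly below $\floor{\wts(\rho[u_i,u_j])}_2$. Hence along any chain of nested intervals in $I_e$, the quantities $\floor{\wts(\rho[u_{i_r},u_{j_r}])}_2$ are \emph{strictly decreasing powers of two}. Since all of them lie in the interval $\left[\frac{\wts(e)}{6(1+\eps)},\,(1+\eps)\wts(e)\right]$ (lower bound from $\frac{1}{3(1+\eps)}$-hanging, upper bound from hanging condition (2) plus $L\le(1+\eps)\dist_G$), there is no room for five distinct powers of two when $\eps$ is small, which is exactly what four ancestors plus the interval itself would require. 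This halving structure --- the reason $\floor{\cdot}_2$ appears in the algorithm at all --- is the missing idea in your proposal.
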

\begin{proof}
	Suppose, for the sake of contradiction, that $[i, j]$ has four different ancestors $[i_1, j_1]\subset [i_2, j_2]\subset [i_3, j_3]\subset [i_4, j_4]$ in the tree $\DPtree$ which are all in the set $I_e$. On the one hand, $e$ should belong to $B[i, j]\cap B[i_1, j_1]\cap B[i_2, j_2]\cap B[i_3, j_3]\cap B[i_4, j_4]$, which implies that $e$ is $\frac{1}{3(1+\eps)}$-hanging at all of the pairs $(i, j), (i_1, i_1), (i_2, j_2), (i_3, j_3), (i_4, j_4)$, and therefore we have
	$$\begin{aligned}
	    \wts(e) &\geq \dist_G(u_{i_4}, u_{j_4}) \geq \dist_G(u_{i_3}, u_{j_3}) \geq\dist_G(u_{i_2}, u_{j_2})\\
        &\geq\dist_G(u_{i_1}, u_{j_1}) \geq\dist_G(u_{i}, u_{j}) \geq \frac{1}{3(1+\eps)}\wts(e).
	\end{aligned}$$
	However, by our rule of dynamic programming, we know that $$\begin{aligned}
		(1+\epsilon)\wts(e) &\geq \floor{\wts(\rho[u_{i_4}, u_{j_4}])}_2 > \floor{\wts(\rho[u_{i_3}, u_{j_3}])}_2 > \floor{\wts(\rho[u_{i_2}, u_{j_2}])}_2\\
		&> \floor{\wts(\rho[u_{i_1}, u_{j_1}])}_2 > \floor{\wts(\rho[u_{i}, u_{j}])}_2 \geq \frac{1}{6(1+\eps)}\wts(e).
	\end{aligned}$$
	This is impossible as there cannot be $5$ different integral powers of $2$ in the interval $\left[\frac{\wts(e)}{6(1+\eps)}, (1+\eps)\wts(e)\right]$ when $\eps < 0.1$.
\end{proof}

For each $e\in E(H)\setminus (F^\nw\cup F^\old)$, let $J_e\subseteq I_e$ be the set of lowest nodes on $\DPtree$. According to \Cref{nest-depth}, we know that $|J_e|\geq \frac{1}{4}|I_e|$, and therefore $$\sum_{e\in E(H)\setminus (F^\nw\cup F^\old)}|J_e|\cdot\wts(e)\geq \frac{1}{4}\sum_{e\in E(H)\setminus (F^\nw\cup F^\old)}|I_e|\cdot\wts(e) = \frac{\beta}{4}\cdot\wts(\rho)\geq  \frac{\alpha}{48}\cdot\wts(\rho).$$
Recall that $\beta = \DP[s^*, t^*, L^*] / \wts(\rho[s^*, t^*, L^*])$. To lower bound the total weight of edges in $P[s^*, t^*, L^*]$ without any double-counting, it suffices to lower bound the quantity $\sum_{e\in E(H)\setminus F^*}\mathbf{1}[J_e\neq\emptyset]\cdot \wts(e)$.

\paragraph{Lower bounding $\sum_{e\in E(H)\setminus (F^\old\cup F^\nw)}\mathbf{1}[J_e\neq\emptyset]\cdot \wts(e)$ via shortcuts.}
Intuitively speaking, if a set $J_e$ is very large, say $|J_e| > 20$, then $e$ is hanging at many different positions simultaneously $[c_1, d_1], [c_2, d_2], \ldots, [c_{20}, d_{20}]$ on $\rho$. Then, $\wts(\rho[u_{c_1}, u_{d_{20}}])$ would be much larger than $\dist_G(u_{c_1}, u_{d_{20}})$, and so we could shortcut the path $\rho$ by replacing the sub-path $\rho[u_{c_1}, u_{d_{20}}]$ with the shortest path between $u_{c_1}, u_{d_{20}}$ and reduce $\wts(\rho)$ significantly. Since we knew that $\rho$ was already a $(1+\eps)$-approximate shortest path between $s$ and $t$ at the beginning, we could shortcut $\rho$ by at most $\eps\cdot \dist_G(s, t)$ in total length, which would upper bound the total amount of multiplicities of $J_e$'s.

Let us now formalize the above idea. We will design a procedure which repeatedly finds shortcuts along $\rho$ and maintains some invariants. 
\begin{framed}
    \noindent A shortcut structure on $\rho$.
    \begin{itemize}[leftmargin=*]
	\item A subset $K_e\subseteq J_e$ for all $e\in E(H)\setminus (F^\nw\cup F^\old)$.
	\item A sequence of disjoint intervals $\interval = \{[a_1, b_1], [a_2, b_2], \ldots, [a_k, b_k]\}$ such that $b_i\leq a_{i+1}$ for $1\leq i<k$.
	\item A sequence of shortcut (not necessarily simple) paths $\paths = \{\eta_1, \eta_2, \ldots, \eta_{k+1}\}$, where $\eta_i$ is a path in $G$ connecting $u_{b_{i-1}}$ and $u_{a_i}$ such that $\wts(\eta_i)< \wts(\rho[u_{b_{i-1}}, u_{a_i}])$; conventionally, set $s = u_{b_0}$ and $t = u_{a_{k+1}}$. The paths $\eta_i$ are the so-called shortcuts.
    
    Denote $\eta = \eta_1\circ\rho[u_{a_1}, u_{b_1}]\circ \eta_2\circ \rho[u_{a_2}, u_{b_2}]\circ\cdots\circ \rho[u_{a_k}, u_{b_k}]\circ \eta_{k+1}$ which is a (not necessarily simple) path between $s$ and $t$. 
\end{itemize}
\end{framed}

We formulate two properties in terms of the above notation.
 \begin{invariant}\label{inv}
    Our shortcut algorithm will preserve the following properties.
	\begin{enumerate}[(1)]
	\item For any $e\in E(H)\setminus (F^\nw\cup F^\old)$, if $|K_e| > 20$, then there exists an interval $[a, b]\in \interval$ that contains all $[c, d]\in K_e$. 
	\item $\sum_{e\in E(H)\setminus (F^\nw\cup F^\old)} |J_e\setminus K_e|\cdot \wts(e) \leq 10\beta\cdot\brac{\wts(\rho) - \wts(\eta)}$.
	\end{enumerate}
\end{invariant}

Let us now describe the shortcut algorithm which will only refine the intervals in $\interval$. At the beginning, initialize $K_e \leftarrow J_e$ for all $e\in E(H)\setminus (F^\nw\cup F^\old)$, $\interval \leftarrow \{[1, m]\}$, and $\eta \leftarrow \rho$. Note that \Cref{inv} holds initially. Our strategy is to successively decrease the length of $\eta$ as long as some nonempty set $K_e$ has size larger than $20$. We need to define the following notion of \emph{span} for each edge $e\in E(H)\setminus (F^\nw\cup F^\old)$ such that $|K_e| > 20$.
\begin{definition}[span]
	For each edge $e\in E(H)\setminus (F^\nw\cup F^\old)$ such that $|K_e| > 20$, by \Cref{inv}(1), there exists an interval $[a, b]\in\interval$ which contains all intervals in $K_e$. Assume $[c_1, d_1], [c_2, d_2], \ldots, [c_l, d_l]$ are all elements of $K_e$, then define the \emph{span} of $e$ (with respect to the current $\paths, \interval$) to be $\spn(e) = \wts(\rho[u_{c_1}, u_{d_l}])$.
\end{definition}
\begin{claim}
	For any $e\in E(H)\setminus (F^\nw\cup F^\old)$ such that $|K_e| > 20$, we have $\spn(e)> \frac{20}{3(1+\eps)}\wts(e)$.
\end{claim}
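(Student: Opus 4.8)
The plan is to use the fact that the more-than-twenty intervals in $K_e$ are pairwise non-nested, so that they carve out $|K_e|$ edge-disjoint sub-walks of $\rho[u_{c_1},u_{d_l}]$, and that each of these sub-walks is long precisely because $e$ is $\tfrac1{3(1+\eps)}$-hanging there.

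First I would recall that $K_e\subseteq J_e$, and that $J_e$ is the set of $\subseteq$-minimal (``lowest'') nodes of $I_e$ in $\DPtree$ — the same property already used to derive $|J_e|\ge\frac14|I_e|$. In particular no element of $K_e$ is an ancestor of another, so the intervals of $K_e$ can be listed in left-to-right order $[c_1,d_1],[c_2,d_2],\dots,[c_l,d_l]$ along $\rho$ with $d_i\le c_{i+1}$ for all $i$, where $l=|K_e|>20$; this is exactly the ordering implicit in the definition of $\spn(e)$. Consequently the sub-walks $\rho[u_{c_1},u_{d_1}],\dots,\rho[u_{c_l},u_{d_l}]$ are pairwise edge-disjoint and all contained in $\rho[u_{c_1},u_{d_l}]$, hence
\[
\spn(e)=\wts\bigl(\rho[u_{c_1},u_{d_l}]\bigr)\ \ge\ \sum_{i=1}^{l}\wts\bigl(\rho[u_{c_i},u_{d_i}]\bigr).
\]

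Next I would bound each summand from below. Since $[c_i,d_i]\in K_e\subseteq I_e$, the edge $e$ lies in $B[u_{c_i},u_{d_i}]$, i.e.\ $e$ is $\tfrac1{3(1+\eps)}$-hanging at $(u_{c_i},u_{d_i})$ on $\pi_{u_{c_i},u_{d_i}}$; condition~(1) of \Cref{hang} then gives $\dist_G(u_{c_i},u_{d_i})=\wts(\pi_{u_{c_i},u_{d_i}})\ge\tfrac1{3(1+\eps)}\wts(e)$. Because $\rho[u_{c_i},u_{d_i}]$ is a walk in $G$ from $u_{c_i}$ to $u_{d_i}$, we get $\wts(\rho[u_{c_i},u_{d_i}])\ge\dist_G(u_{c_i},u_{d_i})\ge\tfrac1{3(1+\eps)}\wts(e)$. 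Substituting into the displayed inequality and using $l>20$ together with $\wts(e)>0$ yields
\[
\spn(e)\ \ge\ l\cdot\frac{1}{3(1+\eps)}\,\wts(e)\ >\ \frac{20}{3(1+\eps)}\,\wts(e),
\]
which is the claim. The only points requiring care are that the elements of $K_e$ really are pairwise non-nested — which is immediate from $J_e$ being an antichain in $\DPtree$ — and that $\spn$ is measured along the walk $\rho$ rather than via shortest paths, so that the edge-disjointness of the $\rho[u_{c_i},u_{d_i}]$ is what legitimizes adding their lengths; beyond this bookkeeping there is no real obstacle.
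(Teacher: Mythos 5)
Your proof is correct and follows the same reasoning the paper compresses into one sentence: the intervals of $K_e$ are pairwise internally disjoint (being an antichain in $\DPtree$), each sub-walk $\rho[u_{c_i},u_{d_i}]$ has weight at least $\dist_G(u_{c_i},u_{d_i})\ge\frac{1}{3(1+\eps)}\wts(e)$ by the hanging condition, and summing over the $l>20$ disjoint pieces gives the bound. The extra care you take in routing the lower bound through $\pi_{u_{c_i},u_{d_i}}$ (since $B[\cdot,\cdot]$ is defined via shortest paths) and then to the walk $\rho$ is exactly the right bookkeeping.
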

\begin{proof}
	This is straightforward since $e$ is $\frac{1}{3(1+\eps)}$-hanging on $\rho$ at $(u_{c_i}, u_{d_i})$ and $l>20$.
\end{proof}

In each iteration of the shortcut procedure, as long as there exists $e\in E(H)\setminus (F^\nw\cup F^\old)$  such that $|K_e| > 20$, let $f\in E(H)\setminus (F^\nw\cup F^\old)$ be the edge such that $|K_f| > 20$ and $\spn(f)$ is \emph{maximized}. Let $[c_1, d_1], [c_2, d_2], \ldots, [c_l, d_l]$ be all the elements in $K_f$, and we already know $l > 20$. Next, we show how to update $\interval$, $\paths$ and sets $\{K_e: e\in E(H)\setminus (F^\nw\cup F^\old)\}$.
    \begin{itemize}
	\item Updating $\paths, \interval$.
	
	Suppose $f = (x, y)$. By definition of $I_f$, each $f$ is $\frac{1}{3(1+\eps)}$-hanging at vertex pair $(u_{c_1}, u_{d_1})$ and $(u_{c_l}, u_{d_l})$. Therefore, we have
    $$\dist_G(x, u_{c_1})\leq \brac{1 -\frac{1}{3(1+\eps)}} \wts(f) < \frac{3}{4}\wts(f) ,$$
    $$\dist_G(y, u_{d_l}) < \brac{1 -\frac{1}{3(1+\eps)}} \wts(f)<\frac{3}{4}\wts(f) .$$ 
    Hence, by triangle inequality, we have
    $$\dist_G(u_{c_1}, u_{d_l}) < \dist_G(u_{c_1}, x) + \wts(f) + \dist_G(y, u_{d_l}) < \frac{5}{2}\wts(f) .$$
    Let $\lambda$ be the shortest path in $G$ between $u_{c_1}$ and $u_{d_l}$. Then, replace $[a, b]$ with $[a, c_1]$ and $[d_l, b]$ in $\interval$, and add $\lambda$ to $\paths$ as a new shortcut between $c_1$ and $d_l$.
	
	\item Updating $\{K_e: e\in E(H)\setminus (F^\nw\cup F^\old)\}$.
	
	Let $[i_1, j_1]$ be the maximal set corresponding to a tree node in $\DPtree$ such that $i_1 < c_1 < j_1$, and there exists an edge $e_1, |K_{e_1}| > 20$ and $[i_1, j_1]\in K_{e_1}$; if no such edge $e_1$ exists, then simply set $i_1 = j_1 = c_1$. Symmetrically, define $[i_2, j_2]$ to be the maximal interval such that $i_2 < d_l < j_2$, and there exists $e_2$ such that $|K_{e_2}| > 20$ and $[i_2, j_2]\in K_{e_2}$. Note that since both $[i_1, j_1], [i_2, j_2]$ are tree nodes in $\DPtree$, these two intervals are disjoint internally.
	
	To update the sets $\{K_e: e\in E(H)\setminus (F^\nw\cup F^\old)\}$, for each $K_e$ such that $|K_e| > 20$, if any $[i, j]\in K_e$ satisfies $[i, j]\subseteq [i_1, j_2]$, then remove $[i, j]$ from $K_e$.
\end{itemize}
See \Cref{fig:shortcut} for an illustration. Next, we show that these updates preserve \Cref{inv}.

\begin{figure}
    \centering
    \begin{tikzpicture}[scale=1.5]

    \node[circle, fill=black, inner sep=1.5pt, label=below:$s$] (s) at (0,0) {};
    \node[circle, fill=black, inner sep=1.5pt, label=below:$u_{c_1}$] (c1) at (2,0) {};
    \node[circle, fill=black, inner sep=1.5pt, label=below:$u_{d_1}$] (d1) at (2.5,0) {};
    \node[circle, fill=black, inner sep=1.5pt, label=below:$u_{c_2}$] (c2) at (4,0) {};
    \node[circle, fill=black, inner sep=1.5pt, label=below:$u_{d_2}$] (d2) at (4.5,0) {};
    \node[circle, fill=black, inner sep=1.5pt, label=below:$u_{c_3}$] (c3) at (6,0) {};
    \node[circle, fill=black, inner sep=1.5pt, label=below:$u_{d_3}$] (d3) at (6.5,0) {};
    \node[circle, fill=black, inner sep=1.5pt, label=below:$t$] (t) at (8.5,0) {};

    \node[circle, fill=black, inner sep=1.5pt] (n1) at (3,0) {};
    \node[circle, fill=black, inner sep=1.5pt] (n2) at (3.5,0) {};
    \node[circle, fill=black, inner sep=1.5pt] (n3) at (5,0) {};
    \node[circle, fill=black, inner sep=1.5pt] (n4) at (5.5,0) {};

    \node[circle, fill=black, inner sep=1.5pt] (x) at (3.5,1) {};
    \node[circle, fill=black, inner sep=1.5pt] (y) at (5,1) {};

    \node[circle, fill=black, inner sep=1.5pt] (x1) at (2.75, -1) {};
    \node[circle, fill=black, inner sep=1.5pt] (y1) at (3.75, -1) {};

    \node[circle, fill=black, inner sep=1.5pt] (x2) at (4.75, -1) {};
    \node[circle, fill=black, inner sep=1.5pt] (y2) at (5.75, -1) {};

    \node[color=orange] at (4.25, 1.3) {$f$};
    \node[color=red] at (3.25, -1.3) {$e_1$};
    \node[color=red] at (5.25, -1.3) {$e_2$};

    \draw[thick] (s) -- (c1) -- (d1) -- (c2) -- (d2) -- (c3) -- (d3) -- (t);

    \draw[thick, color=orange] (x) -- (y);
    \draw[thick, color=red] (x1) -- (y1);
    \draw[thick, color=red] (x2) -- (y2);    
    
    \draw[thick] (x) to[out=200, in=45] (c1);
    \draw[thick] (y) to[out=-20, in=135] (d3);
    \draw[thick] (y) to[out=220, in=30] (d1);
    \draw[thick] (x) to[out=-40, in=150] (c3);
    \draw[thick] (x) to[out=-70, in=120] (c2);
    \draw[thick] (y) to[out=-110, in=60] (d2);

    \draw[thick] (x1) to[out=80, in=240] (n1);
    \draw[thick] (y1) to[out=100, in=-60] (n2);
    \draw[thick] (x2) to[out=80, in=240] (n3);
    \draw[thick] (y2) to[out=100, in=-60] (n4);

\end{tikzpicture}
    \caption{In this example, we find the best edge $f$ maximizing $\spn(f)$ such that $|K_f| > 20$. Then, we make a shortcut between $u_{c_1}$ and $u_{d_1}$, and remove some intervals from $K_{e_1}$ and $K_{e_2}$.}
    \label{fig:shortcut}
\end{figure}
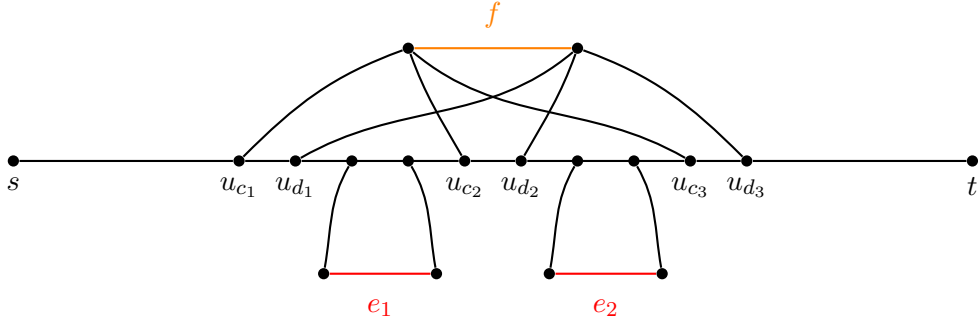

\begin{lemma}\label{shortcut-inv}
	After each iteration of updating $\paths$, $\interval$, and $\{K_e: e\in E(H)\setminus (F^\nw\cup F^\old)\}$, \Cref{inv} is preserved.
\end{lemma}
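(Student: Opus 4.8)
The plan is to show that a single iteration of the shortcut update keeps both items of \Cref{inv}; the engine of the whole argument is that $f$ was picked to have the \emph{largest} span among all edges with $|K_e|>20$. First I would pin down the relative positions of the auxiliary nodes, proving $i_1<c_1<j_1\le i_2<d_l<j_2$, so in particular $[c_1,d_l]\subseteq[i_1,j_2]$. The key point is that $[i_1,j_1]$ cannot reach past $d_l$: if $j_1>d_l$ then, since $i_1<c_1$ and every edge of $\rho$ has weight at least $1$, $\rho[u_{i_1},u_{j_1}]$ strictly contains $\rho[u_{c_1},u_{d_l}]$, so $\spn(f)\ge\spn(e_1)\ge\wts(\rho[u_{i_1},u_{j_1}])>\wts(\rho[u_{c_1},u_{d_l}])=\spn(f)$, a contradiction; symmetrically $i_2\ge c_1$. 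Since $[i_1,j_1]$ and $[i_2,j_2]$ are $\DPtree$-nodes they are nested or interior-disjoint, and nestedness is ruled out by $i_1<c_1$, $j_1\le d_l$, $c_1\le i_2$, $d_l<j_2$; hence they are interior-disjoint, so $j_1\le i_2$. (If $e_1$ or $e_2$ fails to exist, the corresponding endpoints collapse to $c_1$ or $d_l$ and the inequalities are immediate.)

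For \Cref{inv}(1): a set $K_e$ that was, before the update, contained in an interval of $\interval$ different from $[a,b]$ is left untouched and only shrinks, so it is fine. For $K_e\subseteq[a,b]$, every interval of $K_f$ lies in $[c_1,d_l]\subseteq[i_1,j_2]$, so $K_f$ is emptied. For any other $e$ with $|K_e|>20$ after the update, I claim each \emph{surviving} interval $[p,q]\in K_e$ (those with $[p,q]\not\subseteq[i_1,j_2]$) lies in $[a,i_1]\subseteq[a,c_1]$ or in $[j_2,b]\subseteq[d_l,b]$. Indeed $[p,q]$ cannot straddle $c_1$ or $d_l$: being a $\DPtree$-node in a set $K_e$ with (before the update still) more than $20$ elements and having $c_1$ (resp.\ $d_l$) in its interior, maximality of $[i_1,j_1]$ (resp.\ $[i_2,j_2]$) would give $[p,q]\subseteq[i_1,j_1]\subseteq[i_1,j_2]$ (resp.\ $[p,q]\subseteq[i_2,j_2]\subseteq[i_1,j_2]$), excluded. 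So $[p,q]$ lies wholly left of $c_1$, wholly inside $[c_1,d_l]$, or wholly right of $d_l$; the middle option gives $[p,q]\subseteq[i_1,j_2]$, and in the other two, unless $[p,q]$ is already inside $[a,i_1]$ (resp.\ $[j_2,b]$) it has $i_1$ (resp.\ $j_2$) in its interior and therefore overlaps the $\DPtree$-node $[i_1,j_1]$ (resp.\ $[i_2,j_2]$) while being neither an ancestor nor a descendant of it --- impossible. Finally no single $K_e$ can retain surviving intervals on both sides, for then its span \emph{before} the update would strictly contain $[c_1,d_l]$, contradicting the maximality of $\spn(f)$. Thus the surviving $K_e$ lies wholly inside $[a,c_1]$ or wholly inside $[d_l,b]$, both members of the updated $\interval$.

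For \Cref{inv}(2): the left side grows by $\Delta=\sum_{e:|K_e|>20}\bigl|\{[p,q]\in K_e:[p,q]\subseteq[i_1,j_2]\}\bigr|\cdot\wts(e)$. Each removed interval is a $\DPtree$-node at which $e$ hangs, hence contributes a distinct copy of $e$ to $P[s^*,t^*,L^*]$, so $\Delta$ is at most the multiplicity-counted weight of copies of edges of $P[s^*,t^*,L^*]$ hanging at $\DPtree$-nodes contained in $[i_1,j_2]$. Letting $M_1=[p_1,q_1],\dots,M_r=[p_r,q_r]$ be the maximal $\DPtree$-nodes inside $[i_1,j_2]$ (pairwise interior-disjoint), additivity of $\DP$ along $\DPtree$ makes that weight equal to $\sum_i\DP[u_{p_i},u_{q_i},\wts(\rho[u_{p_i},u_{q_i}])]$. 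Since the DP keeps an entry for $(x,y,L)$ only when $L\le(1+\eps)\dist_G(x,y)$, every $\DPtree$-node is a valid triple, so $\DP[u_p,u_q,\wts(\rho[u_p,u_q])]\le\beta\cdot\wts(\rho[u_p,u_q])$ by the choice of $\beta$ as the maximal ratio. Hence $\Delta\le\beta\sum_i\wts(\rho[u_{p_i},u_{q_i}])\le\beta\cdot\wts(\rho[u_{i_1},u_{j_2}])\le 3\beta\cdot\spn(f)$, the last step splitting $[i_1,j_2]$ into $[i_1,j_1]$, $[j_1,i_2]$, $[i_2,j_2]$, whose $\rho$-lengths are at most $\spn(e_1)\le\spn(f)$, at most $\spn(f)$ (the middle interval lies inside $[c_1,d_l]$), and at most $\spn(e_2)\le\spn(f)$. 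Meanwhile $\wts(\rho)-\wts(\eta)$ increases by $\spn(f)-\dist_G(u_{c_1},u_{d_l})$, and the bounds already derived, $\spn(f)>\tfrac{20}{3(1+\eps)}\wts(f)$ and $\dist_G(u_{c_1},u_{d_l})<\tfrac52\wts(f)$, together with $\eps\le 10^{-2}$, force this increase above $\tfrac12\spn(f)$. Thus $\Delta\le 3\beta\cdot\spn(f)<10\beta\cdot(\text{increase in }\wts(\rho)-\wts(\eta))$, so \Cref{inv}(2) is preserved.

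I expect the hard part to be the last step of the argument for \Cref{inv}(1) --- ruling out that the surviving intervals of one $K_e$ end up on opposite sides of the new shortcut. This really does rely on combining the maximal-span choice of $f$ with the laminar ($\DPtree$) structure of the hanging intervals, plus a careful treatment of intervals straddling $c_1$ or $d_l$ and of the degenerate cases where $e_1$ or $e_2$ is absent (or $c_1=a$, $d_l=b$). For \Cref{inv}(2) the only subtlety is the small but essential observation that the DP records a sub-interval entry only when that sub-interval is itself $(1+\eps)$-approximate, which is exactly what lets the global ratio bound $\beta$ be applied node by node.
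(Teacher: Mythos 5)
Your proof is correct and rests on the same two pillars as the paper's: the maximality of $\spn(f)$ among edges with $|K_e|>20$ drives Invariant~(1), and Invariant~(2) follows by charging the removed intervals to DP entries bounded by $\beta$ times their $\rho$-length and comparing against the length saved by the shortcut $\lambda$. Two differences are worth noting. First, for Invariant~(1) you supply the positional analysis $i_1<c_1<j_1\le i_2<d_l<j_2$ and the straddling/laminarity argument in full, whereas the paper asserts these facts tersely; your span-based contradiction for ruling out surviving intervals on both sides of $[c_1,d_l]$ is exactly the paper's intended (but unstated) argument. Second, for Invariant~(2) the paper splits the removed intervals into three cases ($[i,j]\subseteq[i_1,j_1]$, $[i,j]\subseteq[i_2,j_2]$, $[i,j]\subseteq[j_1,i_2]$) and bounds the first two by $\tfrac15\beta\,\spn(f)$ each via $\wts(\rho[u_{i_1},u_{j_1}])\le(1+\eps)\wts(e_1)\le\tfrac{3(1+\eps)^2}{20}\spn(e_1)$, while you charge everything uniformly to the maximal $\DPtree$-nodes inside $[i_1,j_2]$ and use the coarser bounds $\wts(\rho[u_{i_1},u_{j_1}])\le\spn(e_1)\le\spn(f)$, arriving at $3\beta\,\spn(f)\le 6\beta\bigl(\spn(f)-\wts(\lambda)\bigr)$ versus the paper's $7\beta\bigl(\cdot\bigr)$; both fit under the $10\beta$ budget, and your version is arguably cleaner since it avoids the detour through $\wts(e_1)$.
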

\begin{proof}
	Let us first verify \Cref{inv}(1). Consider any edge $e \in E(H)\setminus (F^\nw\cup F^\old)$ such that $|K_e| > 20$ at the beginning of the iteration. Since \Cref{inv}(1) held before this iteration, there exists $[a', b']\in \interval$ such that $K_e\subseteq [a', b']$. If $[a', b']\neq [a, b]$, then \Cref{inv}(1) continues to hold for $e$. 
	
	When $[a', b'] = [a, b]$, according to our update rules, for each interval $[i, j]\in K_e$, $|K_e| > 20$, which strictly contains $c_1$ or $d_l$, this interval must be contained entirely within $[i_1, j_2]$. Therefore, after the updates, all elements in $K_e$ are contained either in $[a, c_1]$ or $[d_l, b]$. To complement the argument for \Cref{inv}(1), it suffices to show that there cannot be two different elements $[i, j], [i', j']\in K_e$ such that $[i, j]\subseteq [a, c_1], [i', j']\subseteq [d_l, b]$. This is because we chose $f$ to be the maximizer of $\spn(f)$.
	
	Next, we mainly focus on \Cref{inv}(2). It suffices to upper bound the total amount of edge weight we remove from all the sets $K_e$ by $10\beta \brac{\wts(\rho[u_{c_1}, u_{d_l}]) - \wts(\lambda)}$. 
	
	To limit the total amount of edge weights that we remove when updating $\{K_e: e\in E(H)\setminus (F^\nw\cup F^\old)\}$, we distinguish between three cases for an interval $[i, j]$ that belonged to $K_e$, $|K_e| > 20$, before the update.
	\begin{itemize}[leftmargin=*]
		\item \emph{Case~1: $[i, j]\subseteq [i_1, j_1]$.} 		
		Notice that each such interval $[i, j]$ corresponds to an appearance of $e$ in the 
        multi-set of edges $P\left[u_{i_1}, u_{j_1}, \wts(\rho[u_{i_1}, u_{j_1}])\right]$ with total weight $\DP\left[u_{i_1}, u_{j_1}, \wts(\rho[u_{i_1}, u_{j_1}])\right]$. Since $(s^*, t^*, L^*)$ is the ratio maximizer, we know that
		$$\begin{aligned}
			\DP\left[u_{i_1}, u_{j_1}, \wts(\rho[u_{i_1}, u_{j_1}])\right] &\leq \beta\cdot \wts(\rho[u_{i_1}, u_{j_1}])\leq (1+\eps)\beta\cdot\wts(e_1)\\
			&\leq (1+\eps)\beta\cdot \frac{3(1+\eps)}{20}\spn(e_1)\\
			&\leq \frac{1}{5}\beta\cdot\spn(f) = \frac{1}{5}\beta\cdot\wts(\rho[u_{c_1}, u_{d_l}]) .
		\end{aligned}$$
        The last inequality is due to the selection of $f$, as $f$ was the edge such that $|K_f| > 20$ with maximum $\spn(f)$.
		
		\item \emph{Case~2: $[i, j]\subseteq [i_2, j_2]$.}		
		Symmetrically, we can show that the total weight of all such copies of edge $e$ is at most 
		$$\begin{aligned}
			\DP\left[u_{i_2}, u_{j_2}, \wts(\rho[u_{i_2}, u_{j_2}])\right] &\leq \beta\cdot \wts(\rho[u_{i_2}, u_{j_2}])\leq (1+\eps)\beta\cdot\wts(e_2)\\
			&\leq (1+\eps)\beta\cdot \frac{3(1+\eps)}{20}\spn(e_2)\\
			&\leq \frac{1}{5}\beta\cdot\spn(f) = \frac{1}{5}\beta\cdot\wts(\rho[u_{c_1}, u_{d_l}]) .
		\end{aligned}$$
		
		\item \emph{Case~3: $[i, j]\subseteq [j_1, i_2]$.} 		
		Let $\interval_1$ be the set of all such intervals $[i, j]$. Then, since all such intervals form a laminar family, we can find the set $\interval_2\subseteq \interval_1$ of maximal intervals. Assume $\interval_2 = \{[p_1, q_1], [p_2, q_2], \ldots, [p_z, q_z]\}$ with $q_i\leq p_{i+1}, 1\leq i<z$. Then any $[i, j]\subseteq [j_1, i_2]$ which belongs to some set $K_e, |K_e| > 20$ must be contained in an interval $[p, q]\in \interval_2$, which corresponds to one copy of $e$ in the multi-set $P\left[u_p, u_q, \wts(\rho[u_p, u_q])\right]$. Therefore, the total weight of edges $e$ corresponding to such intervals $[i, j]$ is bounded by
		$$\begin{aligned}
		    \sum_{o = 1}^z \DP\left[u_{p_o}, u_{q_o}, \wts(\rho[u_{p_o}, u_{q_o}])\right] &\leq \beta \cdot \sum_{o = 1}^z \wts(\rho[u_{p_o}, u_{q_o}])\\
            &\leq \beta\cdot\wts(\rho[u_{j_1}, u_{i_2}])\\
            &\leq \beta\cdot \wts(\rho[u_{c_1}, u_{d_l}])\\
            &< 5\beta\cdot \brac{\wts(\rho[u_{c_1}, u_{d_l}]) - \wts(\lambda)} .
		\end{aligned}$$
		
		Here, the first inequality holds because $\beta$ is the maximum ratio. As for the last inequality, we have
		$$\wts(\rho[u_{c_1}, u_{d_l}]) \geq \sum_{o=1}^l\wts(\rho[c_o, d_o]) \geq \frac{20}{3(1+\eps)}\cdot \wts(f) .$$
		On the other hand, since $f = (x, y)$ is $\frac{1}{3(1+\eps)}$-hanging at both $(u_{c_1}, u_{d_1})$ and $(u_{c_l}, u_{d_l})$, by the triangle inequality we get
		$$\wts(\lambda)\leq \dist_G(u_{c_1}, x) + \wts(f) + \dist_G(y, u_{d_l})\leq (3+2\eps)\wts(f) .$$
		Hence, we have $\wts(\rho[u_{c_1}, u_{d_l}]) \geq \frac{20}{3(1+\eps)(3+2\eps)}\wts(\lambda) > 2\wts(\lambda)$, and consequently we have:
		$$\wts(\rho[u_{c_1}, u_{d_l}]) - \wts(\lambda)\geq \frac{1}{5}\wts(\rho[u_{c_1}, u_{d_l}]) .$$
	\end{itemize}
        
	By the above case analysis, the total amount of edge weight we remove from all the sets $K_e$ is at most
    \begin{align*}
    	&\frac{2}{5}\beta\cdot \wts(\rho[u_{c_1}, u_{d_l}]) + 5\beta\cdot \brac{\wts(\rho[u_{c_1}, u_{d_l}]) - \wts(\lambda)}\\
    	&\leq \frac{2}{5}\beta\cdot\brac{\wts(\rho[u_{c_1}, u_{d_l}]}- \wts(\lambda)) + 5\beta\cdot \brac{\wts(\rho[u_{c_1}, u_{d_l}]) - \wts(\lambda)}\\
    	&<10\beta \brac{\wts(\rho[u_{c_1}, u_{d_l}]) - \wts(\lambda)}.
    \end{align*}
    Hence, \Cref{inv}(2) still holds.
\end{proof}

By \Cref{shortcut-inv}, we can repeatedly update the sets $\interval$, $\paths$, and $\{K_e : e\in E(H)\setminus (F^\nw\cup F^\old)\}$ until all the sets $K_e$ have size at most $20$. Then, we can derive a lower bound on $\sum_{e\in E(H)\setminus F^*}\mathbf{1}[J_e\neq\emptyset]\cdot \wts(e)$ as follows:
$$\begin{aligned}
	\sum_{e\in E(H)\setminus F^*}\mathbf{1}[J_e\neq\emptyset]\cdot \wts(e)&\geq \frac{1}{20}\sum_{e\in E(H)\setminus F^*}|K_e|\cdot \wts(e) \\
        &\geq \frac{1}{20}\brac{\sum_{e\in E(H)\setminus (F^\nw\cup F^\old)}|J_e|\cdot\wts(e) - \sum_{e\in E(H)\setminus F^*}|J_e\setminus K_e|\cdot \wts(e)}\\
	&\geq \frac{1}{20}\cdot \brac{\frac{\beta}{4}\wts(\rho) - \sum_{e\in E(H)\setminus F^*}|J_e\setminus K_e|\cdot \wts(e)}\\
	&\geq \frac{1}{20}\cdot \brac{\frac{\beta}{4}\wts(\rho) - 10\beta (\wts(\rho) - \wts(\eta))}\\
	&\geq \frac{1}{20}\cdot \brac{\frac{\beta}{4}\wts(\rho) - 10\beta\cdot \epsilon \cdot \wts(\rho)}\\
	& > \frac{\beta}{100}\wts(\rho) \geq \frac{\alpha}{600}\wts(\rho).
\end{aligned}$$
Recall that $\wts(\rho) - \wts(\eta)\leq (1+\eps)\dist_G(s, t) - \dist_G(s, t)\leq \epsilon \cdot\wts(\rho)$ and $\eps \leq 10^{-2}$. Therefore, when we update $F^\nw\leftarrow F^\nw\cup E(\rho[s^*, t^*, L^*])$ and $F^\old\leftarrow F^\old\cup P[s^*, t^*, L^*]$, then $\wts\brac{F^\nw}$ increases by $\wts(\rho)$ and $F^\old$ increases by at least $\frac{\alpha}{600}\wts(\rho)$, where $\alpha = \wts\brac{E(H)\setminus (F^\nw\cup F^\old)} / \wts\brac{G_{\opt, \eps}}$. Given this, the following statement helps to verify property (2) of \Cref{prune}.

\begin{lemma}
    Assume $\theta = \wts(H) / \wts(G_{\opt, \eps})$. Then, when \Cref{alg-prune} terminates, we have $\wts(H_1)\leq O(\log\theta)\cdot \wts(G_{\opt,\eps})$.
\end{lemma}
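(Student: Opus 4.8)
The plan is to track, over the exchange rounds of the while-loop in \Cref{alg-prune}, how fast the ``fresh'' edge set $E(H)\setminus(F^\nw\cup F^\old)$ shrinks against the weight accumulating into $F^\nw$. Set $M=\wts(G_{\opt,\eps})$, index the rounds that perform an exchange by $i=1,\dots,R$, and for round $i$ let $\rho_i$ be the chosen path, $w_i=\wts(\rho_i)$, $\beta_i$ the attained ratio, and $\mu_i=\wts\brac{E(H)\setminus(F^\nw\cup F^\old)}$ at the end of round $i$, with $\mu_0=\wts(H)=\theta M$. First I would record three facts. \textbf{(a)} Since $w_i=L^*\le(1+\eps)\dist_G(s^*,t^*)\le(1+\eps)\dist_{G_{\opt,\eps}}(s^*,t^*)\le(1+\eps)M\le 2M$ (the distances being finite, so every inserted path is short), round $i$ raises $\wts(F^\nw)$ by at most $w_i$. \textbf{(b)} Round $i$ lowers $\mu$ by at least the weight of the \emph{distinct} edges it moves into $F^\old$; by the shortcut argument preceding this lemma that weight exceeds $\tfrac{\beta_i}{100}w_i$, and since \Cref{lowerbd} gives $\beta_i\ge \mu_{i-1}/(6M)$ while $\beta_i$ is at least the absolute-constant update threshold, we get
\[
\mu_{i-1}-\mu_i \;>\; \tfrac{\beta_i}{100}\,w_i \;\ge\; \max\!\Bigl\{\tfrac{\mu_{i-1}}{600\,M}\,w_i,\ \Omega(1)\cdot w_i\Bigr\}.
\]
\textbf{(c)} When the algorithm halts, the best ratio is below that threshold, so \Cref{lowerbd} yields $\mu_R=O(M)$. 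Finally, $E(H_1)=F^\nw\cup\brac{E(H)\setminus(F^\nw\cup F^\old)}$ is a \emph{disjoint} union (the second part is by definition disjoint from $F^\nw$), so $\wts(H_1)=\wts(F^\nw)+\mu_R\le\sum_{i=1}^R w_i+O(M)$, and it remains to bound $\sum_{i=1}^R w_i$ by $O\brac{(1+\log\theta)M}$.

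I would prove this by isolating the rounds in which $\mu$ is still much larger than $M$. If $\mu_{i-1}\le 2M$ for every round, then $\theta\le 2$ and the ``$\Omega(1)\cdot w_i$'' part of (b) gives $\sum_i w_i\le O(1)\sum_i(\mu_{i-1}-\mu_i)\le O(\mu_0)=O(M)$, finishing this case. Otherwise let $T$ be the largest index with $\mu_{T-1}>2M$; by monotonicity of $\mu$ these are exactly the rounds $1,\dots,T$. For $i\le T-1$ we have $\mu_{i-1}>2M>0$ and $w_i/(600M)\le 2M/(600M)<1$, so the first part of (b) gives $\mu_i\le\mu_{i-1}\bigl(1-\tfrac{w_i}{600M}\bigr)\le\mu_{i-1}e^{-w_i/(600M)}$; telescoping over $i=1,\dots,T-1$ and using $\mu_{T-1}>2M\ge M$,
\[
M<\mu_{T-1}\le\theta M\cdot\exp\!\Bigl(-\tfrac{1}{600M}\textstyle\sum_{i=1}^{T-1}w_i\Bigr),
\qquad\text{hence}\qquad \sum_{i=1}^{T-1}w_i<600\,M\ln\theta .
\]
For the tail, round $T$ contributes $w_T\le 2M$ by (a), while for $i>T$ the choice of $T$ gives $\mu_{i-1}\le 2M$, so the ``$\Omega(1)\cdot w_i$'' bound yields $\sum_{i>T}w_i\le O(1)\sum_{i>T}(\mu_{i-1}-\mu_i)\le O(\mu_T)=O(M)$. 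Summing the three parts, $\sum_{i=1}^R w_i=O\brac{(1+\log\theta)M}$, and therefore $\wts(H_1)=O\brac{(1+\log\theta)\cdot\wts(G_{\opt,\eps})}$; this is $O(\log\theta)\cdot\wts(G_{\opt,\eps})$ once $\theta$ exceeds an absolute constant (the regime in which \Cref{prune} is applied inside the proof of \Cref{thm:main}), and trivially $O(1)\cdot\wts(G_{\opt,\eps})$ otherwise.

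The step needing the most care is the telescoping in the geometric phase: a single exchange round may shrink $\mu$ by an arbitrarily large amount, because (b) only bounds $\mu_{i-1}-\mu_i$ from below, so one cannot telescope the exponential decay all the way down to $\mu_R$ --- the value $\mu_T$ reached at the bottom of the phase could be anything down to $0$. The fix is to stop one round early, at $\mu_{T-1}$, which is still guaranteed to exceed $2M$, and to absorb the $O(1)$ remaining ``boundary'' rounds (round $T$ and all later ones) through the cruder linear-progress estimate. Both ingredients depend on fact (a), that every inserted path has weight $O(\wts(G_{\opt,\eps}))$, which is also exactly what legitimizes the inequality $1-x\le e^{-x}$ used in the telescoping.
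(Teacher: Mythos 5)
Your proof is correct and amortizes $\wts(F^\nw)$ against the decrease of the residual weight $\mu=\wts\brac{H\setminus(F^\nw\cup F^\old)}$ using exactly the same two ingredients as the paper: the per-round exchange rate coming from \Cref{lowerbd} together with the shortcut argument (new weight $\wts(\rho)$ versus pruned weight at least $\frac{\alpha}{600}\wts(\rho)$), and the termination condition bounding the final residual by $O(1)\cdot\wts(G_{\opt,\eps})$. The only difference is bookkeeping: the paper buckets rounds by dyadic levels of $\alpha$ and charges $O(1)\cdot\wts(G_{\opt,\eps})$ per level, whereas you telescope the multiplicative decrease $\mu_i\le\mu_{i-1}e^{-w_i/(600M)}$ and treat the boundary round (where $\mu$ may overshoot below $2M$) separately --- an equivalent, if anything slightly more carefully justified, accounting.
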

\begin{proof}
    During the course of \Cref{alg-prune}, let $\alpha = \wts(H \setminus F^\old) / \wts(G_{\opt,\eps})$. For any fixed integer $k$, whenever $\alpha\in [2^k, 2^{k+1})$, each iteration of the while-loop decreases $\wts(H\setminus F^\old)$ by $\Delta$ and increases $\wts(F^\nw)$ by at most $\frac{600}{2^k}\cdot\Delta$. Hence, while $\alpha\in [2^k, 2^{k+1})$, the weight $\wts(F^\nw)$ could increase by at most $\frac{600}{2^k}\cdot 2^{k+1}\cdot \wts(G_{\opt,\eps}) = O(1)\cdot \wts(G_{\opt,\eps})$. Therefore, in the end when $\beta<1$, we have $\wts(H\setminus (F^\nw\cup F^\old)) = \alpha\cdot \wts(G_{\opt,\eps})\leq 600\wts(G_{\opt,\eps})$ and $\wts(F^\nw)\leq O(\log\theta)\cdot \wts(G_{\opt,\eps})$, which finishes the proof.
\end{proof}

\subsubsection{Stretch Analysis}
Let us begin with a basic property of \Cref{alg-prune}.
\begin{lemma}\label{monotone}
    During \Cref{alg-prune}, when an edge $e\in E$ joins $F^\nw$, then it stays in $F^\nw$ until the end.
\end{lemma}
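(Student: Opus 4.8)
The plan is to argue directly from the structure of \Cref{alg-prune}, with no appeal to any of the combinatorial machinery developed above. First I would locate every place in the pseudocode where $F^\nw$ is written: it is initialized to $\emptyset$ on the first line, and it is reassigned in exactly one other place, namely the update $F^\nw \leftarrow F^\nw \cup E(\rho[s^*, t^*, L^*])$ in the body of the while-loop, executed only when $\beta \geq 1$. Since this update is a set union, it can only add edges to $F^\nw$ and never delete any; hence the contents of $F^\nw$ form a monotonically non-decreasing family over the iterations of the while-loop.

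Next I would verify that the per-round re-initialization at the top of the while-loop does not secretly shrink $F^\nw$. That re-initialization rebuilds only the auxiliary objects $B[\cdot,\cdot]$, $\rho[\cdot,\cdot,\cdot]$, $P[\cdot,\cdot,\cdot]$, and $\DP[\cdot,\cdot,\cdot]$; the sets $F^\nw$ and $F^\old$ are carried over unchanged from the previous round. The only way $F^\nw$ enters the re-initialization is read-only, to enforce the constraints $B[s,t]\subseteq E(H)\setminus(F^\nw\cup F^\old)$ and $P[s,t,L]\subseteq E(H)\setminus(F^\nw\cup F^\old)$. Consequently, once an edge $e$ belongs to $F^\nw$ at the end of some iteration, it belongs to $F^\nw$ at the end of every subsequent iteration, and in particular it is present in $F^\nw$ when \Cref{alg-prune} halts and returns $H_1 = F^\nw\cup(H\setminus F^\old)$. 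This already yields the claim.

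I do not expect a genuine obstacle here: the statement is a direct consequence of the fact that $F^\nw$ only grows. The single point deserving a moment of care is to confirm that no implicit deletion from $F^\nw$ is hidden in the between-rounds bookkeeping; a one-line inspection of \Cref{alg-prune} settles this, since the only reset performed there is of the dynamic-programming tables, not of $F^\nw$ or $F^\old$. As a remark, the same observation applied to $F^\old$ shows that $F^\old$ is monotone as well, which will be convenient when reasoning about the final spanner $H_1$ in the stretch analysis.
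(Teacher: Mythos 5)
Your proof is correct, but it argues along a different line than the paper's. You establish the literal claim by pure monotonicity: the only write to $F^\nw$ anywhere in \Cref{alg-prune} is the union $F^\nw \leftarrow F^\nw\cup E(\rho[s^*,t^*,L^*])$, and the per-round re-initialization touches only the tables $B$, $\rho$, $P$, $\DP$, so $F^\nw$ can only grow. That is airtight and suffices for how the lemma is invoked in the stretch analysis, since $H_1 = F^\nw\cup(H\setminus F^\old)$ contains every edge of the final $F^\nw$ unconditionally. The paper's one-line proof instead points at the invariant $P[s,t,L]\subseteq E(H)\setminus(F^\nw\cup F^\old)$: the sets that get pruned into $F^\old$ are always drawn from edges \emph{not} already in $F^\nw$, so an edge that has joined $F^\nw$ can never later be selected for removal. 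The two observations are complementary rather than equivalent: yours shows the set $F^\nw$ never shrinks, while the paper's shows in addition that $F^\nw$ and $F^\old$ stay disjoint. The latter is the stronger fact one actually wants elsewhere --- in the weight analysis, an edge credited as pruned must not secretly survive in the spanner via $F^\nw$, and that is exactly what the disjointness of the $P$-sets from $F^\nw$ guarantees. So your argument proves the lemma as stated; it is worth noting (as your closing remark about $F^\old$ gestures toward) that the disjointness invariant is the reason the surrounding accounting is consistent.
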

\begin{proof}
    This is evident because all the multi-sets $P[s, t, L]$ are contained in $E(H)\setminus (F^\nw\cup F^\old)$.
\end{proof}

Finally, let us analyze the stretch of graph $H_1 = F^\nw\cup (H\setminus (F^\nw\cup F^\old))$, proving property (1) of \Cref{prune}.
\begin{lemma}
	The stretch of graph $H_1 = F^\nw\cup (H\setminus F^\old)$ is at most $1+O(1)\cdot\delta$.
\end{lemma}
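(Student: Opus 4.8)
The plan is to reduce the statement to a per-edge bound and prove that bound by induction on edge weight. Concretely, I would first show that there is an absolute constant $C$ such that $\dist_{H_1}(a,b)\le(1+C\delta)\wts(a,b)$ for every edge $(a,b)\in E(H)$. Granting this, for arbitrary $u,v\in V$ one takes a shortest path $P$ in $H$ — so $\wts(P)=\dist_H(u,v)\le(1+\delta)\dist_G(u,v)$ since $H$ is a $(1+\delta)$-spanner of $G$ — and replaces each edge of $P$ by a shortest $H_1$-path, obtaining $\dist_{H_1}(u,v)\le(1+C\delta)\wts(P)\le(1+C\delta)(1+\delta)\dist_G(u,v)=(1+O(\delta))\dist_G(u,v)$, which is the claimed stretch.

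\paragraph{The induction on $\wts(a,b)$.} Edge weights are positive integers, so this induction is well-founded. If $(a,b)\in E(H_1)$ — in particular if $(a,b)\notin F^\old$, or $(a,b)\in F^\nw$ — then trivially $\dist_{H_1}(a,b)\le\wts(a,b)$. Otherwise $(a,b)\in F^\old\setminus F^\nw$, so in some round of \Cref{alg-prune} it belonged to $P[s^*,t^*,L^*]$ and was $\tfrac{1}{3(1+\eps)}$-hanging at some pair $(v_i,v_j)$ on that round's path $\rho:=\rho[s^*,t^*,L^*]$; hence by \Cref{hang}
$$\wts(\rho[v_i,v_j])\ge\tfrac{1}{3(1+\eps)}\wts(a,b),\qquad \dist_G(a,v_i)+\wts(\rho[v_i,v_j])+\dist_G(v_j,b)\le(1+\eps)\wts(a,b).$$
Every edge of $\rho$ entered $F^\nw$ in that round and, by \Cref{monotone}, stays there, so $E(\rho)\subseteq F^\nw\subseteq E(H_1)$ and the (possibly non-simple) subwalk $\rho[v_i,v_j]$ gives $\dist_{H_1}(v_i,v_j)\le\wts(\rho[v_i,v_j])$. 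For the two ``sides'' the displayed inequalities yield $\dist_G(a,v_i),\dist_G(v_j,b)\le\mu\,\wts(a,b)$, where $\mu:=(1+\eps)-\tfrac{1}{3(1+\eps)}<0.68$ (using $\eps\le10^{-2}$). Assuming $\delta$ is below a small absolute constant, so that $(1+\delta)\mu<1$, the spanner property of $H$ gives $\dist_H(a,v_i)\le(1+\delta)\mu\,\wts(a,b)<\wts(a,b)$; hence every edge of a shortest $H$-path from $a$ to $v_i$ has weight strictly below $\wts(a,b)$, the inductive hypothesis applies edge by edge, and $\dist_{H_1}(a,v_i)\le(1+C\delta)\dist_H(a,v_i)\le(1+C\delta)(1+\delta)\dist_G(a,v_i)$, and symmetrically for $(v_j,b)$. (When $\wts(a,b)=1$, integrality forces $a=v_i$, $b=v_j$, $\wts(\rho[v_i,v_j])=1$, which is the base case.)

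\paragraph{Combining the pieces.} Writing $x:=(1+C\delta)(1+\delta)\ge1$, the triangle inequality along $a\leadsto v_i\leadsto v_j\leadsto b$ in $H_1$ together with the two displayed facts gives
$$\dist_{H_1}(a,b)\le x\bigl(\dist_G(a,v_i)+\dist_G(v_j,b)\bigr)+\wts(\rho[v_i,v_j])\le x\bigl((1+\eps)\wts(a,b)-\wts(\rho[v_i,v_j])\bigr)+\wts(\rho[v_i,v_j]),$$
and since $x\ge1$ and $\wts(\rho[v_i,v_j])\ge\tfrac{1}{3(1+\eps)}\wts(a,b)$, the right-hand side is at most $\bigl(x\mu+\tfrac{1}{3(1+\eps)}\bigr)\wts(a,b)$. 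Using the identity $\tfrac{1}{3(1+\eps)}=(1+\eps)-\mu$ and $\eps\le\delta$, a short computation reduces the desired inequality $x\mu+\tfrac{1}{3(1+\eps)}\le1+C\delta$ to $C\ge\frac{1+\mu}{1-\mu(1+\delta)}$, which holds for a suitable absolute constant $C$ whenever $\delta$ is below a small constant. This closes the induction, and with the reduction above it proves the lemma.

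\paragraph{Where the difficulty is.} The crux is this final balancing. I must argue that recovering the ``middle'' subpath $\rho[v_i,v_j]$ for free — it is a constant ($\ge\tfrac{1}{3(1+\eps)}$) fraction of $\wts(a,b)$ — more than compensates for the $(1+\delta)$ blowups incurred on the two ``side'' detours, while at the same time keeping $(1+\delta)\mu<1$ so that the recursion legitimately descends only to strictly lighter edges of $H$. It is precisely in making the hanging constant $\tfrac{1}{3(1+\eps)}$ win against the spanner parameter $1+\delta$ that the hypotheses $\eps\le\min\{10^{-2},\delta\}$ (and the implicit smallness of $\delta$) are used; everything else is a routine triangle-inequality bookkeeping.
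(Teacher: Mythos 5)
Your argument is correct, but it follows a genuinely different route from the paper's. The paper proves the stretch bound by a global replacement-with-potential argument: for an arbitrary pair $s,t$ it starts from a shortest $s$--$t$ path in $H$, repeatedly splices out each surviving $F^\old$-edge $(u,v)$, replacing it by shortest $H$-paths to the hanging points concatenated with the $F^\nw$-subpath on which $(u,v)$ hangs, and tracks the potential $\Phi=$ total weight of edges of the current walk that are not in $H_1$. Each splice lengthens the walk by at most $3\delta\cdot\wts(u,v)$ while decreasing $\Phi$ by a constant fraction of $\wts(u,v)$ (via the $\tfrac{1}{3(1+\eps)}$-hanging guarantee), so the cumulative increase is $O(\delta)\dist_G(s,t)$. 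You instead establish a per-edge bound $\dist_{H_1}(a,b)\le(1+C\delta)\wts(a,b)$ by strong induction on the integral edge weight, using the same two ingredients (the hanging inequalities of \Cref{hang} and the monotonicity of $F^\nw$ from \Cref{monotone}) but descending the recursion because $(1+\delta)\mu<1$ forces the side detours to consist of strictly lighter edges of $H$. Both arguments implicitly require $\delta$ to be below an absolute constant — the paper's ``$\le 10\delta$'' step needs $(1+\eps)(1+\delta)$ bounded, while you need $(1+\delta)\mu<1$, i.e.\ $\delta\lesssim 0.47$ — so you are not assuming more than the paper does. What your induction buys is an explicit, self-contained accounting of how the hanging constant $\tfrac{1}{3(1+\eps)}$ beats the two $(1+\delta)$-losses on the side detours (the inequality $C\ge\frac{1+\mu}{1-\mu(1+\delta)}$), at the price of fixing the constant $C$ self-consistently in advance; the paper's amortization handles all edges of the $s$--$t$ path in one pass without induction, charging every local increase against a single global budget of $(1+\delta)\dist_G(s,t)$.
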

\begin{proof}
	Consider any pair of vertices $s, t\in V$. By assumption, we have $\dist_H(s, t)\leq (1+\delta)\cdot \dist_G(s, t)$. To bound the distance between $s$ and $t$ in $H_1$, let us conceptually maintain a short path $\rho$ in $H_1\cup H$ which is originally the shortest path between $s, t$ in $H$ and then gradually transform it to an $st$-path in $H_1$. To analyze the length of $\rho$, we use a potential function $\Phi(\rho)$ which is the total length of the edges in not in $H_1$. Initially, $\Phi(\rho)$ is at most $\wts(\rho)\leq (1+\delta)\cdot \dist_G(s, t)$.

    Let us iteratively update $\rho$ so that $\rho$ eventually belongs to $H_1$. While there is an edge $(u, v)\in E(\rho)$ not in $H_1$, by definition, $(u, v)$ must belong to $F^\old$. Consider the moment when $(u, v)$ was added to $F^\old$ by \Cref{alg-prune}. According to the algorithm description, at the moment there must exist a path $\gamma_{u, v}\subseteq E(F^\nw)$ (between vertices $x, y\in V$) such that $(u, v)$ is $\frac{1}{3(1+\eps)}$-hanging at $\gamma_{u, v}$. Let $\rho_1$ and $\rho_2$ be the shortest paths between $u, x$ and $v, y$ in graph $H$. By \Cref{hang}, we have
    $$\begin{aligned}
        \wts(\rho_1) + \wts(\gamma_{u, v}) + \wts(\rho_2) &\leq (1+\delta)\cdot \dist_G(u, x) + \wts(\gamma_{u, v}) + (1+\delta)\cdot\dist_G(y, v)\\
        &\leq (1+\delta)\cdot \brac{(1+\eps)\wts(u, v) - \wts(\gamma_{u,v})} + \wts(\gamma_{u, v})\\
        &\leq (1+\delta)(1+\eps)\cdot \wts(u, v) .
    \end{aligned}$$
    
    Update $\rho\leftarrow\rho[s, u]\circ \rho_1\circ \gamma_{u, v}\circ \rho_2\circ \rho[v, t]$, and so $\wts(\rho)$ would increase by at most
    $$(1+\delta)(1+\eps)\cdot \wts(u, v) - \wts(u, v) \leq (\delta+\eps + \delta\epsilon)\cdot\wts(u, v) < 3\delta\cdot\wts(u, v) .$$
    The key point is that all edges on $\gamma_{u, v}$ are in $F^\nw$ at the moment when $(u, v)$ joined $F^\old$, and by \Cref{monotone}, these edges will stay in $F^\nw$ til the end. Therefore, by replacing $\rho$ with $\rho[s, u]\circ \rho_1\circ \gamma_{u, v}\circ \rho_2\circ \rho[v, t]$, the value of $\Phi(\rho)$ has decreased by at least $\wts(\gamma_{u, v})\geq \frac{1}{3(1+\eps)}\wts(u, v)$, according to \Cref{hang}.
    
    As $\Phi(\rho)$ was originally at most $(1+\delta)\cdot \dist_G(s, t)$, the total amount of error increase would be bounded by
    $$3\delta\cdot 3(1+\eps)\cdot (1+\delta)\cdot\dist_G(s, t)\leq 10\delta\cdot \dist_G(s, t) .$$
    If follows that $\dist_{H_1}(s, t)\leq (1+11\delta)\cdot\dist_G(s, t)$.
\end{proof}

\subsection{Extension to Large Edge Weights}
In this subsection, let us discuss how to deal with general edge weights when $W \geq n^2/\epsilon$.
Recall that $G= (V, E, \wts)$ is an undirected weighted planar graph, where $\wts: E\rightarrow \{1, 2,\ldots, W\}$. We may assume that $W=\max_{e\in E}\wts(e)$, and for any edge $(u, v)\in E$, $\dist_G(u, v) = \wts(u, v)$, since otherwise we could remove $(u, v)$ from $E$. Under these  assumptions, we know that $\wts(G_{\opt, \eps})\geq W$, so we could always include all edges with weight less than $W/n$ in a spanner of weight $O(\wts(G_{\opt, \eps}))$.

To compute a spanner, contract all the connected components spanned by edges of weights less than $\epsilon W/n^2$, and denote the contracted graph by $G' = (V', E', \wts')$, where we round the edge weights as $\wts'(u, v) = \left\lfloor\wts(u, v)\cdot \frac{n^2}{W\epsilon}\right\rfloor$. Intuitively, the optimal spanner on $G'$ will be the same as the optimal spanner on the original graph $G$ since the total weight change is small. The advantage of the rounding procedure is that the maximum weight is now polynomial in $n$, and so the runtime would also be polynomial, according to the main algorithm.

Technically speaking, by definition of $G'$, edge weights $\wts'$ take integer values in $[1, n^2/\eps]$. Then, apply the main algorithm on graph $G'$ to obtain a $(1+\epsilon_0)$-spanner $H'$ in time $\poly(n, \eps^{-1})$ such that 
$$\wts'(H')\leq C\cdot \wts'(G'_{\opt, 2\eps}),$$
where $C = O(1)$, $\eps_0 = \eps\cdot 2^{O(\log^*1/\eps)}$ and $G'_{\opt,2\eps}$ is the optimal $(1+2\eps)$-spanner of $G'$. In the end, define 
$$E_0 = \{e\in E : \wts(e) \leq W/n\}$$ 
and return 
$$H = H'\cup E_0$$ 
as the approximate spanner of $G$.
Let us verify the stretch and weight of $H$ below.

\begin{lemma}
    For any $(s, t)\in E$, we have $\dist_H(s, t)\leq (1+\eps_0 + 2\eps)\cdot \wts(s, t)$.
\end{lemma}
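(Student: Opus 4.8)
The plan is to split on the magnitude of $\wts(s,t)$. If $\wts(s,t)\le W/n$, then by definition $(s,t)\in E_0\subseteq H$, hence $\dist_H(s,t)\le\wts(s,t)\le(1+\eps_0+2\eps)\wts(s,t)$ and we are done. So the real work is in the case $\wts(s,t)>W/n$, which I would handle by lifting a path from $H'$ back to $H$.

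I would first record a structural observation: when $\wts(s,t)>W/n$, the endpoints $s$ and $t$ lie in \emph{different} contracted components. Indeed, if they lay in the same component $C$, there would be an $s$--$t$ path in $G$ using only edges of weight $<\eps W/n^2$, hence at most $n-1$ of them, of total weight $<\eps W/n\le W/n<\wts(s,t)$, contradicting the standing assumption that $\dist_G(u,v)=\wts(u,v)$ for every edge $(u,v)\in E$. Consequently the edge $(s,t)$ survives the contraction: writing $s',t'$ for the images of $s,t$ in $G'$, there is an $s't'$-edge in $G'$ of weight at most $\wts'(s,t)=\floor{\wts(s,t)\cdot n^2/(W\eps)}$ (which is $\ge1$ since $\wts(s,t)>W/n$), so $\dist_{G'}(s',t')\le\wts'(s,t)$.

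Next I would take $\pi$ to be the shortest $s'$--$t'$ path in $H'$ and lift it to a walk $\widehat\pi$ from $s$ to $t$ inside $H$: replace every edge of $\pi$ by the corresponding edge of $G$, and whenever two consecutive lifted edges meet only at a contracted super-node $v_C$ (or when $s$, resp.\ $t$, lies strictly inside its component), splice them by a path inside $C$; every such in-$C$ path uses only edges of weight $<\eps W/n^2<W/n$, hence edges of $E_0\subseteq H$. Then $\dist_H(s,t)\le\wts(\widehat\pi)$, and I would bound $\wts(\widehat\pi)$ by its two parts. For the lifted $\pi$-edges: since $\wts'(e)\cdot\tfrac{W\eps}{n^2}\le\wts(e)<\bigl(\wts'(e)+1\bigr)\tfrac{W\eps}{n^2}$ and $\pi$ is simple, hence has fewer than $n$ edges, their total $\wts$-weight is at most $\tfrac{W\eps}{n^2}\wts'(\pi)+\tfrac{\eps W}{n}$; and $\tfrac{W\eps}{n^2}\wts'(\pi)=\tfrac{W\eps}{n^2}\dist_{H'}(s',t')\le\tfrac{W\eps}{n^2}(1+\eps_0)\dist_{G'}(s',t')\le\tfrac{W\eps}{n^2}(1+\eps_0)\wts'(s,t)\le(1+\eps_0)\wts(s,t)$. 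For the splices: since $\pi$ is simple it visits each super-node at most once, so each contracted component $C$ is used for at most one splice, whose length is at most $(|V(C)|-1)\tfrac{\eps W}{n^2}$; summing over all components and using that they are vertex-disjoint with $\sum_C|V(C)|\le|V|=n$ gives a total splice length of at most $\tfrac{\eps W}{n}$. Adding these up yields $\dist_H(s,t)\le(1+\eps_0)\wts(s,t)+\tfrac{2\eps W}{n}$, and since $\wts(s,t)>W/n$ this is at most $(1+\eps_0+2\eps)\wts(s,t)$, as claimed.

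The only delicate point — and the part I would spell out carefully — is the bookkeeping in the lifting step: one must check that both the rounding slack and the splicing cost are each bounded by $\eps W/n$, rather than by something scaling like $|\pi|\cdot W$. This rests entirely on the fact that $\pi$ is a \emph{simple} path in $G'$ (so it has $<n$ edges and meets each super-node once) together with the two elementary inequalities $\wts'(e)\cdot\tfrac{W\eps}{n^2}\le\wts(e)$ and $\sum_C|V(C)|\le n$; no idea beyond those is needed.
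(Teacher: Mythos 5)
Your proof is correct and follows essentially the same route as the paper's: handle small-weight edges via $E_0$, lift the shortest $H'$-path back to $H$, and charge both the rounding slack and the intra-component splices a total of at most $2\eps W/n < 2\eps\,\wts(s,t)$. If anything, your accounting of the splice cost (via simplicity of $\pi$ and $\sum_C|V(C)|\le n$) is spelled out more carefully than in the paper, which states the bound $\sum_i \dist_H(v_{i-1},u_i)<n\cdot\frac{W\eps}{n^2}$ without elaboration.
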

\begin{proof}
    If $\wts(s, t)\leq W/n$, then by construction $(s, t)\in E(H)$, so $\dist_H(s, t) = \wts(s, t)$. Otherwise, we may assume $\wts(s, t) > W/n$. Since $H'$ is a $(1+\eps_0)$-spanner of $G'$, we have
    $$\dist_{H'}(s, t)\leq (1+\eps_0)\wts'(s, t)\leq (1+\eps_0)\wts(s, t)\cdot \frac{n^2}{W\eps}.$$

    Let $\pi'$ be the shortest path between $s$ and $t$ in the contracted graph $H'$. Unpack all the contracted nodes in $G'$, then $\pi'$ expands to a sequence of edges $(u_1, v_1), (u_2, v_2), \ldots, (u_k, v_k)\in E(H)$, where $v_{i-1}, u_{i}$ are in the same contracted node for all $1\leq i\leq k+1$ ($s = v_0, t = u_{k+1}$). Since $H$ includes all edges whose weights are at most $W/n$ under $\wts$, the unpacking increases the distance between $s$ and $t$ by at most $\sum_{i=1}^{k+1}\dist_H(v_{i-1}, u_i)<n\cdot \frac{W\epsilon}{n^2} = \frac{W\eps}{n}$. Therefore, overall we have
    $$\begin{aligned}
        \dist_H(s, t) &\leq \frac{W\eps}{n} + \sum_{i=1}^k\wts(u_i, v_i) \leq \frac{W\eps}{n} + \sum_{i=1}^k\frac{W\eps}{n^2}\cdot (\wts'(u_i, v_i)+1)\\
        &< (1+\eps_0)\wts(s, t) + \frac{2W\eps}{n} < (1+\eps_0 + 2\eps)\wts(s, t) . \qedhere
    \end{aligned}$$
\end{proof}

\begin{lemma}
    $\wts(H)\leq (4C+4)\cdot \wts(G_{\opt, \eps})$.
\end{lemma}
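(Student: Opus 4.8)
The plan is to bound the two pieces of $H=H'\cup E_0$ separately: $\wts(H)\le \wts(H')+\wts(E_0)$, where throughout $\wts(H')=\sum_{e\in E(H')}\wts(e)$ refers to the \emph{original} weights (while the algorithmic guarantee is phrased in terms of $\wts'$). The term $\wts(E_0)$ is the easy one. Recall the standing assumptions of this subsection: $W=\max_{e\in E}\wts(e)$ and $\dist_G(u,v)=\wts(u,v)$ for every edge $(u,v)$. Applying the latter to an edge of weight $W$ and using $G_{\opt,\eps}\subseteq G$, we get $\dist_{G_{\opt,\eps}}(u,v)\ge W$, so (the shortest $u$–$v$ path in $G_{\opt,\eps}$ already has weight $\ge W$) $\wts(G_{\opt,\eps})\ge W$. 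Since $G$ is planar, $|E_0|\le|E|=O(n)$, and every edge of $E_0$ has weight at most $W/n$; hence $\wts(E_0)\le O(W)\le 4\wts(G_{\opt,\eps})$.

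It remains to show $\wts(H')\le 4C\cdot\wts(G_{\opt,\eps})$. Write $\mu = n^2/(W\eps)$, so that $\wts'(e)=\lfloor\mu\,\wts(e)\rfloor$. Every edge that survives the contraction --- in particular every edge of $E(G')\supseteq E(H')$ --- has $\wts(e)\ge \eps W/n^2 = 1/\mu$, hence $\mu\,\wts(e)\ge 1$ and therefore $\tfrac12\mu\,\wts(e)\le\wts'(e)\le\mu\,\wts(e)$. Summing the left inequality over $E(H')$ yields $\wts(H')\le\tfrac{2}{\mu}\,\wts'(H')$, and combined with the algorithmic guarantee $\wts'(H')\le C\cdot\wts'(G'_{\opt,2\eps})$, it suffices to prove $\wts'(G'_{\opt,2\eps})\le 2\mu\cdot\wts(G_{\opt,\eps})$.

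To establish this, I will exhibit one $(1+2\eps)$-spanner of $G'$ (w.r.t.\ $\wts'$) of small $\wts'$-weight and invoke minimality of $G'_{\opt,2\eps}$. Let $\hat H\subseteq G'$ consist of (i) the image in $G'$ of $G_{\opt,\eps}$ (its edges that survive the contraction), together with (ii) all edges $e$ of $G'$ with $\wts'(e)\le\lceil(1+2\eps)/\eps\rceil$. By (i) each surviving edge $e$ of $G_{\opt,\eps}$ contributes $\wts'(e)\le\mu\,\wts(e)$, so $\wts'(\hat H)\le\mu\cdot\wts(G_{\opt,\eps}) + |E(G')|\cdot\lceil(1+2\eps)/\eps\rceil \le \mu\cdot\wts(G_{\opt,\eps}) + O(n/\eps)\le 2\mu\cdot\wts(G_{\opt,\eps})$, using $|E(G')|=O(n)$ and $\mu\cdot\wts(G_{\opt,\eps})\ge\mu W = n^2/\eps$. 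For feasibility it suffices, by the standard reduction, that every edge $(u',v')$ of $G'$ is $(1+2\eps)$-preserved in $\hat H$. Type-(ii) edges are trivially preserved. Otherwise $(u',v')$ arises from an edge $(u,v)$ of $G$ with $\dist_G(u,v)=\wts(u,v)$, so $G_{\opt,\eps}$ carries a $u$–$v$ walk of $\wts$-length at most $(1+\eps)\wts(u,v)$; contracting the low-weight components maps it to a $u'$–$v'$ walk lying in $\hat H$ of $\wts$-length at most $(1+\eps)\wts(u,v)$, hence of $\wts'$-length at most $\mu(1+\eps)\wts(u,v)$. Since $(u',v')$ is not of type (ii), $\wts'(u',v')=\lfloor\mu\,\wts(u,v)\rfloor>(1+2\eps)/\eps$, whence $\mu\,\wts(u,v)>(1+2\eps)/\eps$ and $\lfloor\mu\,\wts(u,v)\rfloor\ge\mu\,\wts(u,v)\cdot\tfrac{1+\eps}{1+2\eps}$; combining these, the walk has $\wts'$-length at most $(1+2\eps)\,\wts'(u',v')$, as required.

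Putting everything together, $\wts(H')\le\tfrac{2}{\mu}\,\wts'(H')\le\tfrac{2}{\mu}\cdot C\cdot 2\mu\cdot\wts(G_{\opt,\eps}) = 4C\cdot\wts(G_{\opt,\eps})$, and with the bound on $\wts(E_0)$ this gives $\wts(H)\le(4C+4)\cdot\wts(G_{\opt,\eps})$. \textbf{The main obstacle} is step (ii) together with its stretch computation: per-edge rounding inflates the $(1+\eps)$-stretch of the image of $G_{\opt,\eps}$ to roughly $2(1+\eps)$ for the cheapest surviving edges (those with $\mu\,\wts(e)$ near $1$), so that image alone need not be a $(1+2\eps)$-spanner of $G'$; adding all $O(n)$ edges of rounded weight $O(1/\eps)$ repairs this at a cost of $O(n/\eps)$, which is negligible next to $\mu W = n^2/\eps$. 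Getting the quantifiers of this ``negligible'' comparison right --- and disposing of the finitely many tiny-$n$ cases, which are trivial --- is the only delicate part.
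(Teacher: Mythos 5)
Your proof is correct and follows the same overall strategy as the paper's: bound $\wts(E_0)$ by $O(W)\le O(1)\cdot\wts(G_{\opt,\eps})$, exhibit a cheap feasible $(1+2\eps)$-spanner of $G'$ built from $G_{\opt,\eps}$ to upper-bound $\wts'(G'_{\opt,2\eps})$, invoke the guarantee $\wts'(H')\le C\cdot\wts'(G'_{\opt,2\eps})$, and translate $\wts'$ back to $\wts$. The two proofs diverge only in how they execute two sub-steps. First, the comparison spanner: the paper contracts $G_{\opt,\eps}\cup E_0$ and verifies the $(1+2\eps)$-stretch only for edges of weight exceeding $W/n$ (the remaining edges being present in the comparison graph outright), whereas you take the image of $G_{\opt,\eps}$ in $G'$ and patch the rounding-induced stretch loss by adding all edges of $G'$ of rounded weight $O(1/\eps)$, verifying stretch uniformly for all non-patched edges; both constructions cost at most $2\mu\cdot\wts(G_{\opt,\eps})$ in $\wts'$, and you correctly identified that the bare image of $G_{\opt,\eps}$ would not suffice because of per-edge floors. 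Second, the reverse weight translation: the paper uses $\wts'(e)\ge\mu\wts(e)-1$ and absorbs the total additive error via $|E(H)|<3n\le W$, while you use the multiplicative bound $\wts'(e)\ge\tfrac12\mu\wts(e)$, valid because every edge surviving the contraction has $\mu\wts(e)\ge1$; your version is pointwise and slightly cleaner, at the cost of the factor $2$ you already budget for. Both routes land on the same constant $(4C+4)$.
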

\begin{proof}
    Define $F = E(G_{\opt,\eps})\cup E_0$, and let $\widehat{G}$ be the graph obtained from $(V, F)$ by contracting all edges in $F$ whose weights are less than $\eps W/n^2$.

    We claim that $\widehat{G}$ is a $(1+2\eps)$-spanner of $G'$. In fact, for any edge $(s, t)\in E$ such that $\wts(s, t) > W/n$, we have
    $$\begin{aligned}
        \dist_{\widehat{G}}(s, t) &\leq \dist_{G_{\opt,\eps}}(s, t)\cdot \frac{n^2}{W\eps} \leq (1+\eps)\wts(s, t)\cdot \frac{n^2}{W\eps}\\
        &< (1+\eps)\cdot (\wts'(s, t)+1) \leq (1+2\eps)\cdot \wts'(s, t) .
    \end{aligned}$$
    The last inequality holds because $\wts(s, t) > W/n$ and $\wts'(s, t) = \left\lfloor \wts(s, t)\cdot \frac{n^2}{W\eps}\right\rfloor \geq \floor{n/\eps} \geq 1 + \eps^{-1}$.

    As $\widehat{G}$ is a $(1+2\eps)$-spanner of $G'$, we have
    $$\begin{aligned}
        \frac{n^2}{W\eps}\cdot\wts(G_{\opt,\eps}) + \wts'(E_0)&\geq\wts'(\widehat{G}) \geq \wts'(G'_{\opt,2\eps})\geq \frac{1}{C}\cdot \wts'(H')\\
        &\geq \frac{1}{C}\cdot \brac{\wts'(H\setminus E_0)}\\
        &\geq \frac{1}{C}\brac{\wts'(H) - \wts'(E_0)}\\
        &\geq \frac{1}{C}\brac{\frac{n^2}{W\eps}\brac{\wts(H)-|E(H)|}} - \frac{1}{C}\wts'(E_0).
    \end{aligned}$$
    Rearranging the terms and using $|E(H)|\leq |E| < 3n$ and $3n\leq W\leq \wts(G_{\opt,\eps})$, we obtain
    $$\begin{aligned}
        \wts(H) &< C\cdot \wts(G_{\opt,\eps}) + (C+1)\frac{W\eps}{n^2}\cdot\wts'(E_0) + 3n\\
        &\leq C\cdot \wts(G_{\opt,\eps}) + (C+1)\cdot \frac{W}{n}\cdot |E_0| + 3n\\
        &< (4C+4)\cdot\wts(G_{\opt,\eps}) .\qedhere
    \end{aligned}$$
\end{proof}
\section{Hardness for Planar Spanners}
\label{sec:hardness}

In this section, we prove \Cref{thm:lower}. For simplicity of notation, we assume that $\epsilon>0$ is rational.
The integer edge weight condition is achieved through proper scaling.

\paragraph{3SAT.} Given a set $X$ of $n$ Boolean variables $x_1, x_2, \ldots ,x_n$ and a Boolean formula $\phi$ in conjunctive normal form, where each clause has at most $3$ literals, the 3SAT problem is to determine whether there is an assignment of \texttt{true} or \texttt{false} values to the variables such that $\phi$ is satisfied (i.e., evaluates true). 

\paragraph{Incidence graph.} Given a 3SAT instance $\mathcal{I}$ with variable set $X$ and a Boolean formula $\phi = c_1 \land c_2 \land \ldots \land c_m$, the \emph{incidence graph} $G=G(\mathcal{I})$ corresponding to $\mathcal{I}$ is a bipartite graph with partite sets corresponding to all variables in $X$ and all clauses in $\phi$; there is an undirected edge $(x_i, c_j)$ in $G$ if and only if the clause $c_j$ contains $x_i$ or $\neg x_i$. The edge $(x_i, c_j)$ is a \emph{positive edge} if $c_j$ contains $x_i$ and is a \emph{negative edge} if $c_j$ contains $\neg x_i$. For each clause $c_j$, let $|c_j|$ be the number of variables in $c_j$. For each variable $x_i$, let $C^+_i$ be the set of clauses containing $x_i$ and $C^-_i$ be the set of clauses containing $\neg x_i$.

\paragraph{Planar 3SAT.} A 3SAT instance $\mathcal{I}$ is \emph{planar} if its incidence graph $G$ is planar.
Furthermore, $\mathcal{I}$ is \emph{planar rectilinear} if there exists a planar representation of $G$ where the vertices are represented by horizontal line segments (specifically, the vertices representing variables are on the $x$-axis while the segments representing clauses are above or below the $x$-axis); and the edges are represented by vertical segments. An instance $\mathcal{I}$ is \emph{planar rectilinear monotone} if all edges above the $x$-axis are positive and all edges below are negative (see an example in \Cref{fig:example}).

\begin{figure}[h!]
	\centering
	\begin{tikzpicture}[
  segment/.style={thick},
  varlabel/.style={font=\footnotesize, above},
  clauselabel/.style={font=\footnotesize},
  every node/.style={font=\footnotesize}
]

\def\varwidth{1.2}
\def\offset{0.3}
\def\varheight{0}

\foreach \i/\x in {1/0, 2/2, 3/4, 4/6, 5/8} {
  \draw[segment] (\x - \varwidth/2, \varheight) -- (\x + \varwidth/2, \varheight);
  \node[varlabel, below] at (\x, \varheight) {$x_\i$};
}

\def\ypos{2}
\draw[segment] (0, \ypos) -- (4.3, \ypos);
\node[clauselabel, above] at (2.0, \ypos) {$c_1 = x_1 \lor x_2 \lor x_3$};
\draw[segment] (0, \varheight) -- (0, \ypos); 
\draw[segment] (2.0, \varheight) -- (2.0, \ypos);   
\draw[segment] (4.3, \varheight) -- (4.3, \ypos);   

\def\ypos{3.2}
\draw[segment] (-0.4, \ypos) -- (8.3, \ypos);
\node[clauselabel, above] at (4.0, \ypos) {$c_2 = x_1 \lor x_4 \lor x_5$};
\draw[segment] (-0.4, \varheight) -- (-0.4, \ypos);   
\draw[segment] (6.0, \varheight) -- (6.0, \ypos);     
\draw[segment] (8.3, \varheight) -- (8.3, \ypos);     

\def\ypos{-2}
\draw[segment] (0.3, \ypos) -- (5.7, \ypos);
\node[clauselabel, below] at (3.3, \ypos) {$c_3 = \lnot x_1 \lor \lnot x_3 \lor \lnot x_4$};
\draw[segment] (0.3, \varheight) -- (0.3, \ypos);   
\draw[segment] (3.7, \varheight) -- (3.7, \ypos);   
\draw[segment] (5.7, \varheight) -- (5.7, \ypos);   


\end{tikzpicture} 
	\caption{A planar representation of a planar rectilinear monotone 3SAT instance.}
	\label{fig:example}
\end{figure}
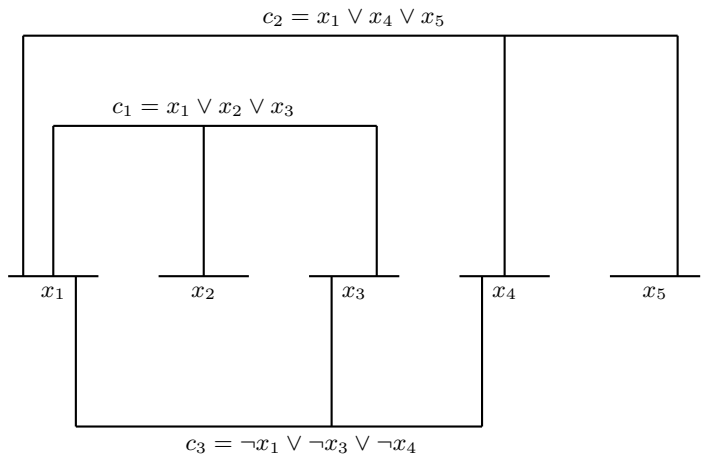

\begin{theorem}[\cite{BergK12}]
	Planar rectilinear monotone 3SAT is NP-hard.	
\end{theorem}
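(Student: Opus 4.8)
The plan is to prove NP-hardness by a polynomial-time reduction from Planar 3SAT, which is NP-hard by the classical result of Lichtenstein. I would actually start from the rectilinear variant of Knuth and Raghunathan, in which the incidence graph already admits a drawing with all variable-segments collinear on the $x$-axis, all clause-segments placed above or below it, and all edges drawn as vertical segments; the only property still to be enforced is \emph{monotonicity}, i.e.\ that every edge above the axis is positive and every edge below is negative. (If one prefers to start from plain Planar 3SAT, a preliminary step invokes standard orthogonal-drawing results for planar bipartite graphs to reach such a rectilinear layout.) So the task reduces to turning a rectilinear, possibly non-monotone instance into an equivalent rectilinear \emph{and} monotone one.

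The heart of the construction is a variable gadget replacing each variable $x_i$. Fix the rectilinear drawing and read the cyclic sequence of edge-endpoints ("ports") around $x_i$'s segment: the top ports left to right, then the bottom ports right to left. Each port carries a \emph{required polarity} $\sigma\in\{+,-\}$: an edge going up to a clause $c$ must become a positive literal of the new variable it attaches to (since above-clauses are all-positive in a monotone instance) and an edge going down must become a negative literal; combining this with the sign of the original literal of $x_i$ in $c$ fixes whether the new variable attached there must be logically equivalent to $x_i$ or to $\neg x_i$ (e.g.\ an up-edge for an occurrence $\neg x_i$ needs a new variable $y\equiv\neg x_i$, while a down-edge for an occurrence $x_i$ needs $y\equiv\neg x_i$). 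I would then split $x_i$ into a chain of new variables $y_1,\dots,y_r$, one per maximal cyclically-consecutive run of equal required polarity, linked consecutively by the $2$-clause \emph{negation gadget} $(y_k\vee y_{k+1})\wedge(\neg y_k\vee\neg y_{k+1})$: this forces $y_{k+1}=\neg y_k$ and is itself monotone (one all-positive $2$-clause placed just above the axis, one all-negative $2$-clause just below, each with two short vertical edges). Since any cyclic binary sequence has an even number of polarity changes, the chain of negations closes up consistently, so no non-monotone equality clause is ever needed; setting $x_i:=y_1$ propagates a unique value to the whole chain. Re-attaching each clause-edge to the $y_k$ of matching polarity (as a positive literal if it goes up, negative if it goes down) makes every surviving original clause all-positive if above and all-negative if below, so the instance is monotone.

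With the logic settled, the remaining step --- which I expect to be the main obstacle --- is the geometric realization: drawing the chain $y_1,\dots,y_r$ and its $O(\deg(x_i))$ negation gadgets inside a thin horizontal slab around the former location of $x_i$, routing each re-attached clause-edge to its target $y_k$ through this slab without crossings, and rescaling so that all $O(\sum_i\deg(x_i))=O(m)$ new variables and gadgets fit. The delicate point is that a single run of ports may consist of a top block and a bottom block sitting at different $x$-ranges, so the ports served by a sub-segment need not lie directly above or below it; one routes the stray edges with short horizontal jogs inside the slab, using the fact that within each run all ports demand the same polarity, which one checks is compatible with a planar orthogonal routing. I would then verify that the final drawing is planar, rectilinear, and monotone, that the reduction is polynomial, and that the formula is satisfiable iff the monotone instance is: forward, a satisfying assignment determines all the $y_k$ and satisfies every negation gadget by construction while preserving the value of each original clause; backward, a satisfying assignment of the monotone instance restricts via $x_i:=y_1$ and the same literal-value correspondence to a satisfying assignment of the original formula.
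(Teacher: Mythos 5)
First, a point of comparison: the paper does not prove this statement at all — it is imported from \cite{BergK12} — and the proof there is indeed a reduction from planar rectilinear 3SAT that enforces monotonicity via the monotone pair of $2$-clauses $(y\vee y')\wedge(\neg y\vee\neg y')$, so your overall strategy matches the cited source, and the purely logical part of your sketch (required polarities at ports, negation chain, satisfiability equivalence) is sound. The gap is exactly where you yourself locate the main obstacle: the geometric realization, and the two devices you propose there do not work as stated. In a rectilinear representation the variables are segments on the $x$-axis and every edge is a \emph{vertical} segment, so ``short horizontal jogs inside the slab'' are simply not available; the only freedom is to move attachment points and clause endpoints while preserving the left-to-right order of attachments on each side of the axis (permuting them changes the rotation at the variable and forces crossings with the fixed outside drawing). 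This is fatal for the grouping ``one new variable per maximal cyclically-consecutive run'': the cyclic reading traverses the top ports left to right but the bottom ports \emph{right to left}, so as soon as at least two runs contain top ports and at least two contain bottom ports, placing the chain $y_1,\dots,y_r$ along the axis in run order assigns the bottom attachment points to chain variables in an order opposite to their $x$-order (and reversing the placement breaks the top instead); this cannot be drawn with non-crossing vertical edges. The parity remark about cyclic sequences does not help: for a linear chain no ``closing up'' is needed at all, and actually closing the chain into a cycle would require a linking clause spanning the whole slab and crossing all intermediate vertical edges.

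The repair is to stop economizing on copies. Use a chain of $O(\deg x_i)$ copies, consecutive copies forced opposite by your negation gadget, and assign each top occurrence and each bottom occurrence to a chain variable of the correct parity so that, on each side \emph{separately}, the assigned chain indices are non-decreasing in $x$; since the top and bottom assignments are independent, such an assignment always exists. Then every attachment keeps its relative order on its side, the linking $2$-clauses can be drawn between adjacent copies at a height closer to the axis than all original clauses, and planarity, rectilinearity and monotonicity are preserved; with that modification (which is essentially the construction of \cite{BergK12}) your equivalence argument goes through.
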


\paragraph{Weight-$k$ planar $(1 + \eps)$-spanner.} Given a planar graph $G = (V, E)$, is there a $(1 + \eps)$-spanner of $G$ of weight $k$?
In this section, 
we reduce the weight-$k$ planar $(1 + \epsilon)$-spanner problem to the planar rectilinear monotone 3SAT. 

\paragraph{Reduction.} Given an instance $\mathcal{I}$ of planar rectilinear monotone 3SAT.  Let $G'$ be the planar rectilinear drawing of the incidence graph of $\mathcal{I}$.  We construct an edge-weighted planar graph $G$ such that $G$ has a $(1 + \epsilon)$-spanner of weight at most $k$ (for some value of $k$ defined later) if and only if $\mathcal{I}$ is satisfiable. We can assume that each variable $x$ appears in both positive and negative clauses, since otherwise we can assign a \texttt{true}/\texttt{false} value to $x$ according to a clause it appears in and delete all clauses containing $x$ or $\neg x$.  

\paragraph{Clause gadget.} If clause $c_j$ has $3$ literals, we assume w.l.o.g.\ that $c_j = x_1 \land x_2 \land x_3$. 
In the drawing of $G'$, we assume that the edges corresponding to $(x_{1}, c_j)$, $(x_{2}, c_j)$ and $(x_{3}, c_j)$ appear from left to right. 
We then replace each clause segment with the gadget in \Cref{fig:clause_graph}.
Formally, we create a cycle $(e_j, l_{j, 1}, r_{j, 1}, l_{j, 2}, r_{j, 2}, l_{j, 3}, r_{j, 3}, f_j)$. For $h \in \{1, 2, 3\}$, the vertices $l_{j, h}$ and $r_{j, h}$ are connected to a gadget of the literal $x_{h}$, which will be discussed later, via two edges $(l_{j, h}, a_{j, h})$ and $(r_{j, h}, b_{j, h})$. The order of those edges is shown in \Cref{fig:clause_graph}. The weight of each edge $(l_{j, h}, r_{j, h})$ is $2 + 2\epsilon$. We set the weight of each edge $(l_{j, h}, a_{j, h})$ and $(r_{j, h}, b_{j, h})$ to be $\epsilon$. 
Set the weight of $(e_j, f_j)$ to $\frac{6 + 10\epsilon}{1 + \epsilon}$. All other edges have weight $0$. 
If clause $c_j$ contains two literals, say $x_{1}$ and $x_{2}$, then we construct the graph of the clause similar to the case with three literals, but without $l_{j, 3}$ and $r_{j, 3}$. The weight of $(l_{j, 1}, r_{j, 1})$ and $(l_{j, 2}, r_{j, 2})$ remains unchanged, while the weight of $(e_j, f_j)$ is $\frac{4 + 6\epsilon}{1 + \epsilon}$.
\begin{figure}[h]
  \centering
  \begin{subfigure}[t]{0.45\textwidth}
    \centering
    \vspace{0pt}
    \begin{tikzpicture}
  \draw[thick] (0,0) -- (6,0) node[midway, above] {$c_j$};

  \draw[thick] (0,0) -- (0,-3.5) node[midway, right] {$(x_{1}, c_j)$};  
  \draw[thick] (3,0) -- (3,-3.5) node[midway, right] {$(x_{2}, c_j)$};  
  \draw[thick] (6,0) -- (6,-3.5) node[midway, right] {$(x_{3}, c_j)$};  
  \path (0,-3.5) -- (0,-4);
\end{tikzpicture}
    \caption{Original Clause}
    \label{fig:org}
  \end{subfigure}
  \hspace{0.1cm}
  \begin{subfigure}[t]{0.45\textwidth}
    \centering
    \vspace{0pt}
    \begin{tikzpicture}[scale=1]

  \def\rectHeight{0.6}
  \def\rectWidth{6.5}
  \def\rectX{0}
  \def\rectY{0}
  \def\greenLen{0.5}       
  \def\blueDrop{2.2}       
  \def\triangleHeight{0.5} 

  \coordinate (e) at (\rectX, \rectY + \rectHeight);
  \coordinate (f) at (\rectX+\rectWidth, \rectY + \rectHeight);
  \draw[fill=gray!10, draw=black, thin] (\rectX,\rectY) rectangle ++(\rectWidth,\rectHeight);

  
  \draw[red, thick] (e) -- (f);
  \fill (e) circle (1pt) node[above left] {$e_j$};
  \fill (f) circle (1pt) node[above right] {$f_j$};

  \foreach \i/\x in {1/\rectX + 0.5, 2/\rectX + 3, 3/\rectX + 5.5} {
    \coordinate (l\i) at (\x, \rectY);
    \coordinate (r\i) at ({\x + \greenLen}, \rectY);
    \draw[green!70!black, thick] (l\i) -- (r\i);


    \coordinate (a\i) at ($(l\i)-(0,\blueDrop)$);
    \coordinate (b\i) at ($(r\i)-(0,\blueDrop)$);
    \draw[blue, thick] (l\i) -- (a\i);
    \draw[blue, thick] (r\i) -- (b\i);

    \draw[orange, thick] (a\i) -- (b\i);

    \coordinate (m\i) at ($(a\i)!0.5!(b\i)$);
    \coordinate (c\i) at ($(m\i)-(0,\triangleHeight)$);

    \draw[cyan, thick] (a\i) -- (c\i);
    \draw[cyan, thick] (b\i) -- (c\i);
    \fill (l\i) circle (1pt) node[above, xshift = -3] {$l_{j, \i}$};
    \fill (r\i) circle (1pt) node[above, xshift = 3] {$r_{j, \i}$};
    \fill (a\i) circle (1pt) node[left] {$a_{j, \i}$};
    \fill (b\i) circle (1pt) node[right] {$b_{j, \i}$};
    \fill (c\i) circle (1pt) node[below] {$g_{j, \i}$};
  }

\end{tikzpicture}
    \caption{Clause Graph}
    \label{fig:clause_graph}
  \end{subfigure}
  \label{fig:clause}
  \caption{The clause graph of $c_j = x_1 \lor x_2 \lor x_3$}
\end{figure}

\paragraph{Literal gadget.} We now construct the gadget for each literal $x_i$.
Let $h_i = \max\{|C^+_i|, |C^-_i|\}$. Assume w.l.o.g.\ that $|C^+_i| \leq |C^-_i|$. 
The gadget contains two disjoint paths from a vertex $s_i$ to a vertex $t_i$, each path has length $4h_i$. The two paths enclose  a region. 
Inside the region, there is an edge $(s_i, t_i)$ of weight $4h_i/(1 + \epsilon)$. 
For each clause $c_j$, recall that $l_{j, i}, r_{j, i}$ are connected to the gadget of $x_i$ via two edges, and let $a_{j, i}, b_{j, i}$ be the other endpoints of those edges. We set $\wts(a_{j, i}, b_{j, i}) = 2$ and create a vertex $g_{j, i}$ inside the enclosed region of the two paths from $s_i$ to $t_i$ that is adjacent to $a_{j, i}$ and $b_{j, i}$. We set the weight of two edges $(g_{j, i}, a_{j, i})$ and $(g_{j, i}, b_{j, i})$ to be $1 + \epsilon$. 
We arrange all edges $(a_{j, i}, b_{j, i})$ corresponding to positive clauses on the upper path from $s_i$ to $t_i$ and all edges corresponding to negative clauses on the lower path as in \Cref{fig:literal}.
From left to right, the 
order of each $(a_{j, i}, b_{j, i})$ added to the path from $s_i$ to $t_i$ is the same as the order of their corresponding edges in the drawing of $G'$. 
If $c_j$ contains $x_i$, we call the edge $(a_{j, i}, b_{j, i})$ a true edge. Otherwise, $(a_{j, i}, b_{j, i})$ is a false edge.  
We append an edge of weight $2h_i$ as the last non-zero weight edge in both the upper and lower paths from  $s_i$ to $t_i$.
Call these edges $(u_i, u'_i)$ and $(v_i, v'_i)$.

\begin{figure}[h!]
	\centering
	\begin{tikzpicture}[scale=1.2]

  \def\segLength{0.8}
  \def\triangleInset{0.4}

  \def\rectX{0}
  \def\rectY{0}
  \def\rectWidth{7}
  \def\rectHeight{2}
  \def\gap{1.6}
  \pgfmathsetmacro\midY{\rectY + \rectHeight/2}
  \pgfmathsetmacro\leftX{\rectX}
  \pgfmathsetmacro\rightX{\rectX + \rectWidth}
  \pgfmathsetmacro\yTop{\rectY + \rectHeight}
  \pgfmathsetmacro\yBottom{\rectY}

  \draw[thin] (\rectX,\rectY) rectangle ++(\rectWidth,\rectHeight);
  \draw[red, very thick] (\leftX, \midY) -- (\rightX, \midY);
  \fill (\leftX, \midY) circle (1pt) node[left] {$s_i$};
  \fill (\rightX, \midY) circle (1pt) node[right] {$t_i$};

  \def\clauseRectHeight{0.6}
  \def\clauseRectWidth{4.5}
  \pgfmathsetmacro\sharedXStartAone{\rectX + 0.7}
  \pgfmathsetmacro\sharedXEndAone{\sharedXStartAone + \segLength}
  \pgfmathsetmacro\clauseRectYAone{\yTop + 0.8}
  \pgfmathsetmacro\clauseRectXAone{\sharedXEndAone - \clauseRectWidth + 0.05}

  \coordinate (e1) at (\clauseRectXAone, \clauseRectYAone + \clauseRectHeight);
  \coordinate (f1) at (\clauseRectXAone+\clauseRectWidth, \clauseRectYAone + \clauseRectHeight);
  \draw[fill=gray!10, draw=black, thin] (\clauseRectXAone,\clauseRectYAone) rectangle ++(\clauseRectWidth,\clauseRectHeight);
  \draw[very thick, red] (e1) -- node[midway, below, yshift=-5pt] {\textcolor{black}{$c_j$}} (f1);

  \coordinate (l1) at (\sharedXStartAone, \clauseRectYAone);
  \coordinate (r1) at (\sharedXEndAone, \clauseRectYAone);
  \draw[very thick, green!70!black] (l1) -- (r1);

  \coordinate (a1) at (\sharedXStartAone, \yTop);
  \coordinate (b1) at (\sharedXEndAone, \yTop);
  \draw[very thick, blue] (l1) -- (a1);
  \draw[very thick, blue] (r1) -- (b1);
  \fill (e1) circle (1pt);
  \fill (f1) circle (1pt);
  
  \def\clauseRectHeightTwo{0.6}
  \def\clauseRectWidthTwo{6.5}
  \pgfmathsetmacro\sharedXStartAtwo{\rectX + 0.7 + 1 * \gap}
  \pgfmathsetmacro\sharedXEndAtwo{\sharedXStartAtwo + \segLength}
  \pgfmathsetmacro\clauseRectYAtwo{\yTop + 2.1}
  \pgfmathsetmacro\clauseRectXAtwo{(\sharedXStartAtwo + \sharedXEndAtwo)/2 - \clauseRectWidthTwo/2}

  \coordinate (e2) at (\clauseRectXAtwo, \clauseRectYAtwo + \clauseRectHeightTwo);
  \coordinate (f2) at (\clauseRectXAtwo+\clauseRectWidthTwo, \clauseRectYAtwo + \clauseRectHeightTwo);
  \draw[fill=gray!10, draw=black, thin] (\clauseRectXAtwo,\clauseRectYAtwo) rectangle ++(\clauseRectWidthTwo,\clauseRectHeightTwo);
  \draw[very thick, red] (e2) -- (f2);
\fill (e2) circle (1pt);
\fill (f2) circle (1pt);

  \coordinate (l2) at (\sharedXStartAtwo, \clauseRectYAtwo);
  \coordinate (r2) at (\sharedXEndAtwo, \clauseRectYAtwo);
  \draw[very thick, green!70!black] (l2) -- (r2);

  \coordinate (a2) at (\sharedXStartAtwo, \yTop);
  \coordinate (b2) at (\sharedXEndAtwo, \yTop);
  \draw[very thick, blue] (l2) -- (a2);
  \draw[very thick, blue] (r2) -- (b2);

  \pgfmathsetmacro\clauseReflectYOne{2*\midY - \clauseRectYAone - \clauseRectHeight}
  \coordinate (e1r) at (\clauseRectXAone, \clauseReflectYOne); 
  \coordinate (f1r) at (\clauseRectXAone+\clauseRectWidth, \clauseReflectYOne);
  \draw[fill=gray!10, draw=black, thin] (\clauseRectXAone,\clauseReflectYOne) rectangle ++(\clauseRectWidth,\clauseRectHeight);
  \draw[very thick, red] (e1r) -- (f1r);
\fill (e1r) circle (1pt);
\fill (f1r) circle (1pt);

  \coordinate (l1r) at (\sharedXStartAone, \clauseReflectYOne + \clauseRectHeight);
  \coordinate (r1r) at (\sharedXEndAone, \clauseReflectYOne + \clauseRectHeight);
  \draw[very thick, green!70!black] (l1r) -- (r1r);

  \coordinate (a1r) at (\sharedXStartAone, \yBottom);
  \coordinate (b1r) at (\sharedXEndAone, \yBottom);
  \draw[very thick, blue] (l1r) -- (a1r);
  \draw[very thick, blue] (r1r) -- (b1r);

\def\clauseRectHeightTwoR{0.6}
\def\clauseRectWidthTwoR{3.6}

\pgfmathsetmacro\sharedXStartAtwoR{\rectX + 0.7 + 1 * \gap}
\pgfmathsetmacro\sharedXEndAtwoR{\sharedXStartAtwoR + \segLength}

\pgfmathsetmacro\clauseRectYAtwoR{\yBottom - 2.9}
\pgfmathsetmacro\clauseRectXAtwoR{\sharedXStartAtwoR - 0.1}

\coordinate (e2r) at (\clauseRectXAtwoR, \clauseRectYAtwoR);
\coordinate (f2r) at (\clauseRectXAtwoR+\clauseRectWidthTwoR, \clauseRectYAtwoR);
\draw[fill=gray!10, draw=black, thin] (\clauseRectXAtwoR,\clauseRectYAtwoR) rectangle ++(\clauseRectWidthTwoR,\clauseRectHeightTwoR);
\draw[very thick, red] (e2r) -- (f2r);
\fill (e2r) circle (1pt);
\fill (f2r) circle (1pt);

\coordinate (l2r) at (\sharedXStartAtwoR, \clauseRectYAtwoR + \clauseRectHeightTwoR);
\coordinate (r2r) at (\sharedXEndAtwoR, \clauseRectYAtwoR + \clauseRectHeightTwoR);
\draw[very thick, green!70!black] (l2r) -- (r2r);

\coordinate (a2r) at (\sharedXStartAtwoR, \yBottom);
\coordinate (b2r) at (\sharedXEndAtwoR, \yBottom);
\draw[very thick, blue] (l2r) -- (a2r);
\draw[very thick, blue] (r2r) -- (b2r);

\def\clauseRectHeightThreeR{0.6}
\def\clauseRectWidthThreeR{3.375} 

\pgfmathsetmacro\sharedXStartAthreeR{\rectX + 0.7 + 2 * \gap}
\pgfmathsetmacro\sharedXEndAthreeR{\sharedXStartAthreeR + \segLength}

\pgfmathsetmacro\clauseRectXAthreeR{\sharedXStartAthreeR - 0.1}
\pgfmathsetmacro\clauseRectYAthreeR{\yBottom - 1.5}

\coordinate (e6) at (\clauseRectXAthreeR, \clauseRectYAthreeR);
\coordinate (f6) at (\clauseRectXAthreeR+\clauseRectWidthThreeR, \clauseRectYAthreeR);
\draw[fill=gray!10, draw=black, thin] (\clauseRectXAthreeR,\clauseRectYAthreeR) rectangle ++(\clauseRectWidthThreeR,\clauseRectHeightThreeR);
\draw[very thick, red] (e6) -- (f6);
\fill (e6) circle (1pt);
\fill (f6) circle (1pt);

\coordinate (l3r) at (\sharedXStartAthreeR, \clauseRectYAthreeR + \clauseRectHeightThreeR);
\coordinate (r3r) at (\sharedXEndAthreeR, \clauseRectYAthreeR + \clauseRectHeightThreeR);
\draw[very thick, green!70!black] (l3r) -- (r3r);

\coordinate (a3r) at (\sharedXStartAthreeR, \yBottom);
\coordinate (b3r) at (\sharedXEndAthreeR, \yBottom);
\draw[very thick, blue] (l3r) -- (a3r);
\draw[very thick, blue] (r3r) -- (b3r);

\pgfmathsetmacro\purpleSegLength{1.0}
\pgfmathsetmacro\xPurpleStart{\rectX + \rectWidth - 1.5}
\pgfmathsetmacro\xPurpleEnd{\xPurpleStart + \purpleSegLength}

\coordinate (u) at (\xPurpleStart, \yTop);
\coordinate (v) at (\xPurpleEnd, \yTop);
\draw[very thick, purple] (u) -- (v);
\fill (u) circle (1pt) node[above left] {$u_i$};
\fill (v) circle (1pt) node[above right] {$u_i'$};

\coordinate (u') at (\xPurpleStart, \yBottom);
\coordinate (v') at (\xPurpleEnd, \yBottom);
\draw[very thick, purple] (u') -- (v');
\fill (u') circle (1pt) node[below left] {$v_i$};
\fill (v') circle (1pt) node[below right] {$v_i'$};

  \foreach \i in {0,1,2} {
    \pgfmathsetmacro\xstart{\rectX + 0.7 + \i * \gap}
    \pgfmathsetmacro\xend{\xstart + \segLength}
    \pgfmathsetmacro\xmid{(\xstart + \xend)/2}

    \draw[very thick, orange] (\xstart,\yTop) -- (\xend,\yTop);
    \pgfmathtruncatemacro\labeli{\i + 1}
    \pgfmathsetmacro\cyTop{\yTop - \triangleInset}
    
    \draw[very thick, cyan] (\xstart,\yTop) -- (\xmid,\cyTop) -- (\xend,\yTop);

    \draw[very thick, orange] (\xstart,\yBottom) -- (\xend,\yBottom);
    
    \pgfmathsetmacro\cyBottom{\yBottom + \triangleInset}
    
    \draw[very thick, cyan] (\xstart,\yBottom) -- (\xmid,\cyBottom) -- (\xend,\yBottom);
    \ifthenelse{\i = 0}{
        \fill (\xstart,\yTop) circle (1pt) node[yshift = 6pt, left]{$a_{j, i}$};
        \fill (\xend,\yTop) circle (1pt) node[yshift = 7pt, right]{$b_{j, i}$};
        \fill (\xmid, \cyTop) circle (1pt) node[below]{$g_{j, i}$};
    }{
        \fill (\xstart,\yTop) circle (1pt);
        \fill (\xend,\yTop) circle (1pt);
        \fill (\xmid, \cyTop) circle (1pt);
    }
        
    \fill (\xstart,\yTop) circle (1pt);
    \fill (\xend,\yTop) circle (1pt);
    \fill (\xmid, \cyBottom) circle (1pt);
    \fill (\xstart,\yBottom) circle (1pt);
    \fill (\xend,\yBottom) circle (1pt);
    \fill (\xmid, \cyTop) circle (1pt);
  }
\end{tikzpicture} 
	\caption{Literal Gadget}
	\label{fig:literal}
\end{figure}
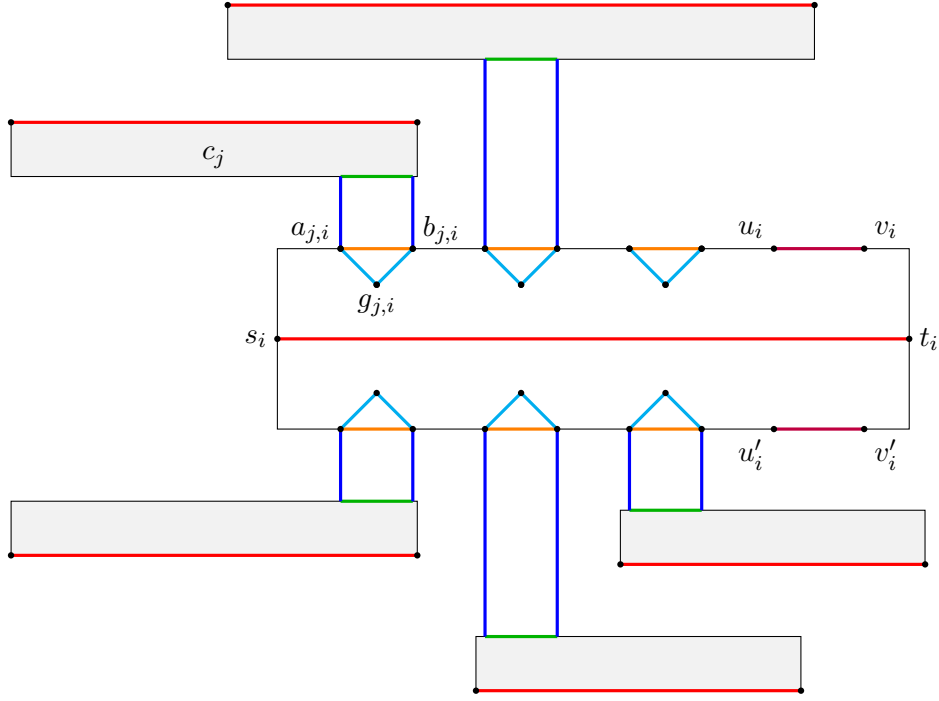

If the number of positive clauses is strictly less than the number of negative clauses, we add $|C^-_i| - |C^+_i|$ true edges, each of which does not connect to any clause gadgets in the upper path (see \Cref{fig:literal}). We connect true edges to form a path from $s_i$ to $t_i$ of length $4h_i$, that is, the edges connecting positive/negative edges have weight $0$.
We use $G[x_i]$ to denote the gadget corresponding to $x_i$ and $G[c_j]$ for the gadget corresponding to a clause $c_j$. 

The construction of $G$ takes polynomial time of the total number of clauses and literals in $\mathcal{I}$. The graph $G$ is planar by the planarity of $G'$. 
Let $W = 2\epsilon \sum_j |c_j| + 2(5 + 2\epsilon) \cdot \sum_i h_i$. We prove the following lemma:

\begin{lemma}
	$\mathcal{I}$ is satisfiable if and only if $G$ has a spanner of total weight at most $W$. 
\end{lemma}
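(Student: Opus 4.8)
The plan is to show that, in every gadget, the edges a $(1+\eps)$-spanner is forced to contain are exactly those of a ``backbone'', while the only discretionary edges are the weight-$2$ edges $(a_{j,i},b_{j,i})$, the direct edges $(s_i,t_i)$, $(e_j,f_j)$, and $(l_{j,h},r_{j,h})$; and the weight budget $W$ is calibrated so that the discretionary edges a spanner can afford encode precisely one truth value per variable plus a witness literal per clause. First I would prove a \emph{forced-edge} claim: in any $(1+\eps)$-spanner $H$ we may assume $H$ contains all weight-$0$ edges (free), and $H$ must contain (i) every bridge edge $(l_{j,h},a_{j,h}),(r_{j,h},b_{j,h})$ (the pair is at distance $\eps$ and every detour has length $\ge 4$), (ii) \emph{both} triangle edges $(a_{j,i},g_{j,i}),(b_{j,i},g_{j,i})$ (the pair $(a_{j,i},g_{j,i})$ is at distance $1+\eps$ and every other $a_{j,i}$--$g_{j,i}$ route has length $\ge 3+\eps>(1+\eps)^2$, likewise for $b_{j,i}$ — in particular $g_{j,i}$ must be connected), (iii) every filler edge, and (iv) every appended edge $(u_i,u_i'),(v_i,v_i')$. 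Call this edge set $E_{\mathrm{force}}$ and let $W_0=\wts(E_{\mathrm{force}})$; a bookkeeping identity (using $\sum_j|c_j|=\sum_i(|C^+_i|+|C^-_i|)$ and $|C^-_i|=h_i$) gives $W-W_0=(2+2\eps)\sum_i h_i-2\eps\sum_i|C^+_i|$, the ``discretionary budget''.

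\textbf{Forward direction.} Given a satisfying assignment $\sigma$, set $H=E_{\mathrm{force}}\cup F$, where $F$ consists, for each $x_i$, of all real $(a_{j,i},b_{j,i})$ edges lying on the $\sigma(x_i)$-side $s_i$--$t_i$ path (upper if $\sigma(x_i)=\mathtt{true}$, lower otherwise). I would verify $\wts(H)\le W$ directly: $F$ contributes $2|C^+_i|$ or $2|C^-_i|=2h_i$ per variable, and $W_0+\sum_i 2\min\{|C^+_i|,h_i\}\cdot(\text{worst case }h_i)\le W$ by the identity above. For the stretch, I would reduce to the three critical pair-types. For $(s_i,t_i)$: the chosen side is a full path of length exactly $4h_i=(1+\eps)\dist_G(s_i,t_i)$. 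For $(a_{j,i},b_{j,i})$: either its direct weight-$2$ edge is in $F$ (distance $2$), or else its triangle is in $E_{\mathrm{force}}$ (distance $2(1+\eps)=(1+\eps)\dist_G$). For $(e_j,f_j)$: because $c_j$ is satisfied, some slot $h$ has $(a_{j,h},b_{j,h})\in F$ while every other slot of $c_j$ has its triangle in $E_{\mathrm{force}}$; routing around the clause cycle through the $(a,b)$ edge at that slot and through triangles at the rest has length $\le 2|c_j|+(4|c_j|-2)\eps=(1+\eps)\dist_G(e_j,f_j)$ (the ``$-2$'' is exactly the slack bought by one non-triangle slot). All remaining pairs either have $\dist_G=0$, handled by weight-$0$ edges, or decompose along gadget boundaries into the three critical types.

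\textbf{Backward direction.} Given $H$ with $\wts(H)\le W$, I would first \emph{canonicalize}: argue that $H$ may be assumed to contain all of $E_{\mathrm{force}}$ and, apart from that, only weight-$2$ edges $(a_{j,i},b_{j,i})$ — deleting any $(l_{j,h},r_{j,h})$, $(s_i,t_i)$, $(e_j,f_j)$ edge and re-serving the affected pairs using no more weight (if this is not literally weight-non-increasing I would instead keep such edges but charge their number/weight against the budget). Serving $(s_i,t_i)$ then forces the discretionary part of $H\cap G[x_i]$ to contain a full upper or a full lower $s_i$--$t_i$ path, and the budget is tight enough that it cannot contain much more than one such side; define $\sigma(x_i)$ to be the corresponding truth value (break ties arbitrarily; show the $(s_i,t_i)$-edge option is dominated). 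Consequently, for a slot $i$ of clause $c_j$, the edge $(a_{j,i},b_{j,i})$ lies in $H$ iff the literal at that slot is true under $\sigma$. If some $c_j$ were unsatisfied by $\sigma$, no slot of $c_j$ would have its $(a,b)$ edge (nor its $(l,r)$ edge) in $H$, so $(e_j,f_j)$ could only be served by the all-triangle detour, of length $2|c_j|+4|c_j|\eps>(1+\eps)\dist_G(e_j,f_j)$ — impossible; hence $H$ must contain the $(e_j,f_j)$ edge, whose weight, added on top of ``one side per variable'' and $W_0$, overshoots $W$ by the calibration, a contradiction. Therefore $\sigma$ satisfies $\mathcal{I}$.

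\textbf{Main obstacle.} The crux is the quantitative half of the backward direction: proving that $W$ is simultaneously large enough to admit the canonical spanner of a satisfying assignment and small enough that a single unsatisfied clause is unaffordable \emph{no matter how the per-variable side choices are juggled}. This requires tracking the exact contribution of each edge class, in particular how the ``filler slack'' $(2+2\eps)\sum_i(|C^-_i|-|C^+_i|)$ created by surplus negative clauses is exactly consumed when a variable switches from its cheap side to its expensive side, so that it can never be redirected to pay for an $(e_j,f_j)$ edge; it is also where the ``proper scaling'' to integral weights must be done so that the stretch thresholds $\lfloor(1+\eps)\dist_G(\cdot,\cdot)\rfloor$ round in the intended direction on the critical pairs $(s_i,t_i),(a_{j,i},b_{j,i}),(e_j,f_j)$. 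The canonicalization step and the case-check that every non-critical vertex pair (including all the degenerate distance-$0$ pairs created by weight-$0$ edges) is served are routine but lengthy.
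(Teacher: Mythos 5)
Your proposal follows essentially the same route as the paper: the same forced-edge observation (all bridge, triangle, filler, and appended edges), the same per-gadget weight lower bound $2h_i(5+2\eps)$ achieved by exactly one full side per variable, and the same clause-cycle calculation ($6+12\eps$ for the all-triangle detour versus the threshold $(1+\eps)\wts(e_j,f_j)=6+10\eps$) to extract a witness literal in the backward direction. One minor bookkeeping correction: with all triangles and appended edges forced, the discretionary budget comes out to exactly $2\sum_i h_i$ (one weight-$2$ edge per slot on a single side), not $(2+2\eps)\sum_i h_i-2\eps\sum_i|C^+_i|$, but this does not affect the argument.
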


For the necessity, assume that $\mathcal{I}$ admits a satisfying truth assignment for the  Boolean variables $x_1, x_2, \ldots ,x_n$. We then construct a $(1 + \epsilon)$-spanner of $G$ of weight at most $W$. 
\begin{lemma}
	If $\mathcal{I}$ is satisfiable, then $G$ has a $(1 + \epsilon)$-spanner of total weight at most $W$. 
\end{lemma}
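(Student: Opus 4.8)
The plan is to build, from a fixed satisfying assignment $\tau$, an explicit subgraph $H\subseteq G$, bound $\wts(H)\le W$, and then check that $H$ is a $(1+\epsilon)$-spanner. For the construction I would first add to $H$ every ``cheap or forced'' edge: all weight-$0$ edges, all weight-$\epsilon$ connectors $(l_{j,h},a_{j,h})$ and $(r_{j,h},b_{j,h})$, all weight-$(1+\epsilon)$ edges $(g_{j,i},a_{j,i})$ and $(g_{j,i},b_{j,i})$, and all appended edges $(u_i,u_i')$, $(v_i,v_i')$. Then, for each variable $x_i$, I add every edge $(a_{j,i},b_{j,i})$ that lies on the $s_i$--$t_i$ path selected by $\tau(x_i)$ (the positive/upper path if $\tau(x_i)=$ \texttt{true}, the negative/lower path otherwise), together with every padding true edge (these are forced because they carry no $g$-vertex, hence no detour). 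Crucially I would \emph{omit} the edges $(s_i,t_i)$, $(l_{j,h},r_{j,h})$, $(e_j,f_j)$, and the $(a,b)$-edges on the non-selected path; one checks directly that keeping any of these would blow the weight budget, so the omission is forced.

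For the weight bound I would just sum contributions: the connectors give $2\epsilon\sum_j|c_j|$; the $g$-edges give $2(1+\epsilon)\sum_j|c_j|$; the appended edges give $4\sum_i h_i$; and the selected $(a,b)$-edges together with the padding edges contribute at most $2h_i$ for the selected path and at most a further $2(h_i-|C^+_i|)$ per variable for forced padding on the other side. Using $\sum_j|c_j|=\sum_i(|C^+_i|+|C^-_i|)$, $h_i=\max\{|C^+_i|,|C^-_i|\}$, hence $\sum_j|c_j|\le 2\sum_i h_i$ and $|C^\pm_i|\le h_i$, these telescope into $\wts(H)\le 2\epsilon\sum_j|c_j|+2(5+2\epsilon)\sum_i h_i=W$. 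This is the fussiest part of the argument, but it is entirely mechanical; the only subtlety is tracking exactly which $(a,b)$-edges the padding construction renders unavoidable.

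For the stretch I would invoke the standard reduction: it suffices to prove $\dist_H(x,y)\le(1+\epsilon)\wts(x,y)$ for every edge $(x,y)\in E(G)$, since subpaths of shortest paths are shortest. Edges of $H$ are trivial. An omitted edge $(a_{j,i},b_{j,i})$ of weight $2$ is routed through $g_{j,i}$ at cost $(1+\epsilon)+(1+\epsilon)=(1+\epsilon)\cdot 2$. An omitted edge $(l_{j,h},r_{j,h})$ of weight $2+2\epsilon$ is routed $l_{j,h}\to a_{j,h}\to b_{j,h}\to r_{j,h}$ at cost at most $\epsilon+(2+2\epsilon)+\epsilon=2+4\epsilon\le(1+\epsilon)(2+2\epsilon)$ (using either the direct $(a,b)$-edge or its $g$-detour). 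An omitted edge $(s_i,t_i)$ of weight $4h_i/(1+\epsilon)$ is routed along the selected $s_i$--$t_i$ path, all of whose edges are in $H$, at cost exactly $4h_i=(1+\epsilon)\cdot\tfrac{4h_i}{1+\epsilon}$.

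The heart of the proof, where satisfiability is used, is the edge $(e_j,f_j)$, of weight $\tfrac{6+10\epsilon}{1+\epsilon}$ for a $3$-literal clause and $\tfrac{4+6\epsilon}{1+\epsilon}$ for a $2$-literal clause. It is routed around the clause cycle, segment by segment: an $l_{j,h}$--$r_{j,h}$ segment costs $2+4\epsilon$ if only the $g$-detour is available, but $2+2\epsilon$ if the direct edge $(a_{j,h},b_{j,h})$ is in $H$. Since $\tau$ satisfies $c_j$, some literal $x_{h^*}$ of $c_j$ evaluates true, which means $c_j$ lies on the path selected by $\tau(x_{h^*})$ in $G[x_{h^*}]$, so $(a_{j,h^*},b_{j,h^*})\in H$; that segment costs $2+2\epsilon$ and the remaining one or two cost at most $2+4\epsilon$ each, for a total of $(2+2\epsilon)+2(2+4\epsilon)=6+10\epsilon$ (resp. $(2+2\epsilon)+(2+4\epsilon)=4+6\epsilon$), exactly $(1+\epsilon)\wts(e_j,f_j)$. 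I expect the main obstacle to be verifying this single inequality is genuinely tight: the clause-edge weights are calibrated precisely so that one satisfied literal suffices while an all-false clause, forcing every segment up to $2+4\epsilon$ for a total of $6+12\epsilon>6+10\epsilon$, does not; after that, the edge-by-edge reduction guarantees no other vertex pair can impose a stricter constraint, which completes the proof.
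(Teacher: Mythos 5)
Your construction and both halves of the argument (the weight accounting and the edge-by-edge stretch check, including the key use of a satisfied literal to get one $l_{j,h}$--$r_{j,h}$ segment at cost $2+2\epsilon$ so that $\dist_H(e_j,f_j)\le 6+10\epsilon$) follow the paper's proof essentially verbatim. The one place you deviate is the treatment of the padding true edges, and there the proposal as written does not close.

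You assume the padding edges ``carry no $g$-vertex, hence no detour'' and are therefore forced into $H$ even when $\tau(x_i)=\texttt{false}$. In the paper's construction every $(a_{j,i},b_{j,i})$-edge, including the padding ones, carries a $g$-gadget: this is what Figure~\ref{fig:literal} depicts, and it is the only reading under which the paper's own counts are consistent (the spanner's $g$-edge contribution is $4(1+\epsilon)h_i$ per gadget, i.e.\ $2h_i$ edges of type $(a,b)$ each with two $g$-edges, and Lemma~\ref{lm:literal-weight} charges $2h_i\cdot(2+2\epsilon)$ for the forced $g$-edges). Under that construction your accounting fails twice: you count the forced $g$-edges as $2(1+\epsilon)\sum_j|c_j|$ instead of $4(1+\epsilon)\sum_i h_i$, and you add the padding $(a,b)$-edges on the non-selected side, costing an extra $2(h_i-|C^+_i|)$ for each false-assigned variable. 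The corrected total is $W+2\sum_{i:\,\tau(x_i)=\texttt{false}}(h_i-|C^+_i|)>W$ whenever padding exists, so your $H$ exceeds the budget. (Your ``telescoping'' only balances under your nonstandard reading of $G$, which would in turn break the converse direction's lower bound.) The fix is exactly what the paper does: include only the $h_i$ edges $(a_{j,i},b_{j,i})$ on the path selected by $\tau(x_i)$, and route every omitted $(a,b)$-edge---padding or not---through its $g$-gadget at cost $2+2\epsilon=(1+\epsilon)\cdot 2$. With that change the rest of your argument goes through unchanged.
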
 

\begin{proof}
	Consider a truth assignment satisfying $\phi$. We initialize $H$ to be a subgraph containing all weight-$0$ edges in $G$. For each variable $x_i$, if $x_i =\texttt{true}$, we add all true edges in the literal gadget corresponding to $x_i$ to $H$. Otherwise, we add all false edges in $G[x_i]$ to $H$. For each true or false edge $(a_{j, i}, b_{j, i})$, we also add $(a_{j, i}, g_{j, i})$ and $(b_{j, i}, g_{j, i})$ to $H$.
	 
	 Then, for each clause $c_j$ containing a variable $x_i$, we add the edges $(l_{j, i}, a_{j, i})$ and $(r_{j, i}, b_{j, i})$ (blue and green edges in \Cref{fig:literal}). 
	 
	 We show that $H$ is a $(1 + \epsilon)$-spanner of $G$, that is, for every edge $(u, v)$ in $G$, we have $\dist_H(u, v) \leq (1 + \epsilon)\wts(u, v)$. Since $H$ contains all edges of weight $0$, we only need to consider the case when $\wts(u, v) > 0$.
	 
	 By our construction, we add all edges connecting clause gadgets to literal gadgets. Thus, we focus on the case that $(u, v)$ is either in a clause gadget or in a literal gadget. 
	 
	 Let $c_j = x_{1} \lor x_{2} \lor x_{3}$ be a clause with the clause gadget of $c_j$ as in \Cref{fig:clause_graph}. The case when $c_j$ contains only two literals can be solved similarly. Consider the edge $(l_{j, 1}, r_{j, 1})$. If $x_{1} = \texttt{true}$, we have
	 \begin{equation}
	 	\label{eq:true-literal}
	 	\begin{split}
	 		\dist_H(l_{j, 1}, r_{j, 1}) = \wts(l_{j, 1}, a_{j, 1}) + \wts(a_{j, 1}, b_{j, 1}) + \wts(b_{j, 1}, r_{j, 1}) = 2 + 2\epsilon.
	 	\end{split}
	 \end{equation}
	Else if $x_{1} = \texttt{false}$, then 
	\begin{equation}
		\label{eq:false-literal}
		\dist_H(l_{j, 1}, r_{j, 1}) = \wts(l_{j, 1}, a_{j, 1}) + \wts(a_{j, 1}, g_{j, 1}) + \wts(g_{j, 1}, b_{j, 1}) + \wts(b_{j, 1}, r_{j, 1}) = 2 + 4\epsilon \leq (1+\eps)\wts(l_{j, 1}, r_{j, 1}).
	\end{equation}
	
	In both cases, the distances between $(l_{j, 1}, r_{j, 1})$ is preserved up to $(1+\eps)$-factor in $H$. Using a similar argument, the distance between $(l_{j, 2}, r_{j, 2})$ and $(l_{j, 3}, r_{j, 3})$ is also preserved. We now prove that $\dist_H(e_j, f_j) \leq (1 + \epsilon)\wts(e_j, f_j)$. Recall that we do not add the edge $(e_j, f_j)$ to $H$. We have
	 \begin{equation*}
	 	\begin{split}
	 		\dist_H(e_j, f_j) =& \wts(e_j, l_{j, 1}) + \dist_H(l_{j, 1}, r_{j, 1}) + \wts(r_{j, 1}, l_{j, 2}) + \dist_H(l_{j, 2}, r_{j, 2})\\
            &+ \wts(r_{j, 2}, l_{j, 3}) + \dist_H(l_{j, 3}, r_{j, 3}) + \wts(r_{j, 3}, f_j)\\
	 		=& \dist_H(l_{j, 1}, r_{j, 1}) + \dist_H(l_{j, 2}, r_{j, 2}) + \dist_H(l_{j, 3}, r_{j, 3}).
	 	\end{split}
	 \end{equation*}
	 Observe that \Cref{eq:true-literal,eq:false-literal} yields $\dist_H(l_{j, 1}, r_{j, 1}), \dist_H(l_{j, 2}, r_{j, 2}), \dist_H(l_{j, 3}, r_{j, 3}) \in \{2 + 2\epsilon, 2 + 4\epsilon\}$. Since there is at least one literal in $\{x_{1}, x_{2}, x_{3}\}$ is true, there is at least one value, say $\dist_H(l_{j, 1}, r_{j, 1})$, equal to $2 + 2\epsilon$. Thus, 
	 \begin{equation*}
	 	\begin{split}
	 		\dist_H(e_j, f_j) &= \dist_H(l_{j, 1}, r_{j, 1}) + \dist_H(l_{j, 2}, r_{j, 2}) + \dist_H(l_{j, 3}, r_{j, 3})\\
	 		&\leq 2 + 2\epsilon + 2 \cdot (2 + 4\epsilon) = 6 + 10\epsilon = (1 + \epsilon)\wts(e_j, f_j). 
	 	\end{split}
	 \end{equation*}
	
	For each literal gadget, since there is a path from $s_i$ to $t_i$ of total weight $4h_i$ containing only true or false edges, the distance between $s_i$ and $t_i$ in $H$ is $4h_i$, and hence is $(1 + \epsilon)$ times the distance in $G$ which is $\wts(s_i, t_i) = 4h_i / (1+\eps)$. For every edge $(a_{j, i}, b_{j, i})$ in $G$, since we add $(a_{j, i}, c_{j, i})$ and $(b_{j, i}, g_{j, i})$ to $H$, $\dist_H(a_{j, i}, b_{j, i}) \leq \wts(a_{j, i}, g_{j, i}) + \wts(b_{j, i}, g_{j, i}) = 2 + 2\epsilon = (1 + \epsilon)\dist_G(a_{j, i}, b_{j, i})$.
	
	We now analyze the total weight of $H$. We distinguish between four types of edges in $H$:
	\begin{itemize}
		\item $(l_{j, i}, a_{j, i})$ and $(r_{j, i}, b_{j, i})$: Each of these edges has weight $\epsilon$. The total weight of those edges is $2\epsilon \cdot \sum_j |c_j|$.
		\item $(a_{j, i}, c_{j, i})$ and $(b_{j, i}, c_{j, i})$: Each literal gadget of $x_i$ has $4\max\{|C^+_i|, |C^-_i|\}$ edges of this type. The total weight of these edges is $\sum_i (1 + \epsilon) \cdot 4\max\{|C^+_i|, |C^-_i|\}$. 
		\item $(u_i, u'_i)$ and $(v_i, v'_i)$: Each literal gadget for $x_i$ has $2$ edges of this type. The total weight is then $4\cdot \max\{|C^+_i|, |C^-_i|\}$.
		\item $(a_{j, i}, b_{j, i})$: $H$ contains exactly $\max\{|C^+_i|, |C^-_i|\}$ edges of this type in each literal gadget. The total weight of this type is $\sum_i 2\max\{|C^+_i|, |C^-_i|\}$. 
	\end{itemize}
	The total weight of $H$ is
	$$2\epsilon \cdot \sum_j |c_j| + \sum_i (1 + \epsilon) \cdot 4\max\{|C^+_i|, |C^-_i|\} +  4\cdot \max\{|C^+_i|, |C^-_i|\} + \sum_i 2 \cdot \max\{|C^+_i|, |C^-_i|\}= W ,$$
	as desired. 
\end{proof}

We now show the converse direction.

\begin{lemma}
	\label{lm:inv}
	If $G$ has a $(1 + \epsilon)$-spanner of total weight at most $W$, then $\mathcal{I}$ is satisfiable. 
\end{lemma}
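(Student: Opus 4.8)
The plan is to show that the weight bound $W$ forces any valid spanner into an essentially unique shape, and then to read a satisfying truth assignment off it: in each literal gadget $G[x_i]$ exactly one of the two $s_i$--$t_i$ sides (the side carrying $x_i$'s positive clauses, or the side carrying its negative clauses) lies entirely in the spanner, and we set $x_i\leftarrow\texttt{true}$ iff it is the positive side. Throughout we may assume $0<\epsilon<1$. Let $H$ be a $(1+\epsilon)$-spanner of $G$ with $\wts(H)\le W$.

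First I would pin down the edges that $H$ is forced to contain. We may assume $H$ contains all weight-$0$ edges, since adding them leaves $\wts(H)$ unchanged. Each vertex $g_{j,i}$ has degree $2$ with neighbours $a_{j,i},b_{j,i}$; any $g_{j,i}$--$a_{j,i}$ walk avoiding the edge $(g_{j,i},a_{j,i})$ passes through $b_{j,i}$ and has length at least $(1+\epsilon)+\dist_G(b_{j,i},a_{j,i})=3+\epsilon>(1+\epsilon)^2$, so $(g_{j,i},a_{j,i})\in H$, and symmetrically $(g_{j,i},b_{j,i})\in H$. Likewise, since $H$ must realise $\dist_H(u_i,u_i')\le(1+\epsilon)\cdot 2h_i$ while any $u_i$--$u_i'$ walk avoiding the edge $(u_i,u_i')$ has length at least $2h_i+\dist_G(s_i,t_i)=2h_i+\tfrac{4h_i}{1+\epsilon}>(1+\epsilon)\cdot 2h_i$ (one uses here that $u_i$ can only reach $u_i'$ by going back to $s_i$ and around), the edges $(u_i,u_i')$ and $(v_i,v_i')$ lie in $H$. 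These forced edges have total weight $(8+4\epsilon)\sum_i h_i=W-S$ where $S:=2\sum_i h_i+2\epsilon\sum_j|c_j|$, so the remaining (non-forced, nonzero-weight) edges of $H$ have total weight at most the slack $S$.

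Next I would show that the slack is spent with no waste, which rigidifies $H$. Inside $G[x_i]$: realising an $s_i$--$t_i$ path of length $\le 4h_i$ forces $H$ to contain either the edge $(s_i,t_i)$ (weight $\tfrac{4h_i}{1+\epsilon}>2h_i$) or all $h_i$ weight-$2$ edges of a single side (replacing even one of them by the corresponding $g$-triangle adds $2\epsilon$ to the path length), so the non-forced weight of $H$ inside $G[x_i]$ is $\ge 2h_i$. Inside the clause gadget $G[c_j]$: for every incidence $(j,h)$, the constraint $\dist_H(l_{j,h},r_{j,h})\le(1+\epsilon)(2+2\epsilon)$ forces $H$ to contain either the weight-$(2+2\epsilon)$ edge $(l_{j,h},r_{j,h})$ or both weight-$\epsilon$ edges $(l_{j,h},a_{j,h}),(r_{j,h},b_{j,h})$, costing at least $2\epsilon$ per incidence, hence at least $2\epsilon\,|c_j|$ per clause. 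Summing the two bounds yields exactly $S$, so every inequality is tight: for each $i$ the non-forced part of $H$ inside $G[x_i]$ consists of precisely the $h_i$ weight-$2$ edges of one side (and no $(s_i,t_i)$ edge), and for each $j$ the non-forced part of $H$ inside $G[c_j]$ consists of precisely the weight-$\epsilon$ incidence edges (no $(l_{j,h},r_{j,h})$ edge, no $(e_j,f_j)$ edge). Set $x_i\leftarrow\texttt{true}$ iff the side of $G[x_i]$ kept by $H$ is the positive one.

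Finally I would verify $\phi$ is satisfied by this assignment. Fix a clause; take the all-positive case $c_j=x_1\lor x_2\lor x_3$ (the two-literal case, and the all-negative case, are identical after swapping the roles of the two sides). Since $H$ contains neither the $(e_j,f_j)$ edge nor any $(l_{j,h},r_{j,h})$ edge, $\dist_H(e_j,f_j)$ must be realised along the clause cycle, so $\dist_H(e_j,f_j)=\sum_{h=1}^{3}\dist_H(l_{j,h},r_{j,h})=\sum_{h=1}^{3}\bigl(2\epsilon+\dist_H(a_{j,h},b_{j,h})\bigr)$, where $\dist_H(a_{j,h},b_{j,h})=2$ if the weight-$2$ edge $(a_{j,h},b_{j,h})$ lies in $H$ and $\dist_H(a_{j,h},b_{j,h})=2+2\epsilon$ otherwise (via the forced $g$-triangle). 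If $x_1=x_2=x_3=\texttt{false}$, then no $(a_{j,h},b_{j,h})$ lies in $H$ (each sits on the positive side of its gadget, which is not kept), so $\dist_H(e_j,f_j)=6+12\epsilon>6+10\epsilon=(1+\epsilon)\,\wts(e_j,f_j)$, contradicting the spanner property; hence some $x_h=\texttt{true}$, and since $(a_{j,h},b_{j,h})$ then lies on the kept (positive) side of $G[x_h]$, the clause $c_j$ is satisfied. As every clause is satisfied, $\mathcal{I}$ is satisfiable. The main obstacle is the casework behind the forced-edge claims: one must rule out cheaper reroutings of $\dist_G(s_i,t_i),\dist_G(u_i,u_i'),\dist_G(l_{j,h},r_{j,h})$, and $\dist_G(e_j,f_j)$ — in particular, shortcuts that leave a gadget and come back, which requires using that $s_i,e_j,f_j,g_{j,i}$ have no neighbours outside their own gadget — and one must make the slack accounting airtight so that no edge is double-charged between the per-literal bound $2h_i$ and the per-clause bound $2\epsilon|c_j|$.
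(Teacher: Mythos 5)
Your proof is correct and follows essentially the same route as the paper: identify the forced edges, lower-bound the spanner weight gadget-by-gadget to show $\wts(H)\ge W$ with equality only under a rigid structure, read the assignment off the surviving side of each literal gadget, and derive satisfiability of each clause from the constraint $\dist_H(e_j,f_j)\le 6+10\epsilon$. The only cosmetic difference is that you handle the weight-$\epsilon$ incidence edges via a dichotomy in the slack accounting, whereas the paper records them as forced edges outright.
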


Observe that if $G$ has a spanner of total weight $W$, then there exists a $(1 + \epsilon)$-spanner of $G$ with the same total weight and containing all weight-$0$ edges. Hence, we can assume that any spanner contains all weight-$0$ edges.

Let $H$ be a $(1 + \epsilon)$-spanner of $G$ containing all weight-$0$ edges. We show that $H$ has weight at most $W$ only if $\mathcal{I}$ is satisfiable. 

\begin{observation}
	\label{obs:trivial-edge}
	$H$ contains all edges $(l_{j, i}, a_{j, i})$, $(r_{j, i}, b_{j, i})$, $(a_{j, i}, g_{j, i})$, $(b_{j, i}, g_{j, i})$, $(u_i, u'_i)$, and $(v_i, v'_i)$.
\end{observation}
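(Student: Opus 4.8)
The plan is to establish each listed edge via the following forcing principle. Suppose $e=(x,y)$ is an edge of $G$ such that $e$ is itself a shortest $xy$-path, i.e.\ $\dist_G(x,y)=\wts(e)$, and such that every $xy$-path in $G$ that avoids $e$ has weight strictly larger than $(1+\epsilon)\wts(e)$. Then $e\in E(H)$: otherwise $\dist_H(x,y)\ge \dist_{G\setminus e}(x,y)>(1+\epsilon)\wts(e)=(1+\epsilon)\dist_G(x,y)$, contradicting the fact that $H$ is a $(1+\epsilon)$-spanner. Hence for each of the six edge types it suffices to lower bound the weight of a cheapest $e$-avoiding $xy$-path and check that it exceeds $(1+\epsilon)\wts(e)$; all the required inequalities hold since $\epsilon<1$ (indeed we may take $\epsilon$ small).

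First I would record the sublemma $\dist_G(a_{j,i},b_{j,i})=2$: the only short $a_{j,i}b_{j,i}$-paths are the direct edge (weight $2$), the detour through $g_{j,i}$ (weight $(1+\epsilon)+(1+\epsilon)=2+2\epsilon$), and the path $a_{j,i},l_{j,i},r_{j,i},b_{j,i}$ (weight $2+4\epsilon$), while any route leaving these two gadgets is far heavier. Given this, the edges $(a_{j,i},g_{j,i})$ and $(b_{j,i},g_{j,i})$ are handled by noting that $g_{j,i}$ has degree exactly $2$; if $(a_{j,i},g_{j,i})\notin H$ then every $g_{j,i}a_{j,i}$-path begins with the edge $(g_{j,i},b_{j,i})$ of weight $1+\epsilon$ and then needs weight $\ge\dist_G(b_{j,i},a_{j,i})=2$ more, so its weight is $\ge 3+\epsilon>(1+\epsilon)^2$, forcing $(a_{j,i},g_{j,i})\in H$ and, symmetrically, $(b_{j,i},g_{j,i})\in H$. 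The edges $(l_{j,i},a_{j,i})$ and $(r_{j,i},b_{j,i})$, each of weight $\epsilon$, are handled similarly: any $e$-avoiding path between their endpoints must leave the clause cycle of $c_j$ through a weight-$\epsilon$ boundary edge different from $e$, which forces it first to traverse a clause edge $(l_{j,h},r_{j,h})$ of weight $2+2\epsilon$ (or else to traverse an entire neighbouring literal gadget), so its weight is at least roughly $4$, hugely above $(1+\epsilon)\epsilon$; hence both edges lie in $H$.

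The technical heart is the pair $(u_i,u_i')$ and $(v_i,v_i')$, each of weight $2h_i$, for which I would show that every $u_iu_i'$-path avoiding the direct edge has weight at least $4h_i>(1+\epsilon)\,2h_i$. The argument rests on three structural facts about the literal gadget of $x_i$: the vertex $u_i'$ touches the rest of $G$ only through $u_i$ and, via weight-$0$ edges, through $t_i$, so any alternative path is a $u_it_i$-path followed by a free hop to $u_i'$; the vertex $t_i$ has, besides $u_i'$, only the preceding vertex on the lower path and the vertex $s_i$ (through the edge $(s_i,t_i)$ of weight $\tfrac{4h_i}{1+\epsilon}$) as usable neighbours; and the $s_i$--$u_i$ segment of the upper path has weight $2h_i$ and cannot be shortcut, since it is a chain of $h_i$ weight-$2$ edges $(a_{j,i},b_{j,i})$ and every detour out of the upper path (into a positive clause gadget, which contains no edge lighter than those it replaces) and back is at least as expensive. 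Putting these together, a cheapest $e$-avoiding $u_iu_i'$-path either reaches $s_i$ at cost $\ge 2h_i$ and then uses $(s_i,t_i)$, for total $\ge 2h_i+\tfrac{4h_i}{1+\epsilon}>4h_i$, or it reaches $t_i$ along the lower path after crossing from the upper side to the lower side, for total $\ge 2h_i+4h_i$; either way it exceeds $(1+\epsilon)2h_i$. I expect the main obstacle to be making the ``no cheaper detour'' claim fully rigorous — one must argue that, because monotone planar $3$SAT forces each positive (resp.\ negative) clause gadget to connect only upper (resp.\ only lower) paths of literal gadgets, no $u_i$-to-$t_i$ path can bypass a full traversal of some $s_{i'}$--$t_{i'}$ side or the edge $(s_i,t_i)$ — but once this is in place the weight comparison is immediate, and $(v_i,v_i')$ follows by the symmetric argument on the lower path.
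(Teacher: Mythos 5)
The paper states this Observation without proof, so the only benchmark is whether your argument actually establishes it. Your forcing principle is the right (and only) tool, and your treatment of four of the six edge types is correct: $\dist_G(a_{j,i},b_{j,i})=2$, the degree-$2$ argument at $g_{j,i}$ gives a detour cost of at least $3+\epsilon>(1+\epsilon)^2$ for the weight-$(1+\epsilon)$ edges, and any path replacing a weight-$\epsilon$ edge $(l_{j,i},a_{j,i})$ or $(r_{j,i},b_{j,i})$ must pay at least $2$, far above $(1+\epsilon)\epsilon$.

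The case of $(u_i,u_i')$ and $(v_i,v_i')$ is where your proof has a genuine gap, and the repair you sketch cannot work because the structural claim it rests on is false. You assert that a $u_i$-to-$t_i$ route avoiding $(u_i,u_i')$ must either pay $\dist(u_i,s_i)\ge 2h_i$ before using $(s_i,t_i)$, or must ``fully traverse some $s_{i'}$--$t_{i'}$ side.'' Neither is forced. First, $u_i$ is weight-$0$-adjacent to the right endpoint $b_{j,i}$ of the \emph{last} segment of the upper path, so the route can exit into that clause gadget at cost $\epsilon$ without touching the $s_i$--$u_i$ portion at all; moving to the next literal position inside a clause costs $0$ (the cycle edges $(r_{j,h},l_{j,h+1})$ have weight $0$). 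Second, a route can cross from the upper to the lower side of a \emph{neighboring} gadget $G[x_{i'}]$ at cost $0$: enter at the left endpoint of the \emph{first} upper segment (weight-$0$-adjacent to $s_{i'}$), pass through $s_{i'}$, and exit at the first lower segment. Concretely, take variables $x_1,\dots,x_4$ in left-to-right order with positive clauses $x_1\lor x_2$, $x_2\lor x_3$, $x_3\lor x_4$ and negative clauses $\lnot x_1\lor\lnot x_2$, $\lnot x_2\lor\lnot x_3$, $\lnot x_3\lor\lnot x_4$. Then $h_2=2$, and the walk that leaves $u_2$ through the gadget of $x_2\lor x_3$, crosses $G[x_3]$ through $s_3$, and returns through the gadget of $\lnot x_2\lor\lnot x_3$ to $v_2$ uses exactly four weight-$\epsilon$ edges and otherwise only weight-$0$ edges, so $\dist_G(u_2,v_2)\le 4\epsilon$. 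Appending $(v_2,v_2')$ gives an alternative $u_2u_2'$-path of weight $4\epsilon+2h_2=4+4\epsilon=(1+\epsilon)\wts(u_2,u_2')$, so a $(1+\epsilon)$-spanner need not contain $(u_2,u_2')$. Thus the lower bound of $4h_i$ you need is simply not true in general, and in fact the Observation itself appears to fail for such instances (an issue the paper inherits, since \Cref{lm:literal-weight} charges the $4h_i$ contribution of these two edges). Any correct proof would have to add a hypothesis, modify the construction (e.g., separate the heavy edges from the clause attachment points by positive-weight buffers), or argue the weight bound of \Cref{lm:literal-weight} without forcing these two edges.
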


Recall that $h_i = \max\{|C^+_j|, |C^-_j|\}$. We then show that each literal gadget has bounded weight. 
\begin{lemma}
	\label{lm:literal-weight}
	For every literal $x_i$, every $(1 + \epsilon)$-spanner $H$ of $G$ must contain a subgraph of $G[x_i]$ of weight at least $2h_i \cdot (5 + 2\epsilon).$ 
\end{lemma}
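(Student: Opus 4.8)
The plan is to lower-bound $\wts(H\cap G[x_i])$ by exhibiting three pairwise edge-disjoint groups of edges of $G[x_i]$ that every $(1+\epsilon)$-spanner $H$ is forced to contain, and adding up their weights. Assume without loss of generality that $|C^+_i|\le|C^-_i|$, so $h_i=|C^-_i|$, and recall the anatomy of $G[x_i]$: an upper and a lower $s_i$--$t_i$ path, each a chain of $h_i$ \emph{rungs} joined by weight-$0$ edges and ending, just before $t_i$, in a heavy edge of weight $2h_i$ (these are $(u_i,u'_i)$ and $(v_i,v'_i)$), plus a chord $(s_i,t_i)$ of weight $4h_i/(1+\epsilon)$ inside the region the two paths enclose; a rung consists of a weight-$2$ edge $(a_{j,i},b_{j,i})$ together with the degree-$2$ vertex $g_{j,i}$ and the two edges $(a_{j,i},g_{j,i})$, $(g_{j,i},b_{j,i})$, each of weight $1+\epsilon$.

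First, by \Cref{obs:trivial-edge}, $H$ contains all $4h_i$ edges incident to the $g_{j,i}$ vertices (two per rung, over the $2h_i$ rungs of the two sides), of total weight $4h_i(1+\epsilon)$, and both heavy edges, of total weight $4h_i$. The third group is forced by the distance constraint at $s_i,t_i$: since the chord realizes $\dist_G(s_i,t_i)=4h_i/(1+\epsilon)$, a spanner must satisfy $\dist_H(s_i,t_i)\le 4h_i$, so $H$ contains an $s_i$--$t_i$ path $P$ with $\wts(P)\le 4h_i$. I will show that either $P$ is the chord, whence $(s_i,t_i)\in H$, or $P$ is exactly one of the two side paths, whence all $h_i$ rung edges $(a_{j,i},b_{j,i})$ of that side lie in $H$. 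Indeed, the only neighbours of $t_i$ are $u'_i$, $v'_i$ and $s_i$ (the last via the chord), and $u'_i,v'_i$ sit on weight-$0$ tails whose only other connection to $G$ is through the heavy edges $(u_i,u'_i),(v_i,v'_i)$; so a non-chord $P$ must use one heavy edge, say $(u_i,u'_i)$ at cost $2h_i$, and then get from $s_i$ to $u_i$, thereby passing each of the $h_i$ rungs of the upper side. Passing a rung costs at least $2$: crossing its rung edge costs exactly $2$; the $g_{j,i}$-detour costs $2+2\epsilon$; and the only way to leave $G[x_i]$ at a rung is along an edge $(a_{j,i},l_{j,i})$ of weight $\epsilon$ into a clause gadget, from which the cheapest return to the upper side costs at least a clause edge $(l_{j,i},r_{j,i})$ of weight $2+2\epsilon$ plus another weight-$\epsilon$ connector --- strictly more than $2$. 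Hence $\wts(P)\ge 2h_i+2h_i=4h_i$ with equality only when every rung of the side is crossed straight, i.e.\ $P$ is the upper side path, which proves the dichotomy.

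The three groups being edge-disjoint, we conclude: if $(s_i,t_i)\in H$ then $\wts(H\cap G[x_i])\ge 4h_i(1+\epsilon)+4h_i+\frac{4h_i}{1+\epsilon}\ge 2h_i(5+2\epsilon)$, the last step being $\frac{4}{1+\epsilon}\ge 2$, valid for $\epsilon\le1$; and otherwise all $h_i$ rung edges of one side lie in $H$, so $\wts(H\cap G[x_i])\ge 4h_i(1+\epsilon)+4h_i+2h_i=2h_i(5+2\epsilon)$. In both cases the claimed bound follows. The step I expect to be the main obstacle is establishing the dichotomy --- specifically, ruling out that an $s_i$--$t_i$ path of weight at most $4h_i$ could be obtained by detouring out of $G[x_i]$ through adjacent clause gadgets (or still further into $G$) and re-entering; this is a finite case analysis on the gadget adjacencies and edge weights, but it is exactly where the argument is sensitive to the precise weights chosen in the construction.
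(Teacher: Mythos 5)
Your proposal is correct and follows essentially the same route as the paper's proof: both lower-bound $\wts(H\cap G[x_i])$ by the forced edges from \Cref{obs:trivial-edge} (the $g_{j,i}$-edges and the two weight-$2h_i$ edges, totalling $4h_i(2+\epsilon)$) and then case-split on whether $H$ contains the chord $(s_i,t_i)$ or must instead contain a side path of weight $\ge 2h_i$ in rung edges. The only difference is that you spell out why a non-chord $(1+\epsilon)$-path from $s_i$ to $t_i$ cannot profitably detour through clause gadgets, a step the paper asserts without elaboration.
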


\begin{proof}
	Consider the gadget for $x_i$ as in \Cref{fig:literal}. Let $H[x_i]$ be the intersection of $H$ with $G[x_i]$. By \Cref{obs:trivial-edge}, for every clause $c_j$ containing either $x_i$ or $\neg x_i$, $H[x_i]$ must contain both edges $(a_{j, i}, g_{j, i})$ and $(b_{j, i}, g_{j, i})$. Furthermore, $H[x_i]$ also contains $(u_i, u'_i)$ and $(v_i, v'_i)$. The total weight of these edges is $4h_i + 2h_i \cdot (2 + 2\epsilon) = 4h_i \cdot(2 + \epsilon)$.  
	
	If $H[x_i]$ contains $(s_i, t_i)$, then the total weight of $H[x_i]$ is at least $4h_i/(1 + \epsilon) + 4h_i \cdot(2 + \epsilon) = 4h_i\left(2 + \epsilon + \frac{1}{1 + \epsilon}\right) > 2h_i \cdot (5 + 2\epsilon)$ given $\epsilon < 1$.
	
	If $H[x_i]$ does not contain $(s_i, t_i)$, then the only $(1 + \epsilon)$ path between $s_i$ and $t_i$ is either the path containing all true edges or the path containing all false edges in $G[x_i]$. Note that the total weight of true (resp., false) edges is $2h_i$. Hence, the total weight of $H[x_i]$ is at least $4h_i \cdot(2 + \epsilon) + 2h_i = 2h_i(5 + 2\epsilon)$, as claimed.  
\end{proof}

We can now prove \Cref{lm:inv}. 

\begin{proof}[Proof of \Cref{lm:inv}]
	Since $H$ must contain both edges $(l_{j, i}, a_{j, i})$ and $(r_{j, i}, b_{j, i})$ for all pairs $(j, i )$ when those edges exist, we have: 
	\begin{equation}
		\begin{split}
			\wts(H) &= 2\epsilon \cdot \sum_j |c_j| + \sum_i(\wts(H[x_i]))\\
			&\geq 2\epsilon \cdot \sum_j |c_j| + 2(5 + 2\epsilon)\cdot \sum_i h_i \qquad \text{(by \Cref{lm:literal-weight})}\\
			&= W.
		\end{split}
	\end{equation}
	Equality holds if and only if $H$ does not contain any edge of positive weight in $G[c_j]$ for all $j$ and the weight of each $H[x_i]$ achieves the minimum value $2h_i(5 + 2\epsilon)$ for all $i$. For the weight of each $H[x_i]$ to be $2h_i(5 + 2\epsilon)$, $H[x_i]$ does not contain the edge $(s_i, t_i)$, but rather a path of all true or false edges from $s_i$ to $t_i$. Observe that $H[x_i]$ cannot contain both true and false edges at the same time (otherwise the total weight is not minimum). Thus, if $H[x_i]$ contains a path of true edges from $s_i$ to $t_i$, we set $x_i = \texttt{true}$, otherwise, we set $x_i =  \texttt{false}$. We show that by this assignment, every clause is satisfied. 
	
	Consider a clause $c_j = x_{1} \lor x_{2} \lor x_{3}$ with $G[c_j]$ in \Cref{fig:clause_graph}. Since $H$ is a spanner of $G$, $\dist_H(e_j, f_j) \leq (1 + \epsilon)\wts(e_i, f_i) = 6 + 10\epsilon$.
	 Given that $H$ does not contain any edge in $G[c_j]$ of positive weight, we prove that $\dist_H(l_{j, 1}, r_{j, 1}) \in \{2 + 2\epsilon, 2+ 4\epsilon\}$. Observe that there are only two cases for the shortest path $P$ from $l_{j, 1}$ to $r_{j, 1}$ in $H$:
	\begin{itemize}
		\item $P$ contains $(a_{j, 1}, b_{j, 1})$. In this case, $\dist_H(l_{j, 1}, r_{j, 1}) = 2\epsilon + \wts(a_{j, 1}, b_{j, 1}) = 2\epsilon + 2$. 
		\item $P$ contains $(a_{j, 1}, c_{j, 1})$ and $(b_{j, 1}, c_{j, 1})$. In this case, $\dist_H(l_{j, 1}, r_{j, 1}) = 2\epsilon + \wts(a_{j, 1}, c_{j, 1}) + \wts(b_{j, 1}, c_{j, 1}) = 2 + 4\epsilon$. 
	\end{itemize}
	Similarly, we obtain the same result for $\dist_H(l_{j, 2}, r_{j, 2})$ and $\dist_H(l_{j, 3}, r_{j, 3})$. On the other hand, observe that $\dist_H(e_j, f_j) = \dist_H(l_{j, 1}, r_{j, 1}) + \dist_H(l_{j, 2}, r_{j, 2}) + \dist_H(l_{j, 3}, r_{j, 3})$ as all other edges in the path from $e_j$ to $f_j$ have weight $0$. If $\dist_H(l_{j, 1}, r_{j, 1}) = \dist_H(l_{j, 2}, r_{j, 2}) = \dist_H(l_{j, 3}, r_{j, 3}) = 2 + 4\epsilon$, $\dist_H(e_i, f_i) = 6 + 12\epsilon > (1 + \epsilon) \wts(e_i, f_i) = 6 + 10\epsilon$. Then,  at least one of the edges among $(a_{j, 1}, b_{j, 1}), (a_{j, 2}, b_{j, 2}), (a_{j, 3}, b_{j, 3})$ is present in $H$. Since if one edge, say $(a_{j, 1}, b_{j, 1})$, appears in $H$, we set $x_{1}$ to be \texttt{true} if $(a_{j, 1}, b_{j, 1})$  is a true edge and false otherwise. In both cases, $c_j$ is satisfied. 
\end{proof}

\section{A Hard Instance for the Greedy Algorithm}\label{sec:greedy}

In this section, we adapt the hard instance for Euclidean spaces of  \cite{le2024towards}  to give a hard instance for planar graphs where greedy spanners have heavy total weight compared with optimal $(1+\eps)$-spanners, even when relaxing the stretch of the greedy spanner to be $1+x\eps$ for some large $x\gg 1$. Recall that the greedy $t$-spanner $H$ for an edge-weighted graph $G=(V,E,\wts)$ is constructed as follows~\cite{althofer1993sparse}: 
Initialize an empty graph $H=(V,\emptyset)$, and then consider the edges $(u,v)\in E$ sorted by nondecreasing weight (ties are broken arbitrarily), and if $\dist_H(u,v)> t\cdot\dist_G(u,v)$ in the current spanner $H$, then add $(u,v)$ to $H$.

\begin{theorem}
For every $\epsilon \in (0, \frac14 )$ and every parameter $1\leq x \leq \frac{1}{2}\sqrt{1/\epsilon}$, there exists an undirected planar graph $G = (V, E, \wts)$ with positive edge weights such that $\wts(G_{\gr(x)})\geq \Omega\brac{\frac{\epsilon^{-1}}{x^2}}\cdot \wts(G_{\opt, \eps})$, where $G_{\gr(x)}$ is the greedy $(1+x\epsilon)$-spanner of graph $G$.
\end{theorem}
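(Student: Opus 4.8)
The plan is to transplant the hard Euclidean instance of Le et al.~\cite{le2024towards} into the planar world, replacing their metric complete graph with an explicit sparse planar gadget, and then to verify three things: (i) the gadget $G$ is planar and of polynomial size; (ii) $G$ has a $(1+\eps)$-spanner of small weight; and (iii) for an adversarial tie-breaking order, the greedy $(1+x\eps)$-spanner is heavy. The qualitative picture is an amplified version of the $\Omega(1/\eps)$ ladder of \Cref{fig:ladder}: there, a bad tie-break makes greedy insert a unit-weight ``representative'' rung in place of the equal-weight critical edge, and this collapses once the stretch passes $1+2\eps$ only because the alternative detour greedy would take — one connector out of the rung, one crossing rung, one connector back — overshoots a rung's weight by merely $\Theta(\eps)$. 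To keep greedy heavy up to stretch $1+x\eps$, I would spread the critical structure over many edges and place ``helper'' gadgets (playing the role of the helper edges of \cite{le2024towards}) at several geometric scales, arranged so that in any partial greedy spanner the cheapest route between the two endpoints of a rung that the algorithm has not yet discarded must detour through $\Omega(x)$ separate pieces of the critical/helper structure, accumulating an overshoot of order $x^2\eps$ relative to the rung's weight and thereby exceeding the stretch budget $x\eps$; consequently greedy is forced to keep $\Omega(n/x^2)$ of the $n$ rungs. The whole gadget is drawn as a nested family of ladders along a line, so planarity and the polynomial size bound (after the customary integral rescaling) are immediate, exactly as in the qualitative picture of \Cref{fig:manyladders}.

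For (ii) I would exhibit the optimal-side spanner $H_{\mathrm{opt}}$ explicitly — the full critical structure together with all helper/connector edges — and verify, scale by scale, that every edge of $G$ has an $H_{\mathrm{opt}}$-route of stretch at most $1+\eps$; a short accounting then gives $\wts(G_{\opt,\eps})\le\wts(H_{\mathrm{opt}})=O(n\eps)\cdot\wts(\text{rung})+\poly(x,1/\eps)$, which is $O(n\eps)\cdot\wts(\text{rung})$ once $n$ is chosen large enough. For (iii) I would fix the adversarial processing order — admissible since greedy breaks ties arbitrarily — processing helper/connector edges first (by increasing weight), then the rungs, then the critical edges, and then prove by induction on this order the invariant that whenever the algorithm examines a rung $(u,v)$ belonging to a designated set $R$ of size $\Omega(n/x^2)$, every $u$--$v$ path in the current partial spanner has length strictly greater than $(1+x\eps)\wts(u,v)$; hence every rung in $R$ is inserted and $\wts(G_{\gr(x)})\ge|R|\cdot\wts(\text{rung})=\Omega(n/x^2)\cdot\wts(\text{rung})$. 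Dividing by the bound from (ii) yields $\wts(G_{\gr(x)})/\wts(G_{\opt,\eps})=\Omega\brac{1/(x^2\eps)}=\Omega\brac{\eps^{-1}x^{-2}}$, as claimed.

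I expect step (iii), and within it the verification of the invariant, to be the main obstacle: one must rule out \emph{all} short $u$--$v$ detours in \emph{every} partial greedy spanner, in particular detours obtained by splicing together many already-inserted rungs and helper edges, and show that any such route still detours through $\Omega(x)$ pieces of the critical/helper structure and so overshoots by $\Omega(x^2\eps)$. This is precisely where the planar, laminar layout of the nested ladders is used, since it is the laminarity that prevents a spliced route from ``shortcutting'' too many of these pieces at once. Finally, the hypothesis $x\le\eps^{-1/2}/2$ is what makes the weight assignments self-consistent: it keeps every helper/connector edge weight strictly below $\wts(\text{rung})$ — indeed $o(\wts(\text{rung}))$, which holds exactly when $x^2\eps=o(1)$ — so that the adversarial processing order (helpers before rungs before critical edges) is legitimate; at the boundary $x=\Theta(\eps^{-1/2})$ the promised ratio degenerates to $\Theta(1)$, consistent with the construction no longer separating the greedy spanner from the optimum.
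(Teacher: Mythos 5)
Your proposal is a plan rather than a proof: the construction itself (the nested multi-scale ladders, the helper gadgets, their weights and adjacencies) is never specified, and you yourself flag the inductive invariant in step (iii) --- that every partial greedy spanner admits no $u$--$v$ detour of length at most $(1+x\eps)\wts(u,v)$ for the rungs in $R$ --- as the main obstacle without resolving it. Since both the gadget and the invariant are exactly where the difficulty lives, the argument as written has a genuine gap. Moreover, the mechanism you posit (detours forced through $\Omega(x)$ separate pieces, each contributing so that the overshoot accumulates to $\Omega(x^2\eps)$) is asserted, not demonstrated, and your accounting (greedy keeps only an $\Omega(1/x^2)$ fraction of $n$ rungs, while $\wts(G_{\opt,\eps})=O(n\eps)\cdot\wts(\text{rung})$) presupposes a construction you have not exhibited.

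For comparison, the paper's instance is a single ladder and needs neither multiple scales nor adversarial tie-breaking. It has $n=1+\ceil{\tfrac{1}{x}+\tfrac{\eps^{-1}}{2x^2}}$ unit-weight rungs $(x_i,y_i)$, a path $\pi=\langle x_1,\dots,x_n,w,y_1,\dots,y_n\rangle$ whose edges (other than those at $w$) have weight $x\eps$, one critical edge $(x_n,y_1)$ of weight $1+\eps-(n-1)x\eps$, and two decoy edges $(x_n,w),(w,y_1)$ of total weight exactly $(1+x\eps)\cdot\wts(x_n,y_1)$. The decoys force greedy to reject the critical edge deterministically; each rung's only remaining detour is $\pi[x_i,y_i]$, of length $1+\eps+x\eps\cdot\wts(x_n,y_1)>1+x\eps$ (this is where $x\le\eps^{-1/2}/2$ enters, via $\eps/2-x^2\eps^2>0$ --- a different role than the one you assign to that hypothesis), so greedy keeps all $n$ rungs; meanwhile $\pi\cup\{(x_n,y_1)\}$ is a $(1+\eps)$-spanner of weight $O(1)$ because the rung count is calibrated so that $(n-1)x\eps+\wts(x_n,y_1)=1+\eps$. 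The ratio $\Omega(\eps^{-1}/x^2)$ thus comes from the \emph{number} of rungs, not from an $x^2\eps$ overshoot per detour. If you want to salvage your route, you must actually write down the gadget and prove the invariant; but you should first check whether the single-critical-edge-plus-decoy trick already suffices, since it collapses your entire step (iii) to a two-line case analysis.
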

\begin{proof}
	This construction is adapted from a Euclidean example in \cite{le2024towards}. The graph $G$ is built as following.
	\begin{itemize}
		\item \textbf{Vertices.} Let $V = \{w\}\cup \{x_1, x_2, \ldots, x_n\}\cup \{y_1, y_2, \ldots, y_n\}$ be a vertex set with $2n+1$ vertices, where $n = 1+\ceil{\frac{1}{x} + \frac{\epsilon^{-1}}{2x^2}}$.
		\item \textbf{Edges.} For each $1\leq i\leq n$, add edge $(x_i, y_i)$ with weight $1$. Connect all vertices via a path $\pi = \langle x_1, x_2, \ldots, x_n, w, y_1, y_2, \ldots, y_n\rangle$. For each $1\leq i < n$, set the weight of $(x_i, x_{i+1})$ and $(y_i, y_{i+1})$ to be $x\epsilon$, and add edge $(x_n, y_1)$ with weight
        $$\wts(x_n, y_1) = 1+\epsilon - (n-1)x\epsilon \in \left(1 - \frac{1}{2x} -x\epsilon, 1-\frac{1}{2x}\right] .$$
        Finally, set the weights of $(x_n, w)$ and $(w, y_1)$ to be $\frac{1}{2}\cdot(1+x\epsilon)\cdot \wts(x_n, y_1)$.
        
	\end{itemize}
        One can verify that $G$ is a planar graph; check \Cref{fig:greedy-drawing} for a planar drawing.
	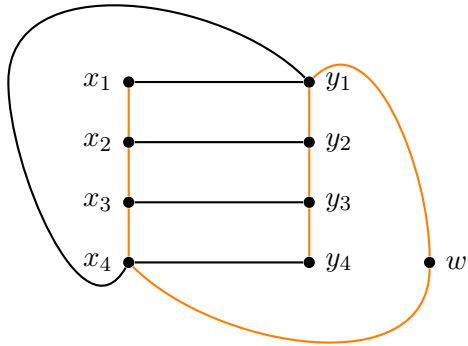
\begin{figure}[hpt]
		\centering
		\begin{tikzpicture}[scale=0.8]

    \node[circle, fill=black, inner sep=1.5pt, label=left:$x_1$] (x1) at (0,0) {};
    \node[circle, fill=black, inner sep=1.5pt, label=left:$x_2$] (x2) at (0,-1) {};
    \node[circle, fill=black, inner sep=1.5pt, label=left:$x_3$] (x3) at (0,-2) {};
    \node[circle, fill=black, inner sep=1.5pt, label=left:$x_4$] (x4) at (0,-3) {};

    \node[circle, fill=black, inner sep=1.5pt, label=right:$y_1$] (y1) at (3,0) {};
    \node[circle, fill=black, inner sep=1.5pt, label=right:$y_2$] (y2) at (3,-1) {};
    \node[circle, fill=black, inner sep=1.5pt, label=right:$y_3$] (y3) at (3,-2) {};
    \node[circle, fill=black, inner sep=1.5pt, label=right:$y_4$] (y4) at (3,-3) {};

    \node[circle, fill=black, inner sep=1.5pt, label=right:$w$] (w) at (5,-3) {};
    \coordinate (n) at (-2, 0);
    
    \draw[thick] (x1) to (y1);
    \draw[thick] (x2) to (y2);
    \draw[thick] (x3) to (y3);
    \draw[thick] (x4) to (y4);
    
    \draw[thick, color=orange] (x1) to (x2);
    \draw[thick, color=orange] (x2) to (x3);
    \draw[thick, color=orange] (x3) to (x4);
    
    \draw[thick, color=orange] (y1) to (y2);
    \draw[thick, color=orange] (y2) to (y3);
    \draw[thick, color=orange] (y3) to (y4);
    
    \draw[thick, color=orange] (y1) to[out=45, in=90] (w);
    \draw[thick, color=orange] (w) to[out=-90, in=-45] (x4);
    \draw[thick] (x4) to[out=-120, in=-90] (n);
    \draw[thick] (n) to[out=90, in=135] (y1);    
\end{tikzpicture}
		\caption{A planar drawing of the hard example against the greedy algorithm when $n =4$. The path $\pi$ is highlighted as the orange path.}
		\label{fig:greedy-drawing}
	\end{figure}
    
	First, we study the total weight of $G_{\opt, \eps}$. Consider the graph $H = \pi\cup \{(x_n, y_1)\}$. Then, for any edge $(x_i, y_i)$, by definition of $n$, we have
	$$\dist_H(x_i, y_i)\leq (n-1)x\epsilon + \wts(x_n, y_1)\leq 1+\epsilon = (1+\epsilon)\wts(x_i, y_i) .$$
    Therefore, $H$ is a $(1+\epsilon)$-spanner, which implies: $$\begin{aligned}
        \wts(G_{\opt,\epsilon})\leq \wts(H) &= 2(n-1)\cdot x\epsilon + (1+x\epsilon)\wts(x_n, y_1) +\wts(x_n, y_1)\\
        &\leq 2\brac{\epsilon + \frac{1}{2x}} + (2+x\epsilon)\cdot \brac{1 - \frac{1}{2x}}\\
		&< 2\epsilon + 1/x + 2+x\epsilon < 4 .
	\end{aligned}$$
	Next, we study the total weight of $G_{\gr(x)}$. The greedy algorithm would first go over all edges on path $\pi$ and add them to $G_{\gr(x)}$. After that, when it comes to the edge $(x_n, y_1)$, since $\wts(x_n, w) + \wts(w, y_1) = (1+x\epsilon)\wts(x_n, y_1)$, it would not include edge $(x_n, y_1)$.
	
	Finally, it makes a pass over all edges $(x_i, y_i)$ for all $1\leq i\leq n$. We argue that the greedy algorithm must add edge $(x_i, y_i)$ in $G_{\gr(x)}$, namely $\dist_{G_{\gr(x)}}(x_i, y_i) > 1+x\epsilon$ for the current $G_{\gr(x)}$. Consider any path $\rho$ between $x_i$ and $y_i$ in the current version of $G_{\gr(x)}$. If $\rho = \pi[x_i, y_i]$, then we have
	$$\begin{aligned}
		\wts(\pi[x_i, y_i]) &= (n-1)x\epsilon + (1+x\epsilon)\wts(x_n, y_1)\\
		& = 1+\epsilon + x\epsilon\cdot\wts(x_n, y_1)\\
        &> 1+\epsilon + x\epsilon - (x^2\epsilon^2 + \epsilon/2)\\
        &>1+x\epsilon + (\epsilon / 2 - x^2\epsilon^2) \geq 1+x\epsilon .
	\end{aligned}$$
	Otherwise if $\rho\neq \pi[x_i, y_i]$, $\rho$ would contain the edge $(x_k, y_k)$ for some $1\leq k<i$ along with the edges $(x_k, x_{k+1})$ and $(y_k, y_{k+1})$. Since all edges $(x_j, x_{j+1}), (y_j, y_{j+1})$ have weight $x\epsilon$, the total weight of $\rho$ would be at least $1 + 2x\epsilon > 1+x\epsilon$.
	
	In the end, $G_{\gr(x)}$ would include the edges $(x_i, y_i)$ for all $1\leq i \leq n$, so $\wts(G_{\gr(x)}) > n$. Hence, the approximation ratio is at least
	$$\frac{\wts\brac{G_{\gr(x)}}}{\wts(G_{\opt,\epsilon})} > \frac{n}{4} > \frac{\epsilon^{-1}}{8x^2} . \qedhere$$
\end{proof}

\paragraph{Acknowledgements.~}
Hung Le and Cuong Than are supported by the NSF CAREER award CCF-2237288, the NSF grants CCF-2517033 and CCF-2121952, and a Google Research Scholar Award. Research by Csaba D.\ T\'oth was supported by the NSF award DMS-2154347. Shay Solomon is funded by the European Union (ERC, DynOpt, 101043159). Views and opinions expressed are however those of the author(s) only and do not necessarily reflect those of the European Union or the European Research Council. Neither the European Union nor the granting authority can be held responsible for them. Shay Solomon is also funded by a grant from the United States-Israel Binational Science Foundation (BSF), Jerusalem, Israel, and the United States National Science Foundation (NSF). Work of Tianyi Zhang was done while at ETH Z\"urich when supported by funding from the starting grant ``A New Paradigm for Flow and Cut Algorithms'' (no. TMSGI2\_218022) of the Swiss National Science Foundation. 
\vspace{5mm}

\bibliographystyle{alphaurl}
\bibliography{ref,ref2}

\end{document}